\documentclass[11pt,a4paper]{article}

\newif\ifshownotes
\shownotesfalse

\usepackage[a4paper,margin=1in]{geometry}

\usepackage[T1]{fontenc}
\usepackage{lmodern}
\usepackage{microtype}

\usepackage[tbtags]{amsmath}
\usepackage{amssymb}
\usepackage{amsthm}
\usepackage{mathtools}
\usepackage{bm}

\allowdisplaybreaks

\usepackage[svgnames]{xcolor}
\usepackage{graphicx}
\usepackage{tikz}
\usetikzlibrary{arrows.meta,positioning,quotes}
\usepackage{subcaption}
\usepackage{booktabs}
\usepackage{pgfplots}
\pgfplotsset{compat=1.18}
\tikzset{myarrow/.style={-{Triangle[scale=0.8]}}}

\usepackage[hyphens]{url}
\usepackage{datetime2}
\usepackage[
  backend=biber,
  style=alphabetic,
  maxbibnames=999,
  maxalphanames=4
]{biblatex}
\ifshownotes
\fi

\ifshownotes
  \usepackage[textsize=scriptsize,colorinlistoftodos]{todonotes}
\else
  \usepackage[disable]{todonotes}
\fi

\usepackage{hyperref}
\hypersetup{
  colorlinks=true,
  linkcolor=DarkBlue,
  citecolor=DarkGreen,
  urlcolor=DarkBlue
}
\usepackage[capitalise,nameinlink,noabbrev]{cleveref}
\AddToHook{cmd/appendix/before}{\crefalias{section}{appendix}}

\ifshownotes

\else

\fi

\newcommand{\R}{\mathbb{R}}

\newcommand{\sset}[1]{\left\{#1\right\}}
\newcommand{\ssets}[1]{\{#1\}}

\newcommand{\card}[1]{\left|#1\right|}

\DeclareMathOperator*{\argmax}{arg\,max}

\DeclarePairedDelimiter{\norm}{\lVert}{\rVert}
\newcommand{\maxnorm}[1]{\norm{#1}_{\infty}}
\newcommand{\union}{\cup}
\newcommand{\map}{\longrightarrow}

\newcommand{\then}{\Longrightarrow}
\newcommand{\ifif}{	\Longleftrightarrow}

\newcommand{\vecc}{\bm}
\DeclareMathOperator*{\expectation}{\mathbb{E}}
\newcommand{\expect}[2][]{\expectation\nolimits_{#1}\left[#2\right]}

\DeclareMathOperator*{\probability}{\mathrm{Prob}}
\newcommand{\prob}[2][]{\probability\nolimits_{#1}\left[#2\right]}

\renewcommand{\vec}{\bm}

\newcommand{\rev}{\ensuremath{\mathsf{REV}}}
\newcommand{\srev}{\ensuremath{\mathsf{SREV}}}
\newcommand{\brev}{\ensuremath{\mathsf{BREV}}}
\newcommand{\support}{\operatorname{supp}}
\newcommand{\Kzero}{\ensuremath{[K]_*}}
\newcommand{\er}{\ensuremath{\mathsf{ER}}}

\AddToHook{env/lemma/begin}{\crefalias{theorem}{lemma}}
\AddToHook{env/corollary/begin}{\crefalias{theorem}{corollary}}
\AddToHook{env/proposition/begin}{\crefalias{theorem}{proposition}}
\AddToHook{env/definition/begin}{\crefalias{theorem}{definition}}
\AddToHook{env/remark/begin}{\crefalias{theorem}{remark}}
\newtheorem{theorem}{Theorem}[section]
\newtheorem{lemma}[theorem]{Lemma}
\newtheorem{proposition}[theorem]{Proposition}

\theoremstyle{definition}
\newtheorem{definition}[theorem]{Definition}
\theoremstyle{remark}

\title{On the Power of Determinism in Multi-Item Auctions\thanks{This work was supported by the German Research Foundation (DFG) within subproject B07 of the Sonderforschungsbereich/Transregio 154 ``Mathematical Modelling, Simulation and Optimization using the Example of Gas Networks''.}}

\author{
Yiannis Giannakopoulos\thanks{University of Glasgow.
Email: \href{mailto:yiannis.giannakopoulos@glasgow.ac.uk}{\nolinkurl{yiannis.giannakopoulos@glasgow.ac.uk}}}
\and
Johannes Hahn\thanks{University of Technology, Nuremberg.
Email: \href{mailto:johannes.hahn@utn.de}{\nolinkurl{johannes.hahn@utn.de}}
}}

\hypersetup{
  pdftitle={On the Power of Determinism in Multi-Item Auctions},
  pdfauthor={Yiannis Giannakopoulos and Johannes Hahn}
}

\date{%
  \ifshownotes
    \textcolor{red}{\textbf{DRAFT}} --- \today, \DTMcurrenttime
  \fi
}

\begin{document}
\maketitle

\begin{abstract}
We study the classical multi-item monopoly setting with a single additive buyer and $m$ heterogeneous items whose values are independent but not necessarily identically distributed. Optimal truthful auctions may be randomized and complicated. We analyze the approximation ratios of three \emph{simple deterministic} auctions: selling all items separately, selling them as a single grand bundle, and choosing the better of the two.

Our technical cornerstone is a nonlinear mathematical programming formulation of the worst-case approximation ratio of selling separately, in \emph{discrete} auctions where values lie in the grid $\{0,1/K,2/K,\dots ,1\}$. For two iid items, we construct novel \emph{tight} Lagrangian dual certificates that determine this ratio \emph{exactly} for \emph{any} discretization parameter $K$. Taking $K\to\infty$, we obtain the tight bound $1+W(1/e)\approx 1.278$ in the continuous-valued setting, where $W$ denotes the Lambert-W function, closing the $[1.278,1.368]$ gap from the work of Hart and Nisan [EC'12, JET 2017].

For $m\geq 2$ independent items, a different dual construction gives an upper bound on the approximation ratio of selling separately in terms of basic statistics of the item values. Combining this bound with new inequalities relating optimal revenue (\rev), separate-selling revenue (\srev), and grand-bundle revenue (\brev), we derive improved guarantees for all three auctions. Most notably, we prove that
\[ \rev \leq 3 \cdot \max\{\srev,\brev\},\] 
improving upon the long-standing $5.2$ factor of Ma and Simchi-Levi [arXiv 2015, AISTATS'21] and the $6$ factor of Babaioff, Immorlica, Lucier and Weinberg [FOCS'14, JACM 2020]. For iid items, we also prove that $\rev \leq 4.18 \cdot \brev$.
\end{abstract}

\section{Introduction}

The design of optimal \emph{multi}dimensional auctions is a notoriously challenging problem within economics and computer science~\parencite{McAfeeMcMillan1988,RochetChone1998,ManelliVincent2007,Daskalakis:2017aa,Cai2019}.
Although the single-item case has been well understood since the seminal works of~\textcite{Vickrey1961a,Myerson1981a}, the characterization of the maximum revenue an auctioneer can achieve when selling multiple items, even to a single bidder, remains generally elusive. Nevertheless, the problem has attracted a significant amount of attention from the community in the last 15 years, resulting in a better understanding of the underlying mathematical structures and intricacies that make the generalization of the sharp and elegant solution of \textcite{Myerson1981a} so hard to generalize in the multidimensional setting, and the corresponding revenue optimization task so hard to penetrate rigorously.

Some of the most notable and clear-cut obstacles include the non-monotonicity of the revenue objective \parencite{Hart2015} and the fact that, in general, the optimal auction might be randomized, even with an uncountable-sized description (see, e.g., \parencite{Daskalakis:2017aa}) and computationally intractable to find~\parencite{Daskalakis2013,Chen2015a,Chen:2018aa}.
As a result, there has been a significant interest, especially from the algorithmic game theory community, on understanding the performance, and limitations, of simple and natural selling mechanisms; these mechanisms will, in general, extract sub-optimal revenue, however they are significantly easier to describe and implement. 

A pivotal moment in this direction was the work of~\textcite{Babaioff2020} who studied the mechanism that runs the best between selling the items separately or bundling them all together, and (inspired by the core-tail analysis introduced by~\textcite{Li2013a}) proved that this achieves a $6$-factor approximation to the optimal revenue achievable by any, no matter how complicated, mechanism. It is interesting to note that this was the first time that determinism was shown to achieve a constant approximation ratio. Although various improved approximation guarantees are known for special cases, including two-item settings or identical item-value priors and, furthermore, bounds can be derived for each of selling separately or full-bundling, as separate mechanisms, in general the existing bounds in the literature are not tight, sometimes with large gaps; a detailed presentation of related literature can be found in~\cref{sec:related-work} below.  

Our work in this paper tries to attack exactly this issue, by aiming to provide a sharper understanding of the performance of simple, deterministic mechanisms. Right from the start, we make the deliberate choice to study a \emph{discrete} auction model, where the buyer's bids are assumed to lie in a \emph{finite} grid. Perhaps differently to most of the prior work in the field, this is not merely for technical convenience, nor as an approximation tool to derive results for the classical, continuous setting usually studied in auction theory; our original intention is to provide solutions \emph{directly} for the discrete setting, ideally deriving revenue guarantees parameterized by the fineness of our grid. Our main motivation is the understanding that the finite-value setting is more natural from both a modelling/design perspective, but also from an optimization/computation one. In any meaningful implementation of an auction scenario in practice, buyers will be asked to bid in increments of some minimum currency denomination; and, any prior market-analysis Bayesian information that the seller may have for the potential buyers, is bound to come in the form of a probability distribution over such a denomination. Consequently, any underlying selling algorithm that computes allocation and payments, will need to operate on such discrete information as input.
Additionally, the study of the discrete setting is fundamentally more fine-grained, allowing us to easily extrapolate revenue analysis, in the limit, to the continuous setting; however, the reverse is not always that easy. 

In other words, we believe that discrete auctions should not be merely viewed as an approximation of the ``canonical'' continuous setting of classical auction theory, but rather the opposite: continuous values are an ``idealized'', convenient modelling assumption of the actual underlying discrete world.

\subsection{Related Work}
\label{sec:related-work}

\paragraph{Multidimensional optimal auctions}
Myerson's~\parencite{Myerson1981a} elegant virtual-value characterization of the
revenue objective and his single-parameter truthfulness lemma provide a sharp
and satisfying description of optimal auctions for a single item, as a simple
deterministic mechanism which, for the special case of a single buyer further
boils down to take-it-or-leave-it pricing (for historical completeness, see also
the works of~\textcite{riley1981optimal,Riley1983a}).
However, this clear characterization notoriously fails to extend to the
multi-item case, where truthfulness induces a much more complex feasibility
space~\parencite{Rochet1987,ManelliVincent2007} to optimize over. Nevertheless,
the economics literature produced an influential line of works trying to
illuminate this landscape~(see, e.g.,
\parencite{McAfeeMcMillan1988,RochetChone1998,Armstrong1996}), including the
investigation of the role of randomized mechanisms (lotteries) and the
limitations of determinism~\parencite{Thanassoulis2004a,Pycia2006a,Hart2015}, and a
particular emphasis on the understanding of the nature and possible optimality
of item-bundling
\parencite{McAfeeMcMillanWhinston1989,ManelliVincent2006,Menicucci2015,Daskalakis:2017aa,Haghpanah2020}.

Theoretical computer science continued and complemented this structural program,
further introducing new perspectives, including algorithmic reductions and
hardness results \parencite{Cai2012b,Cai2013a,Daskalakis2013}, as well as
duality frameworks for characterizing or upper-bounding optimal revenue
\parencite{Daskalakis:2017aa,Kleiner:2019aa,gk2014,Cai2019}.
Our paper can be seen as part of this line of work, since its technical backbone is a mathematical programming formulation, paired with tight and non-tight Lagrangian dual solutions.
The notorious
difficulty of the multi-item revenue maximization problem, even for a single
bidder, naturally motivates the study of simpler deterministic benchmarks, corresponding to selling mechanisms that are easier to describe and implement.

\paragraph{Simple mechanisms}
\textcite{Hart:2017aa} initiated the systematic comparison of the unrestricted,
possibly randomized, optimal revenue (\rev) with the revenues achievable by selling
each item separately (\srev) and selling them all together, in a single grand
bundle (\brev). For arbitrary independent item values, \textcite{Li2013a}
introduced the core--tail decomposition and obtained logarithmic guarantees for
the best deterministic mechanism; building on this approach,
\textcite{Babaioff2020} proved the first constant guarantee for determinism:
$\rev\leq 6 \cdot \max\{\srev,\brev\}$. This factor was subsequently improved to
$5.2$ by \textcite{MaSimchiLevi2021}. 
\textcite{CDGM2019} specifically highlight, from a technical perspective, the idea of upper-bounding the optimal revenue by a parameterized linear combination of $\srev$ and $\brev$; although, at the time, their particular bound (see~\parencite[Theorem~7]{CDGM2019}) was not able to beat the state-of-the-art $5.2$ bound~\parencite{MaSimchiLevi2021}, it can be seen as conceptually laying the ground for a more careful consideration, and optimization, of this combination technique which, to some extent, underlies our approach in this paper as well (see, e.g. ~\cref{sec:best_selling_separately_bundle}). 
On the other hand,
\textcite{Rubinstein2016} constructed independent-item instances on which a
partition mechanism extracts almost twice the revenue that $\max\{\srev,\brev\}$
does, providing thus a lower bound of $2$ on the approximation ratio for this
joint benchmark. In this paper we provide a new upper bound\footnote{See also our discussion at the end of~\cref{sec:related-work} below, for a more detailed exposition of the timeline of our bound, especially relating to the very recent paper by Google researchers~\parencite{cai2026improvedrevenueguaranteesselling} studying the same question.} of $3$ for this approximation ratio, improving the long-standing\footnote{An arXiv version of the manuscript of~\textcite{MaSimchiLevi2021} first appeared online in~2015.} $5.2$ upper bound of~\textcite{MaSimchiLevi2021}, narrowing significantly the remaining gap to $[2,3]$.

\paragraph{Two-item auctions}
The single-buyer, two-item setting is the smallest genuinely multidimensional
revenue-maximization problem and has long served as a central test-bed for
understanding optimal auctions. Even in this setting, optimal mechanisms may
require randomization and, for some distributions, (uncountably) infinite menu
sizes~\parencite{ManelliVincent2007,Daskalakis:2017aa}. A substantial line of
work gives exact solutions or structural characterizations for important
distributional families or restricted mechanism classes
\parencite{Pavlov2011,gk2015,Daskalakis:2017aa,WangTang2017,Thirumulanathan2019,BabaioffNisanRubinstein2018}.
These results underscore that the two-item case is already a substantive
frontier, but they do not determine the worst-case loss of a simple benchmark.
In that direction, \textcite{Hart:2017aa} bounded $\rev/\srev$ between
$1+W(1/e)\approx1.278$ and $1+1/e\approx1.368$ for the case of two iid items,
leaving its exact value open; for two independent, not necessarily identically
distributed items, \textcite{HartReny2019} subsequently proved that
$\rev/\srev\leq 1+1/\sqrt{e}\approx1.607$, improving to $1+1/e\approx1.368$
under regularity. A main result of our paper is the exact characterization of
this ratio for any two-item iid auction with values over the unit $1/K$-grid.
This requires a novel tight Lagrangian dual construction, which is a technical
highlight of our paper. Taking the limit with respect to $K$ we resolve the
Hart--Nisan gap of $[1.278,1.368]$ down to the $1.278$ lower bound.

\paragraph{Selling separately}
For a general number of $m\geq 2$ independent items, 
\textcite{Hart:2017aa} first showed an upper bound of $O(\log^2 m)$ on the approximation ratio of $\srev$, which was later improved to $O(\log m)$ by~\textcite{Li2013a} and to $\ln m+3$ by~\textcite{Babaioff2020}. The logarithmic dependence here is necessary, due to a $\ln{m}+\varTheta(1)$ lower bound by~\parencite[Remark after Proposition~25, and Corollary~26]{Hart:2017aa}.
In our paper, we improve the additive terms in these upper-bound
guarantees, both for arbitrary independent and iid items; the precise bounds are stated in~\cref{sec:results-techniques}.

\paragraph{Grand bundling}
In general, for arbitrary independent items, selling all the items in a single grand bundle cannot alone achieve a constant-factor approximation to the optimal revenue: \textcite[Example~27]{Hart:2017aa} provide an instance where 
$\rev/\brev$ gets arbitrarily close to $m$. However, \textcite[Lemma~28]{Hart:2017aa} also provide an upper bound of $O(m)$, an improvement on which, to the best of our knowledge, has not been since published. Our new upper bound result of $m+8/3$ \eqref{eq:brev_upper_bound_m_ind} matches the existing lower bound up to an additive constant.

The situation for iid items is radically different though: grand bundling is guaranteed to achieve a constant approximation ratio, as shown by~\textcite{Li2013a}. Furthermore, in that case, bundling is also guaranteed to be within a (multiplicative) constant away from selling separate: in particular, \textcite{Kupfer2016} establishes that $\srev \leq 1.787 \, \brev$ in this case, improving the previous $4$ factor given by~\textcite{Hart:2017aa}. By combining Kupfer's bound with one of our intermediate bounds used in the case of $\max\{\srev,\brev\}$, we provide a new, concrete constant upper bound of $4.18$  for $\frac{\rev}{\brev}$.

\subsection{Our Results and Techniques}
\label{sec:results-techniques}
In~\cref{sec:model-notation} we formally introduce our auction model, which is general enough to incorporate arbitrary joint (multi-item) value distributions. Although most of our results later on deal with product distributions, we deliberately choose that level of generality here, so that we can still use it when relevant.\footnote{For example, that is the case for our discrete-continuous revenue-approximation bounds in~\cref{prop:discretization-approximation-revenues} which do not rely on any independence assumptions.}
In~\cref{sec:simple-auctions} we define the revenue benchmarks of our simple deterministic auctions, namely those of selling separately (\srev) and selling in a full-bundle (\brev); the goal of our paper is to study how well these quantities approximate the best possible revenue (\rev) of a given auction setting, which in general could be randomized and very complex to characterize~\eqref{eq:opt-rev-def}.

Next, in~\cref{sec:discrete-auction-prelims} we formally describe our discrete-auction setting, where we assume that the buyer's values for the items lie in a $1/K$-fine uniformly discretized grid of the unit interval $[0,1]$, where $K$ is a positive integer parameter, controlling how dense this discretization is. This model will be the basis of all our results in this paper; generalizing our results to distributions over the continuous interval $[0,1]$ is essentially a straightforward limiting process, as $K\to\infty$. Although it is known in the community (see, e.g., \parencite{hart2026rootrevenuecontinuity} and \parencite[Theorem~1]{Cai2019}) that such a convergence is in principle well-defined for the optimal $\rev$ objective, we provide a unified, rigorous statement and proof of this limiting process in~\cref{prop:discretization-approximation-revenues}, with respect to all our revenue objectives of interest, namely $\rev$, $\srev$, $\brev$, and $\max\{\srev,\brev\}$, including bounds on the rate of convergence, as this might be a useful reference for follow-up work.

In \cref{sec:srev-programming-formulatio} we present the backbone of all our main results of the paper, which is the formulation of the adversarial problem of the worst-case instance for the mechanism of selling all items separately, i.e. maximizing the approximation ratio $\frac{\rev}{\srev}$, as a (highly) nonlinear mathematical optimization problem~\eqref{mathematical_program}.
We then take the Lagrangian of this program \eqref{eq:Lagrangian}, and provide two different dual solutions: 

First, in~\cref{sec:lagrangian-dual-many-items-weak} we start with a structurally simpler, and therefore easier to analyse, solution for any number of items and arbitrary independent value distributions (\cref{sec:lagrangian-dual-many-items-weak}). This is, in general, a weak dual, however it still gives rise to an elegant upper-bound formula~\eqref{eq:rev_master_inequality} of the optimum revenue $\rev$ in terms of $\srev$ and a quantity depending on two basic statistics of the underlying distributions: the expectation of the sum of the item values, minus the maximum value.
Then, by constructing tailored bounds on this quantity with respect to $\srev$ (\cref{eq:tightest_general_upper_bound_expected_revenue}), $\brev$ (\cref{lemma:bound_expectation_by_BREV}), and a carefully chosen combination of the two (\cref{prop:S-M-upper-bound-SREV-BREV}), we derive upper bounds for the approximation ratios of the revenue of our three simple deterministic selling mechanisms of interest:
\begin{itemize}
  \item For selling separately, we show that $\frac{\rev}{\srev} \leq H_{m-1}+1$ (\cref{prop:upper_bound_m_ind_srev}), which for the case of iid items can be further improved to $\frac{\rev}{\srev} \leq H_{m-1}+0.278$ (\cref{lemma:upper_bound_srev_m_iid}).
  \item For selling the grand bundle, we get $\frac{\rev}{\brev} \leq m+\frac{8}{3}\approx m+2.667$ (\cref{eq:brev_upper_bound_m_ind}) and $\frac{\rev}{\brev} \leq 4.18$ (\cref{eq:brev_upper_bound_m_iid}) for any number of iid items.
  \item For the best of the two, we prove that $\frac{\rev}{\max\{\srev,\brev\}} \leq 3$ (\cref{prop:S-M-upper-bound-SREV-BREV}).
\end{itemize}

Secondly, in~\cref{sec:selling_separately_2_iid}, we provide a more specialized, but significantly more technically involved, dual instance for the special case of two, identically distributed items. Then, we prove that this dual is \emph{tight}, resulting in an exact, closed-form characterization (see~\cref{prop:two_items_iid_SREV}) of the approximation ratio of selling separately in this case, as a function of the discretization parameter $K$. As an immediate by-product of our analysis by taking the limit as $K\to\infty$, and by deploying our discrete-continuous proximity bounds from~\cref{prop:discretization-approximation-revenues}, we fully resolve the existing open gap on the value of $\frac{\rev}{\srev}$ for two iid items in the standard, continuous $[0,1]$ interval setting, showing that it is equal to $1+W(e^{-1})\approx 1.278$ (\cref{lemma:omega_K_convergence}), where $W$ is the Lambert-W function (see, e.g., \parencite{Corless1996}).


\paragraph{Comparison to~\parencite{Cai2019}} The prior work which is closest to ours, in terms of techniques, is that of~\textcite{Cai2019}.
They formulate the revenue-maximization problem as a linear program and provide a weak dual solution which serves as an upper bound on the optimum revenue $\rev$; consequently, they express this bound as an affine combination of $\srev$ and $\brev$. Instead, we study directly the approximation ratio of selling separately, and formulate the adversarial problem of maximizing $\rev/\srev$ as a highly nonlinear program \eqref{mathematical_program}. Then, in \cref{sec:lagrangian-dual-many-items-weak} we also construct a weak dual solution of this problem, providing an upper bound to $\rev$ (see~\eqref{eq:rev_master_inequality});
our dual constructions are different, though, since \parencite{Cai2019} instance needs to align with their virtual-value based presentation and the corresponding ironing requirements, while we need to give an instance tailored for selling separately. 

Furthermore, an even more crucial difference of our analysis to that of~\parencite{Cai2019} is that we also provide another, novel dual instance, for the special case of two iid items. A technical highlight of our paper is proving that this is indeed a \emph{tight} Lagrangian dual (see~\cref{sec:selling_separately_2_iid}), thus exactly resolving the open gap on the approximation ratio of selling separately in that setting.

\paragraph{Timiline of our $\max\{\srev,\brev\}$ bound and relation to~\parencite{cai2026improvedrevenueguaranteesselling}} 
While finalizing the first version of our manuscript~\parencite[Footnote~1]{gh2026_arxiv_v1}, we became aware through the
arXiv announcement of a concurrent preprint by a group of Google
researchers~\parencite{cai2026improvedrevenueguaranteesselling}, stating a
$3.52$ upper bound on the approximation ratio of $\max\{\srev,\brev\}$; the authors mention that
their result was ``entirely obtained by Cogentic'',\footnote{At the time of writing of this note, we were not able to find any further, publicly available information on what ``Cogentic'' is.} an agentic discovery
framework using Gemini.
Although their $3.52$ bound was worse than our original $3.5$ bound in the first version of our manuscript (see~\parencite[Proposition~4.10]{gh2026_arxiv_v1}), their approach pointed us into a direction on how to attack the problem in an alternative way, thus allowing us to further improve our bound to $3$ in the present version of our paper. In particular, the starting point of our new bound (see~\cref{prop:S-M-upper-bound-SREV-BREV}) is \cref{lemma:bound_expectation_by_linear_SREV_BREV}, an elementary but elegant lemma which is stated verbatim in~\parencite[Lemma~4.2]{cai2026improvedrevenueguaranteesselling}; however, note that after that point, our technical proofs diverge significantly.

\section{Model and Notation}
\label{sec:model-notation}

Let $\R$ and $\R_{+}$ denote the sets of real and nonnegative real numbers, respectively. We use $I\coloneq [0,1]$ to denote the real unit interval and $I_K\coloneq \sset{0,1/K,2/K,\dots,(K-1)/K,1}$ to denote the unit $K$-grid, where $K$ is a positive integer. For convenience, we will also use $[K]\coloneq \sset{1,2,\dots,K}$, $\Kzero \coloneqq [K]\union\ssets{0}$, and $[1/K]\coloneq \sset{1/K,2/K,\dots,1}$.
Furthermore, for any positive integer $n$, we let $H_n\coloneq\sum_{i=1}^n\frac{1}{i}$ denote the $n$-th harmonic number.
When working with vectors we use boldface notation $\vec{x}=(x_1,x_2,\dots,x_m)\in \R^m$ and their coordinate-wise (partial) ordering: $\vec{x} \leq \vec{y} \ifif x_j\leq y_j\;\; \forall j\in[m]$. We also use the standard notation $\maxnorm{\vec{x}}\coloneq\max_{j\in[m]}\card{x_j}$ for the maximum norm.
Finally, a function $g:\R_{+}^m\map\R$ will be called $L$-Lipschitz, for some parameter $L>0$, if $\card{g(\vec{x})-g(\vec{y})} \leq L\cdot\maxnorm{\vec{x}-\vec{y}}$ for all $\vec{x},\vec{y}\in\R_{+}^m$.

We will identify probability distributions by their cumulative function (CDF), usually denoted by $F$, and we will denote $X\sim F$ for a random variable $X$ distributed according to distribution $F$. Also, we use $\support(F)=\support(X)$ for the support of a distribution or a random variable.
For distributions over $\R$, we use the standard notation $F(x^{-})=\lim_{t\to x^{-}} F(x)= \prob[X\sim F]{X<x}$. In that way, we can express the tail probabilities as $\prob{X\geq x}=1-F(x^{-})$  and $\prob{X> x}=1-F(x)$; of course, for the case of continuous distributions, the two quantities are equal. Finally, we will say that a random event holds \emph{almost surely (a.s.)}, if it occurs with probability equal to$1$.

We consider settings with a single seller, a single \emph{buyer} (or \emph{bidder}) and $m$ heterogeneous \emph{items} (or \emph{goods}); throughout our paper we usually index items by $j\in[m]$.
The seller has incomplete knowledge of how much the buyer is willing to spend for the items, in the form of a joint distribution (with CDF) $\vec{F}$ over a fixed value space $\vecc{V}\subseteq \R_{+}^m$, from which the buyer values $\vecc{X}=(X_1,X_2,\dots,X_m)\sim\vecc{F}$ for the items are drawn; the profile of \emph{true values} $\vecc{x}=(x_1,x_2,\dots,x_m)\in \vecc{V}$ is \emph{private} knowledge of the buyer only. 
We use $F_j$ to denote item's $j\in[m]$ marginal distribution.
For most of our results in this paper we consider \emph{independent} items, that is, $\vecc{F}=F_1\times F_2\times\dots\times F_m$ is a \emph{product} distribution. In that case, whenever all marginals are identical, we will say that the items are \emph{iid} and denote $\vec{F}=F^m$.

\subsection{Selling Mechanisms}

An \emph{auction} (or selling \emph{mechanism}) $\mathcal{M}=(\vec{a},p)$ consists of an \emph{allocation} $\vec{a}=(a_1,\dots,a_m):\vec{V}\map [0,1]^m$ and \emph{payment} $p:\vec{V}\map \R$ functions which, given as input a profile of \emph{bids} $\vec{b}=(b_1,b_2,\dots,b_m)$ by the buyer, determines the probability\footnote{Mechanisms such that $\vec{a}\in\ssets{0,1}^m$ are called \emph{deterministic}, while the term \emph{lotteries} is sometimes also used for general mechanisms with $\vec{a}\in[0,1]^m$, to emphasize the randomness of the allocation rule.} $a_j(\vec{b})$ with which item $j\in[m]$ is sold to the buyer, and the payment $p(\vec{b})$ that the buyer submits to the seller, in exchange.
We assume that the buyer is \emph{additive}, that is, her valuation from receiving a subset of the items is simply the sum of the values of the items in this subset. More specifically, the \emph{utility} of the buyer when submitting bids $\vec{b}\in \vec{V}$ while having true values $\vec{x}\in \vec{V}$ is given by
\begin{equation}
  \label{eq:utility-def}
  u_{\mathcal M}(\vec{b}|\vec{x}) \coloneq \vec{a}(\vec{b})\cdot\vec{x} - p(\vec{b})
  = \sum_{j=1}^m a_j(\vec{b})x_j - p(\vec{b}).
\end{equation}
In the following, we will feel free to drop the $\mathcal{M}$ subscript in the above notation, whenever the auction deployed is clear from the context.

\paragraph{Truthful Mechanisms}
As it is standard in optimal auction design, in this paper we want to focus on selling mechanisms that give no incentives to the buyer to misreport her true values for the items, or abstain from the auction. Formally, an auction $\mathcal{M}$ will be called \emph{truthful}, if for all true values and bids $\vec{x},\vec{b}\in\vec{V}$ it holds that
\begin{align}
  u_{\mathcal{M}}(\vec{x}|\vec{x}) &\geq u_{\mathcal{M}}(\vec{b}|\vec{x})
 \label{eq:dsic-def} \tag{DSIC}\\
 \intertext{and}
 u_{\mathcal{M}}(\vec{x}|\vec{x}) &\geq 0. \label{eq:ir-def} \tag{IR}
\end{align}
Property~\eqref{eq:dsic-def} is also known in the literature as \emph{dominant-strategy incentive compatibility} and \eqref{eq:ir-def} as \emph{individual rationality}. For convenience, we denote the set of all truthful mechanisms over a given value space $\vec{V}\subseteq\R^m$ by $\mathcal{T}_{\vec{V}}$.

\paragraph{Revenue} Our main optimization objective throughout this paper is that of maximizing the seller's \emph{revenue}; that is, given a distribution $\vec{F}$ over a value space $\vec{V}$, design a truthful mechanism $\mathcal{M}=(\vec{a},p)\in\mathcal{T}_{\vec{V}}$ with the highest expected payment $\expect[\vec{X}\sim \vec{F}]{p(\vec{X})}$ by the buyer. We denote the corresponding optimal benchmark by
\begin{equation}
  \label{eq:opt-rev-def}
  \rev_{\vec{V}}(\vec{X}) \coloneq \sup_{(\vec{a},p)\in \mathcal{T}_{\vec{V}}} \expect{p(\vec{X})}.
\end{equation}
Subscript $\vec{V}$ in the above notation of the optimal-revenue operator is conceptually essential: different value spaces result in a different feasibility space $\mathcal{T}_{\vec{V}}$ for truthfulness and thus, to a different optimization problem. Nevertheless, it is known that requiring truthfulness outside the ``realizable'' value points (and thus, shrinking the feasibility space) does not actually reduce the revenue objective. Formally, this can be expressed as 
\begin{align*}
\rev_{\vec{V}}(\vec{X}) &=\rev_{\support(\vec{X})}(\vec{X}) &&\text{for all}\;\; \support(\vec{X}) \subseteq \vec{V} \subseteq \R^m_+,
\end{align*} 
which can be shown, e.g., immediately via the proof of~\textcite[Lemma~13]{Hart2015}. 
As a result, we will feel free to drop the $\vec{V}$ subscript in~\eqref{eq:opt-rev-def} and simply denote $\rev(\vec{X})$ for the optimal revenue. We will also sometimes interchangeably use $\rev(\vec{F})$ to denote $\rev(\vec{X})$ if $\vec{X}\sim\vec{F}$.

\subsection{Simple Deterministic Auctions}
\label{sec:simple-auctions}
One of the simplest and more natural ways to sell a single item (i.e., for $m=1$) is to use \emph{posted-price} mechanisms. These are deterministic mechanisms that just offer the item to the buyer for a fixed price $r\in\R_+$; the buyer accepts if and only if she can afford the item (that is, if her true value is at least $r$) and pays the fixed price $r$. It is not hard to see that posted-price mechanisms are truthful. Furthermore, from the seminal work of~\textcite{Myerson1981a} (see also~\parencite{Monteiro2010} for general distributions that may contain atoms) we know that this is the optimal way to sell a single item, even among the much richer class of lotteries; that is, for any real-valued random variable $X\in\R_+$ it is:
\begin{equation}
  \label{eq:monopoly-revenue}
\rev(X) \coloneqq \sup_{r\geq 0} r \prob{X\geq r} = \sup_{r \geq 0} r \left[1-F(r^{-})\right],
\end{equation}
when $F$ is a distribution over $V$ such that $X\sim F$.

Moving to higher dimensions ($m\geq 2$), we consider two simple deterministic mechanisms, that try to generalize the single-item posted-pricing idea to the multi-item case. First, the seller can offer all items \emph{separately}, offering for each one of them a tailored take-it-or-leave-it price; the optimal revenue under such mechanisms is 
\begin{equation}
  \label{eq:srev-def}
  \srev(\vec{X})\coloneq\sum_{j=1}^m \rev(X_j) \overset{\eqref{eq:monopoly-revenue}}{=} \sum_{j=1}^m \sup_{r \geq 0} r \left[1-F_j(r^{-})\right],
\end{equation}
where $F_j$ is the marginal distribution of $X_j$, for $j\in[m]$. At the other end of the spectrum, we can alternatively try to sell all items in a single, \emph{full bundle}, by offering a single take-it-or-leave-it price for all of them; this results in a revenue of 
\begin{equation}
  \label{eq:brev-def}
  \brev(\vec{X})\coloneq \rev(X_1+X_2+\dots+X_m).
\end{equation}

\subsection{Discrete Auctions}
\label{sec:discrete-auction-prelims}

The technical backbone of our paper deals with the analysis of \emph{discrete} auctions, where the value space $\vec{V}$ (and thus, also the value distributions $\vec{F}$) is a discrete, finite subset. In particular, we study auctions over the unit $K$-grid $I_K$, since this corresponds to the natural interpretation of a ``bidding language'' consisting of multiples of some ``minimum currency'' $1/K$; the auction designer can choose how large integer $K$ is, making the setting more expressive.\footnote{The normalization within the unit interval $I$ is essentially without loss with respect to our revenue analysis, and simply a matter of scaling. More precisely, it is easy to see that the optimal revenue operator~\eqref{eq:opt-rev-def} is positive homogeneous, i.e.\ $\rev_{\vec{\lambda\cdot V}}(\lambda \cdot\vec{X})=\lambda\cdot \rev_{\vec{V}}(\vec{X})$ for any scalar $\lambda>0$ (see, e.g., \parencite[p.~322]{Hart:2017aa}).} In the limit as $K\to\infty$, we expect this setting to ``approximate'' the classic, continuous-valued auction theory~\parencite{Krishna2009a,Menezes2005}; we formalize this in~\cref{prop:discretization-approximation-revenues} below. 

In the detailed analysis of the discrete setting $\vec{V}\coloneq I_K^m$, in abuse of notation and for reasons of clear presentation, we make heavy use of value indices. Representing a value $\vec{v}=(k_1/K,k_2/K,\dots,k_m/K)\in I_K^m$ by the index vector
\[\vec{k} = (k_1, \dots, k_m) \in [K]_*^m,\]
we denote allocation variables by $a_j(\vec{k})$ as well as payment variables by $p(\vec{k})$. For item $j$ with value $X_j \sim F_j$ we denote the marginal probability mass by $f_j({k_j})=\prob{X_j = \frac{k_j}{K}}$. The joint probability of value vector $\vec{v}$ is denoted by $f(\vec{k})$ and, if the items' values are independently distributed
\[f(\vec{k}) = \prod_{j=1}^m f_j({k_j}).\]
Furthermore, for a single fixed item value we use $\vec{k}_{-j}$ to denote the indices of all other values except $k_j$ and $\vec{k} = (k_j,\vec{k}_{-j})$. By this we also write $f(\vec{k}) = f_j(k_j)f_{-j}(\vec{k}_{-j})$.

To properly transfer our results from the discrete to the general setting (including the standard continuous setting), we will agree on a canonical way to discretize any general, multidimensional value distribution given to us; then, after applying our results and analysing the revenue under this discrete distribution, we can transfer them back to the original general setting, via~\cref{prop:discretization-approximation-revenues}. Of course, there are many ways to discretize a distribution; in this paper, we choose one of the simplest, and more natural ones: rounding down to the closest point of our $K$-grid. This type of discretization is standard in the algorithmic game theory community; see, e.g., \parencite[Sec.~6.1]{Daskalakis2012b}, \parencite[Sec.~4.1 and p.~168--169]{Cai2019}, \parencite[Sec.~2]{Gonczarowski2021}.

\begin{definition}[Distribution Discretization]
  \label{def:k-grid-discretization}
Let $\vec{F}$ be an arbitrary distribution over $\R_+^m$, and let random variable $\vec{X}=(X_1,X_2,\dots,X_m)\sim\vec{F}$. For a positive integer $K$, the \emph{(downward) $\frac{1}{K}$-lattice discretization} of $\vec{F}$, denoted by $\vec{F}^{(K)}$, is the distribution of the random variable $\vec{X}^{(K)}$ defined by 
\begin{equation*}
  \vec{X}^{(K)}\coloneq\frac{\lfloor K\cdot \vec{X} \rfloor}{K}= \left(\frac{\lfloor K X_1 \rfloor}{K},\frac{\lfloor K X_2 \rfloor}{K},\dots,\frac{\lfloor K X_m \rfloor}{K}\right).
\end{equation*}
\end{definition}

\begin{proposition}
  \label{prop:discretization-approximation-revenues}
  Let $\vec{X}$ be a random variable over the unit hypercube $I^m$. Then, for its downward discretization $\vec{X}^{(K)}$ in the unit $K$-grid $I^m_K$, for any positive integer $K$, the following inequalities hold:
  \begin{align}
  0 \leq R(\vec{X}) - R(\vec{X}^{(K)}) & \leq \frac{m}{K}, \label{eq:discrete-distribution-approximation-srev-brev} \\
  \intertext{for any operator $R\in\sset{\srev,\brev,\max\ssets{\srev,\brev}}$, and}
    \card{\rev(\vec{X}) - \rev(\vec{X}^{(K)})} &\leq \frac{2m}{\sqrt{K}}-\frac{m}{K}=O\left(\frac{m}{\sqrt{K}}\right). \label{eq:discrete-distribution-approximation-rev}
  \end{align}
\end{proposition}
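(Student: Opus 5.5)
The plan is to handle the deterministic benchmarks first, using the coupling $\vec{X}^{(K)}\leq \vec{X}\leq \vec{X}^{(K)}+\frac{1}{K}\vec{1}$ coordinatewise (almost surely).

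\textbf{Step 1: the single-item bound.} For one real variable $Y$ and a shifted variable $Y'$ with $Y'\leq Y\leq Y'+\delta$, I will show
\[
0\leq \rev(Y)-\rev(Y')\leq \delta .
\]
Monotonicity of \eqref{eq:monopoly-revenue} gives the lower bound: $\prob{Y\geq r}\geq\prob{Y'\geq r}$ for every $r$. For the upper bound, take any price $r\geq\delta$. Then
\[
r\,\prob{Y\geq r}\leq r\,\prob{Y'\geq r-\delta}=(r-\delta)\,\prob{Y'\geq r-\delta}+\delta\,\prob{Y'\geq r-\delta}\leq \rev(Y')+\delta .
\]
Prices $r<\delta$ yield at most $\delta$ directly.

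\textbf{Step 2: the three operators in \eqref{eq:discrete-distribution-approximation-srev-brev}.}
\begin{itemize}
\item For \srev, apply Step 1 to each $X_j$ with $\delta=1/K$ and sum over the $m$ items.
\item For \brev, apply Step 1 to $\sum_j X_j$ versus $\sum_j X^{(K)}_j$ with $\delta=m/K$.
\item For $\max\{\srev,\brev\}$, the bound follows since $\max$ is monotone and $1$-Lipschitz in each argument.
\end{itemize}

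\textbf{Step 3: \rev, upper direction.} I must show $\rev(\vec{X})\leq\rev(\vec{X}^{(K)})+\frac{2m}{\sqrt K}-\frac mK$. I will use the standard Nisan-type $\varepsilon$-discount argument. Take a near-optimal truthful mechanism $(\vec a,p)$ for $\vec X$, written as a menu of lotteries. The discretized type $\vec{x}^{(K)}$ differs from $\vec{x}$ by at most $1/K$ per coordinate, so every lottery's value differs by at most $m/K$ between the two. Offer the same menu to discretized types, but with every price multiplied by $(1-\varepsilon)$. This discount makes higher-priced options more attractive. The classical argument then says a type whose utilities are perturbed by at most $\eta=m/K$ ends up choosing a lottery whose original price is at least $p(\vec x)-\eta/\varepsilon$. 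So the new payment is at least $(1-\varepsilon)(p(\vec x)-\eta/\varepsilon)\geq p(\vec x)-\varepsilon-\eta/\varepsilon$, using $p\leq m$. Normalizing with payments bounded by $m$ (the sum of values), one gets a loss of $\varepsilon m+\frac{m}{K\varepsilon}$. Choosing $\varepsilon=1/\sqrt K$ gives $2m/\sqrt K$.

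\textbf{Step 4: \rev, reverse direction.} For $\rev(\vec X^{(K)})\leq\rev(\vec X)+\ldots$, run the same argument in reverse: a mechanism for the grid is offered to continuous types, which have values perturbed upward by at most $1/K$ per coordinate. By \cite[Lemma~13]{Hart2015}-type extension, truthfulness may be imposed on all of $I^m$ without loss.

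The subtractive $-m/K$ will come from a slightly sharper bookkeeping. The perturbation is one-sided: values only increase from the grid to the continuum. Because the allocation-dependent term is nonnegative, one of the two error contributions can be tightened by $m/K$. I expect this to be the main obstacle: pinning down the exact constant in the discount argument, i.e. tracking exactly how utilities change under one-sided perturbations while using $p\leq m$. The $O(m/\sqrt K)$ rate itself should follow routinely.
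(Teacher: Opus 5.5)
Your treatment of $\srev$, $\brev$ and $\max\{\srev,\brev\}$ is the same as the paper's. Both use the one-dimensional shift bound $0\leq\rev(Y)-\rev(Y')\leq\delta$, applied to each $X_j$ with $\delta=1/K$ and to $\sum_j X_j$ with $\delta=m/K$, together with monotonicity and $1$-Lipschitzness of the maximum. The paper packages this as a Lipschitz lemma for a maximum of sums.

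For $\rev$ you take a genuinely different route. The paper bounds the $\ell_1$-Wasserstein distance $W_1(\vec X,\vec X^{(K)})\leq m/K$ and then invokes Hart's root-revenue-continuity bound $\card{\rev(\vec X)-\rev(\vec Y)}\leq 2\sqrt{m W_1}-W_1$. You instead reprove the needed special case from scratch with the $\varepsilon$-discounted menu argument. This exploits the almost-sure, one-sided coupling $\vec X^{(K)}\leq\vec X\leq\vec X^{(K)}+\frac1K\vec 1$. The paper's route is shorter and covers couplings controlled only in expectation. Yours is self-contained and, done carefully, gives exactly the same constant.

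Two bookkeeping points need fixing, with $\vec y=\vec x^{(K)}$.

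\textbf{The estimate $p'\geq p-\eta/\varepsilon$.} This is not the generic statement. Utilities perturbed by up to $\eta$ in either direction only give $p'\geq p-2\eta/\varepsilon$, and hence only $2\sqrt2\,m/\sqrt K$. It holds here because adding the two incentive inequalities bounds $\varepsilon(p-p')$ by a difference of the dot products $\vec a\cdot(\vec x-\vec y)$ and $\vec a'\cdot(\vec x-\vec y)$. Both lie in $[0,m/K]$, so the difference is at most $m/K$. That is where one-sidedness enters, in Step~3 and Step~4 alike.

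\textbf{The term $-m/K$.} This is not an obstacle; you dropped it when expanding:
\[(1-\varepsilon)\Bigl(p-\frac{\eta}{\varepsilon}\Bigr)=p-\varepsilon p-\frac{\eta}{\varepsilon}+\eta\geq p-\varepsilon m-\frac{m}{K\varepsilon}+\frac mK .\]
You also wrote $\varepsilon$ where $\varepsilon m$ is meant. With $\varepsilon=K^{-1/2}$, which is also the minimizer $\sqrt{W_1/m}$ of Hart's expression, the loss is exactly $\frac{2m}{\sqrt K}-\frac mK$. Step~4 gives the same bound verbatim.

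The remaining details are routine:
\begin{itemize}
\item Keep a null option in the menu (or note $p(\vec 0)\leq 0$), so that individual rationality survives the discount.
\item In Step~3, work with the closure of the bounded menu, so that every grid type has an exact favourite.
\item In Step~4, the grid menu is finite, so the induced menu mechanism is automatically truthful on all of $I^m$. No extension lemma is needed.
\end{itemize}
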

\begin{proof}
\begin{lemma}
  \label{lemma:approx-revenue-lipschitz-general}
  Let $\vec{Y},\vec{Y}'$ be random variables in $\R_+^m$ such that 
  \[ \vec{Y}'\leq \vec{Y} \qquad\text{and}\qquad \lVert \vec{Y}-\vec{Y}'\rVert_{\infty}\leq \delta\]
  almost surely. Let $A$ be a finite index set, and for any $a\in A$ let 
  \[g_{a,1},g_{a,2},\dots,g_{a,n_a}:\R_+^m\map\R_+\]
  be (coordinate-wise) nondecreasing and $L_{a,i}$-Lipschitz functions (with respect to the maximum norm). Then, for the quantity 
  \[R(\vec{Z}) \coloneq \max_{a\in A}\sum_{i=1}^{n_a}\srev(g_{a,i}(\vec{Z})) \]
  it holds that 
  \[0\leq R(\vec{Y}) - R(\vec{Y}') \leq \delta L,  \qquad\text{where}\;\; L\coloneq \max_{a\in A}\sum_{i=1}^{n_a}L_{a,i}. \]
\end{lemma}
\begin{proof}
First we show that, for any \emph{single-dimensional} random variables $Z,Z'\in \R_{+}$ and $\varepsilon>0$ it holds that 
\begin{equation}
  \label{eq:approx-revenue-lipschitz-general-helper-single-dimensional}
0\leq Z- Z' \leq \varepsilon \quad \text{a.s.} 
\qquad\then\qquad 
0 \leq \srev(Z) -\srev(Z') \leq \varepsilon.
\end{equation}
Indeed, if $Z' \leq Z \leq Z'+\varepsilon$ a.s., then for any potential posted-price $p\in\R_{+}$ we have that 
\begin{align*}
p\prob{Z'\geq p} \leq p\prob{Z\geq p} &\leq p\prob{Z'+\varepsilon\geq p}\\
  &=(p-\varepsilon)\prob{Z'\geq p-\varepsilon} + \varepsilon \prob{Z'\geq p-\varepsilon}\\
  &\leq (p-\varepsilon)\prob{Z'\geq p-\varepsilon} + \varepsilon.
\end{align*} 
Now note that
\[ \sup_{p\geq 0}(p-\varepsilon)\prob{Z'\geq p-\varepsilon} = \sup_{p\geq 0}p\prob{Z'\geq p}, \]
since $p-\varepsilon\leq 0$ for $p\in[0,\varepsilon]$ and $Z'\geq 0$. Therefore, by
taking suprema (with respect to $p\geq 0$) in the above inequalities, and recalling Definition~\eqref{eq:srev-def}, we finally get the desired
$$
\srev(Z') \leq \srev(Z) \leq \srev(Z') + \varepsilon,
$$
which establishes~\eqref{eq:approx-revenue-lipschitz-general-helper-single-dimensional}.

Next, to prove our \cref{lemma:approx-revenue-lipschitz-general}, consider an arbitrary $a\in A$ and $i\in[n_a]$. Due to the monotonicity and Lipschitzness of function $g_{a,i}$ we have that the following hold a.s.: 
\[
0\leq g_{a,i}(\vec{Y}) - g_{a,i}(\vec{Y}') 
\leq \card{g_{a,i}(\vec{Y}) - g_{a,i}(\vec{Y}')} 
\leq L_{a,i} \maxnorm{\vec{Y}-\vec{Y}'}
\leq L_{a,i} \cdot \delta.
\]
Thus, applying~\eqref{eq:approx-revenue-lipschitz-general-helper-single-dimensional} with $Z\gets g_{a,i}(\vec{Y})$, $Z'\gets g_{a,i}(\vec{Y}')$ and $\varepsilon \gets \delta L_{a,i}$ we get that 
\[ 0 \leq \srev(g_{a,i}(\vec{Y}))- \srev(g_{a,i}(\vec{Y}')) \leq \delta L_{a,i}.\] 
Summing the above inequality for all $i\in [n_a]$ we get that 
\[ \sum_{i=1}^{n_a}\srev(g_{a,i}(\vec{Y}')) \leq \sum_{i=1}^{n_a}\srev(g_{a,i}(\vec{Y})) \leq \sum_{i=1}^{n_a} \srev(g_{a,i}(\vec{Y}')) + \delta \sum_{i=1}^{n_a} L_{a,i}. \]
Taking maxima, with respect to $a\in A$, in the above inequality, we finally derive the desired:
\[
R(\vec{Y}') \leq R(\vec{Y}) \leq R(\vec{Y}') + \delta\cdot\max_{a\in A}\sum_{i=1}^{n_a} L_{a,i}.
\]
\end{proof}
To prove~\cref{prop:discretization-approximation-revenues}, first observe that, by the discretization~\cref{def:k-grid-discretization} we clearly have that $\vec{X}^{(K)} \leq \vec{X}$; furthermore, since $Kx-\lfloor Kx \rfloor \leq 1$ for any $x\geq 0$, we also get that 
\[
\maxnorm{\vec{X}-\vec{X}^{(K)}}
=\max_{j\in[m]} \card{X_j-X_j^{(K)}} 
=\max_{j\in[m]}\left(X_j-\frac{\lfloor K X_j \rfloor}{K}\right)
\leq \max_{j\in[m]} \frac{1}{K} = \frac{1}{K}.
\]

Next, we start with the family of inequalities~\eqref{eq:discrete-distribution-approximation-srev-brev}. For each of the three revenue operators, we will deploy a different instantiation of~\cref{lemma:approx-revenue-lipschitz-general}, setting $\vec{Y}\gets\vec{X}$, $\vec{Y}'\gets \vec{X}^{(K)}$ and $\delta\gets\frac{1}{K}$ for all of them: 
\begin{itemize}
  \item For $R=\srev$, we set the index $A$ to be a singleton $A\gets \ssets{\hat{a}}$ with $n_{\hat{a}}\gets m$ functions $g_{\hat{a},j}(\vec{X})=X_j$ for each $j\in[m]$. Note that $g_{\hat{a},j}$ are increasing functions with Lipschitz constant $L_{\hat{a},j}=1$. Therefore, in this case $L=\sum_{j=1}^m L_{\hat{a},j}=m$.
  \item For $R=\brev$, we again choose a singleton index set $A\gets\ssets{\tilde{a}}$, and choose a single function $n_{\tilde{a}}\gets 1$ with $g_{\tilde{a},1}(\vec{X})=X_1+X_2+\dots+X_m$, which is not hard to see that it is increasing and $m$-Lipschitz. Therefore, here $L=m$.
  \item  For $R=\max\sset{\srev,\brev}$ we use a combination of the previous two cases, setting the index set to be $A=\sset{\hat{a},\tilde{a}}$, with the $g_{a,i}$ functions defined as above. Then, we have that $L=\max\ssets{m,m}=m$.
\end{itemize}  

Now we move to proving~\eqref{eq:discrete-distribution-approximation-rev}. First, observe that, due to~\eqref{eq:ir-def}, we have the following trivial upper bounds on the optimal revenues, since both random variables $\vec{X}$, $\vec{X}^{(K)}$ are bounded within the unit hypercube $[0,1]^m$, and thus all coordinate-marginals are at most $1$: 
\[\rev(\vec{X}),\rev(\vec{X}^{(K)}) \leq m\cdot 1 = m.\]
Furthermore, the Wasserstein distance of order $1$  of $\vec{X}$ and $\vec{X}^{(K)}$ is at most (see, e.g., \parencite{Villani2009} or~\parencite[p.~10]{hart2026rootrevenuecontinuity}): 
\[ W_1(\vec{X},\vec{X}^{(K)}) \leq \expect{\sum_{j=1}^m(X_j-X_j^{(K)})} \leq \expect{\sum_{j=1}^m\frac{1}{K}}=\frac{m}{K}\leq m.\]

Therefore, by making use of the recent revenue-continuity results of~\textcite[Eq.~(10), Corollary~4]{hart2026rootrevenuecontinuity}, we get the bound: 
\[ \card{\rev(\vec{X})-\rev(\vec{X}^{(K)})} \leq 2\sqrt{m \cdot  W_1(\vec{X},\vec{X}^{(K)})} - W_1(\vec{X},\vec{X}^{(K)}) \leq 2 \sqrt{m\cdot \frac{m}{K}}-\frac{m}{K}=\frac{2m}{\sqrt{K}}-\frac{m}{K},\]
the second inequality holding due to the fact that function $w\mapsto 2\sqrt{mw}-w$ is monotonically increasing for $w\in[0,m]$.
\end{proof}

\section{Selling Separately: A Mathematical Programming Approach}
\label{sec:srev-programming-formulatio}

In this section we analyse the optimization problem of the approximation rate of selling each item separately compared to the optimal expected revenue in the discrete setting. Considering the probability masses also as variables, the program becomes highly nonlinear and standard primal-dual linear programming techniques may fail to provide upper bounds or optimality certificates. Using Lagrangian duality and carefully choosing the Lagrangian multipliers, we obtain an upper bound in the form of the very useful inequality \eqref{eq:rev_master_inequality}. It upper-bounds the expected revenue by the revenue of selling each item separately and an additional expectation term. In \cref{sec:approximation_ratios} we upper-bound this expectation in different ways, combined with \eqref{eq:rev_master_inequality} this provides several upper bound improvements for the approximation ratio of different simple auction settings. We refine this technique of a flow interpretation of the Lagrangian multipliers in \cref{sec:selling_separately_2_iid} for the special case of two identical items and yield a strong duality result.

Next we state the optimization problem which computes the approximation ratio of selling $m$ independently distributed items optimally versus selling them separately at their optimal individual price. Conventionally we set the approximation ratio to one if $\rev(\vec{X})=\srev(\vec{X}) = 0$. Also, as we consider a bounded distribution support, the maximum solution of the problem is achievable and we use the maximum rather than the supremum operator.
	\begin{equation}
	    \tag{P}\label{mathematical_program}
	\begin{aligned}
		\max_{\vec{a},\rho, p, f} \quad &\rho\\
		\text{s.t.}\quad
		&\rho \ge0, & \\
		&f_j({k_j})\ge0,                  &k_j \in \Kzero, j \in [m] \\
		&\sum_{k_j=0}^{K} f_j({k_j}) = 1,  & j \in [m] \\
		&a_j(\vec{k})\leq 1, & \vec{k} \in \Kzero^m &\qquad& [\psi_j(\vec{k})]\\
		&a_j(\vec{k})\geq 0, & \vec{k} \in \Kzero^m && [\alpha_j(\vec{k})]\\
		&\sum_{j=1}^m \frac{k_j}{K} a_j(\vec{k}) - p(\vec{k}) \geq \sum_{j=1}^m \frac{k_j}{K} a_j(\vec{k}^\prime) - p(\vec{k}^\prime), & \vec{k}, \vec{k}^\prime \in \Kzero^m &&[\lambda_{\vec{k} \to \vec{k}^\prime}]\\
		&\sum_{j=1}^m \frac{k_j}{K} a_j(\vec{k}) - p(\vec{k}) \geq 0, & \vec{k} \in \Kzero^m &&[\mu(\vec{k})]\\
		&\sum_{\vec{k} \in \Kzero^m} f(\vec{k}) p(\vec{k}) \geq
		\rho\cdot \sum_{j=1}^m \frac{k_j}{K}\sum_{l=k_j}^K f_j(l), &\vec{k} \in \Kzero^m. && [\theta({\vec{k}})]
	\end{aligned}
	\end{equation}
	
We already indicate Lagrangian multipliers of constraints we aim to relax in the following at the end of each corresponding line. Observe, how $\rho$ exactly captures the approximation ratio: to maximize the value of rho the program maximizes the expected revenue $\sum_{\vec{k} \in \Kzero^m} f(\vec{k}) p(\vec{k})$ which is the left hand side of every $\theta({\vec{k}})$ constraint. On the right hand side $\rho$ is multiplied with the expected revenue offering each item for the posted-price vector indexed by $\vec{k}$. The optimal solution therefore satisfies
\[\rho = \max_{\vec{X}} \frac{\rev(\vec{X})}{\srev(\vec{X})},\]
where $\vec{X}$ ranges over all random variables over the $m$-dimensional $k$-grid $I_K^m$.

For the Lagrangian function, whose maximum value for each given feasible configuration of Lagrangian multipliers, thus, serves as an upper bound for the approximation ratio $\rho$, we relax all constraints  except for the first three, which ensure that the approximation ratio is nonnegative as well as having a proper probability distribution. We obtain

\begin{equation}
\tag{L}\label{eq:Lagrangian}
\begin{aligned}
	&L(\rho, \vec{f}, \vec{a}, \vec{p}; \vec{\psi}, \vec{\alpha}, \vec{\lambda}, \vec{\mu}, \vec{\theta}) = \\
    &\hspace{1cm} \rho + \sum_{j=1}^m \sum_{\vec{k}} \psi_j(\vec{k})\left[1-a_j(\vec{k})\right] + \sum_{j=1}^m \sum_{\vec{k}} \alpha_j(\vec{k})a_j(\vec{k}) \\
	&\hspace{1cm} +\sum_{\vec{k}, \vec{k}^\prime} \lambda_{\vec{k}\to\vec{k}^\prime} \left[\sum_{j=1}^m \frac{k_j}{K} a_j(\vec{k}) - p(\vec{k}) - \sum_{j=1}^m \frac{k_j}{K} a_j(\vec{k}^\prime) + p(\vec{k}^\prime)\right]
	\\
	&\hspace{1cm} + \sum_{\vec{k}} \mu(\vec{k}) \left[\sum_{j=1}^m \frac{k_j}{K} a_j(\vec{k}) - p(\vec{k})\right] \\
	&\hspace{1cm} + \sum_{\vec{k}} \theta_{\vec{k}} \left[\sum_{\vec{k}} f(\vec{k}) p(\vec{k}) -
		\rho\cdot \sum_{j=1}^m \frac{k_j}{K}\sum_{l=k_j}^K f_j(l)\right]
\end{aligned}
\end{equation}
still under the constraints that the approximation ratio is nonnegative as well as each item's probability mass functions is proper, i.e., pointwise nonnegative and all probability masses sum up to one.

In the following section we choose multipliers for the general problem of $m$ independent items to obtain an upper bound in form of a weak dual solution. Later on in \cref{sec:selling_separately_2_iid} we state a strong dual solution, i.e., an optimality certificate, for the case of two iid items.

\subsection{Duality for Many Items}
\label{sec:lagrangian-dual-many-items-weak}

Before fixing the Lagrangian multipliers for our upper-bound results, we define a flow-routing rule. For this and the remainder of the section, we interpret the value space as an $m$-dimensional grid and refer to a value $\vec{v}$ as a \textit{grid point} indexed by $\vec{k}$.

\begin{definition}[Flow-routing rule]
    For every value $\vec{v} \in \vec{V}$ with index vector $\vec{k}$ define the routing weights
    \begin{equation}\label{eq:flow_rule_definition}
        w_j(\vec{k}) \geq 0 ,\quad j\in [m], \qquad \sum_{j=1}^m w_j(\vec{k}) = 1.
    \end{equation}
    The vector $w(\vec{k})$ specifies how the initial flow at $\vec{v}$ is routed away from $\vec{v}$ among the $m$ possible coordinate directions.
\end{definition}

Now the following proposition is the result of a weak dual upper bound derived from a solution of \eqref{eq:Lagrangian}. For this we fix an arbitrary distribution and choose Lagrangian multipliers such that the approximation-ratio variable, all payment variables, and all allocation variables cancel from the Lagrangian. What remains is a constant upper bound on the approximation ratio, which can then be re-expressed as an upper bound on the expected revenue. 

\begin{proposition}\label{lemma:rev_upper_bound}
    Let $X_1,X_2,\dots,X_m$ be independent random variables in the unit $K$-grid $I_K$, for some positive integer $K$. Then 
    \begin{equation}\label{eq:rev_master_inequality}
    \rev(\vec{X}) \leq \srev(\vec{X}) + \expect{\sum_{j=1}^m X_j - \max_{j\in[m]} X_j }.
    \end{equation}
\end{proposition}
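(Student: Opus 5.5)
The plan is to exhibit an explicit feasible choice of multipliers in \eqref{eq:Lagrangian}, organised as a flow on the grid via the routing rule \eqref{eq:flow_rule_definition}, exactly as announced before the statement. Concretely, I would prove the equivalent \emph{revenue} form directly: for every truthful $(\vec a,p)$,
\[
\textstyle\sum_{\vec k} f(\vec k)p(\vec k)\le \srev(\vec X)+\expect{\sum_j X_j-\max_j X_j}.
\]
Fixing $\vec f$ and normalising the multipliers $\theta$ to a single posted-price vector then recovers $\rho$ as the ratio.

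First step: choose the routing weights. For each $\vec k$, let $j^*(\vec k)$ be an index maximising $k_j$, breaking ties in a fixed consistent order. Set $w_{j^*(\vec k)}(\vec k)=1$ and all other weights to $0$. The flow is defined top-down. Each grid point $\vec k$ receives $f(\vec k)$ units from the source, plus whatever arrives from above. It sends all of this along direction $j^*(\vec k)$, i.e.\ $\lambda_{\vec k\to \vec k-\vec e_{j^*}}$ equals its total inflow. At points with $k_{j^*}=0$ (only $\vec k=\vec 0$) the flow is dumped into the IR multiplier $\mu(\vec 0)$.

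With this choice, flow conservation (inflow minus outflow equal to $f(\vec k)$, with $\mu$ absorbing the remainder) makes every payment variable $p(\vec k)$ cancel from the Lagrangian. The coefficient of $a_j(\vec k)$ then becomes a ``virtual value'' $f(\vec k)\Phi_j(\vec k)$, obtained by the usual discrete summation by parts. For $j\neq j^*(\vec k)$ there is no flow in direction $j$, so $\Phi_j(\vec k)=k_j/K$. For $j=j^*(\vec k)$ it is $k_j/K$ minus $\frac1K$ times (total flow through $\vec k$)$/f(\vec k)$.

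Second step: bound $\sum_{\vec k}f(\vec k)\sum_j a_j(\vec k)\Phi_j(\vec k)$, using $\psi,\alpha$ to replace $a_j\Phi_j$ by $\Phi_j^+$ where needed. The non-favourite terms contribute at most $\expect{\sum_j X_j-\max_j X_j}$, since $a_j\le1$. The favourite terms should be bounded by $\sum_j \rev(X_j)=\srev(\vec X)$.

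To do this, fix $j$ and $\vec k_{-j}$ and look along the line in direction $j$. The flow at a point on this line counts the mass above it on the same line while the point remains in the region where $j$ is favourite. Hence $\Phi_j$ there is at most the single-item Myerson virtual value $k_j/K-\frac1K\,\frac{1-F_j(k_j/K)}{f_j(k_j)}$. Truthfulness gives that $a_j(\cdot,\vec k_{-j})$ is nondecreasing in $k_j$, so by the single-item revenue characterisation \eqref{eq:monopoly-revenue} (with ironing if needed) the $j$-th favourite term along each line is at most $\rev(X_j)$. Averaging over $\vec k_{-j}$ and summing over $j$ gives $\srev$.

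The main obstacle is the boundary of the favourite regions. Flow travelling down direction $j$ eventually reaches points where $j$ is no longer the maximum, and the routing rule then turns it into another direction. This creates extra inflow for points of other lines, which would spoil the clean Myerson bound there. My first attempt is to terminate the flow instead of turning it: at such a point, send it into $\mu(\vec k)$, which is allowed since IR makes that term nonnegative. The price is that $p(\vec k)$ then carries coefficient $\mu(\vec k)$, and I have to check that the resulting utility term $\mu(\vec k)(\sum_j \tfrac{k_j}{K}a_j(\vec k)-p(\vec k))$ is nonnegative and so can be dropped. This requires verifying sign conventions carefully.

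If that fails, the fallback is to keep turning the flow but choose the tie-breaking so that redirected flow enters only regions whose favourite coordinate is smaller. One would then show that the extra negative virtual value it creates is absorbed by the $\expect{\sum_j X_j-\max_j X_j}$ slack.
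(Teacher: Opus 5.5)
Your argument is correct, but it takes a different route from the paper. The paper never invokes monotonicity of $\vec a$ or a single-item revenue argument. Instead, for an arbitrary fractional routing rule $w$, it writes down a complete Lagrangian certificate in which every $a$- and $p$-coefficient vanishes:
\begin{itemize}
\item Its flow does not turn: mass routed into direction $j$ runs straight down its $j$-line to $k_j=0$ and is absorbed by $\mu(0,\vec k_{-j})$.
\item On each line it superimposes bidirectional ``circular'' flows $\frac{K}{S}\bigl(R_j-\frac{k}{K}G_j(k)\bigr)$, with $R_j=\max_k \frac{k}{K}G_j(k)$. These cancel all $a_j$-coefficients on the line except $\mp R_j/S$ at the two endpoints, which are absorbed by $\alpha_j(0,\cdot)$ and $\psi_j(K,\cdot)$.
\item The cross terms $\frac{k_i}{K}g_j/S$ are absorbed by $\psi_i$.
\end{itemize}
The circular flows are the dual encoding of your monotonicity/threshold step, and $R_j(\vec k_{-j})\le f_{-j}(\vec k_{-j})\rev(X_j)$ plays the role of your per-line Myerson bound. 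You instead use weak duality only to eliminate payments, then bound the resulting virtual welfare with primal facts: $0\le a_j\le 1$, and $a_j$ nondecreasing in $k_j$ by IC. Your route is more elementary and, as explained below, insensitive to how the flow is routed. The paper's route gives an explicit certificate for the general-$w$ inequality, and is the template later refined into the tight two-item dual.

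Some corrections, none fatal. The obstacle in your last paragraph is not real, so you need neither fallback. (The termination variant also works: IR makes the added $\mu$-term nonnegative, and the formula below still holds.) For any flow of unit downward steps with conservation, including $\mu$-absorption anywhere, the coefficient of $a_i(\vec k)$ is $\frac{k_i}{K}f(\vec k)-\frac1K\lambda_{\vec k+\vec{e}_i\to\vec k}$. The relevant flow is the one entering from $\vec k+\vec{e}_i$, not the total flow through $\vec k$ as you wrote.

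Turned inflow only enlarges $\lambda$ and so lowers $\Phi$. Under consistent tie-breaking, the favourite part of a $j$-line is an upper set $\{k_j\ge t\}$, and there you only need $\lambda_{\vec k+\vec{e}_j\to\vec k}\ge f_{-j}(\vec k_{-j})\prob{X_j>k_j/K}$. The turning flow satisfies this.

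Likewise, $\Phi_j=k_j/K$ for $j\ne j^*(\vec k)$ is false at points $\vec k$ outside $j$'s favourite region with $\vec k+\vec{e}_j$ inside it. However, $\Phi_j\le k_j/K$ is all you use.

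Finally, no ironing is needed. Write $a_j$ on $\{k_j\ge t\}$ as a mixture of thresholds $\tau\ge t$ of total weight at most $1$, and use $\sum_{k\ge \tau}\bigl(\frac{k}{K}f_j(k)-\frac1K\prob{X_j>k/K}\bigr)=\frac{\tau}{K}\prob{X_j\ge \tau/K}\le\rev(X_j)$.
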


\begin{proof}
Let each item's probability mass function $f_j$ associated with cumulative distribution function $F_j$ be arbitrary but fixed. Recall the definition for the expected revenue of selling separately~\eqref{eq:srev-def}. For ease of notation, for the arbitrary but fixed $\vec{F}$ from now on we write
\[S_j = \rev(X_j), \quad \text{and} \quad  S = \srev(\vec{X}) = \sum_{j=1}^m S_j.\]
If $S=0$, then every $X_j$ is zero with probability one and the statement is true. We therefore assume from here on $S>0$.

In the following we define Lagrangian multipliers according to fixed $\vec{F}$ such that we control the value of the Lagrangian which always serves as an upper bound of the approximation ratio. Clearly, there is (at least) one vector of optimal (reserved) price indices $\vec{r}$ such that selling each item $j$ separately at value indexed $r_j$ achieves exactly $S$, i.e.,
\[r_j \in \argmax_{k\in\Kzero} \frac{k}{K} \sum_{l=k}^K f_{j}(l).\]
Therefore, we start by choosing 
\begin{equation}
    \theta(\vec{r}) \coloneq \frac{1}{S}, \qquad \theta(\vec{r}^\prime) \coloneq 0, \; \vec{r}^\prime \neq \vec{r}.
\end{equation}
By this, we obtain for the last summand in the Lagrangian \eqref{eq:Lagrangian}
\[
    \sum_{\vec{r}} \theta({\vec{r}}) \left[\sum_{\vec{k}} f(\vec{k}) p(\vec{k}) -
		\rho\cdot \sum_{j=1}^m \frac{r_j}{K}\sum_{l=r_j}^K f_j(l) \right]
        =\frac{1}{S} \left[\sum_{\vec{k}} f(\vec{k}) p(\vec{k}) -
		\rho\cdot S \right]
        = \sum_{\vec{k}} \frac{f(\vec{k})}{S} p(\vec{k}) -
		\rho.
\]
This choice of $\theta$ achieves that the coefficient of $\rho$ vanishes from the Lagrangian and leaves the positive coefficient $\frac{f(\vec{k})}{S}$ for every payment variable $p(\vec{k})$. We interpret this coefficient as the initial flow mass at $\vec{k}$ in the $m$-dimensional grid of value profiles. This interpretation views the grid as a complete directed graph where (a) flow can be sent from $\vec{k}$ to $\vec{k}^\prime$ via $\lambda_{\vec{k}\to\vec{k}^\prime}$ which subtracts flow mass at $\vec{k}$ and adds the same amount at $\vec{k}^\prime$, and (b) flow can leave the grid from any grid point via $\mu(\vec{k})$. Thus, the payment coefficient at $\vec{k}$ vanishes exactly if the net flow leaving $\vec{k}$, including absorption through $\mu(\vec{k})$, equals the initial source mass $\frac{f(\vec{k})}{S}$.

Now let $\vec{w}$ be an arbitrary flow-routing rule assigning initial mass to coordinate direction $j$, i.e., from $\vec{k}$ we route
\[\frac{f(\vec{k})}{S} w_j(\vec{k})\]
into direction $j$. Fix an item $j$'s coordinate and a coordinate line $\vec{k}_{-j}$ and define the (unscaled) routed source mass at $l \in \Kzero$ by
\begin{equation}\label{eq:def_source_mass_routing}
    g_j(l, \vec{k}_{-j}) \coloneq f_{-j}(\vec{k}_{-j}) f_j({l}) w_j(l,\vec{k}_{-j}).
\end{equation}
Note that the actual initial source mass assigned to direction $j$ is $\frac{1}{S}g_j(l, \vec{k}_{-j})$. For a fixed $\vec{k}$ we have $g_j(k_j, \vec{k}_{-j})=f(\vec{k})w_j(\vec{k})$, and therefore
\begin{equation}\label{eq:feasible_flow}
    \sum_{j=1}^m g_j(k_j, \vec{k}_{-j}) = f(\vec{k}).
\end{equation}

For a fixed coordinate line $\vec{k}_{-j}$ we define the cumulative routed source mass for each height $k\in\Kzero$ as
\begin{equation}\label{eq:def_cumulative_source_mass}
    G_j(k, \vec{k}_{-j}) \coloneq \sum_{l=k}^K g_j(l, \vec{k}_{-j})
\end{equation}
as well as the (unscaled) maximum routed line revenue
\begin{equation}
    R_j(\vec{k}_{-j}) \coloneq \max_{k \in \Kzero} \frac{k}{K} \sum_{l=k}^K g_j(l,\vec{k}_{-j}) = \max_{k \in \Kzero} \frac{k}{K} G_j(k,\vec{k}_{-j}).
\end{equation}

\paragraph{Flow definition} For each item $j\in[m]$ and each coordinate line $\vec{k}_{-j}$ we define for every $k\in [K]$ the following two-layer flow:
\begin{enumerate}
    \item The cumulative routed flow mass originating at \textit{height} $k$ or above on that coordinate line
    \begin{equation}
        \lambda^\text{rect}_{(k,\vec{k}_{-j})\to(k-1,\vec{k}_{-j})} \coloneq \frac{1}{S} G_j(k, \vec{k}_{-j}).
    \end{equation}
    \item The circular flow added in both directions
    \begin{equation}
        \lambda^\text{circ}_{(k,\vec{k}_{-j})\to(k-1,\vec{k}_{-j})} = \lambda^\text{circ}_{(k-1,\vec{k}_{-j})\to(k,\vec{k}_{-j})} \coloneq \frac{K}{S} \Big( R_j(\vec{k}_{-j}) - \frac{k}{K} G_j(k, \vec{k}_{-j})\Big).
    \end{equation}
    By the definition of $R_j(\vec{k}_{-j})$ each flow variable is nonnegative.
\end{enumerate}

We now specify the Lagrangian multipliers. Keep the already defined $\theta$ multipliers fixed and initially set all remaining multipliers equal to zero. Whenever several parts of the construction contribute to the same multiplier, we use the notation $+=$ to indicate that the new contribution is added to its current value. We set
\begin{itemize}
    \item for all $j$, all $\vec{k}_{-j}$ and $k\in[K]$
    \begin{equation}\label{eq:def_lambda_down_multiplier}
    \lambda_{(k,\vec{k}_{-j})\to(k-1,\vec{k}_{-j})} = \lambda^\text{rect}_{(k,\vec{k}_{-j})\to(k-1,\vec{k}_{-j})} + \lambda^\text{circ}_{(k,\vec{k}_{-j})\to(k-1,\vec{k}_{-j})},
    \end{equation}
    \begin{equation}\label{eq:def_lambda_up_multiplier}
    \lambda_{(k-1,\vec{k}_{-j})\to(k,\vec{k}_{-j})} = \lambda^\text{circ}_{(k-1,\vec{k}_{-j})\to(k,\vec{k}_{-j})}.
    \end{equation}
    \item for all $j$ (with $k_j=0$) and all $\vec{k}_{-j}$
    \begin{equation}\label{eq:def_mu_multiplier}
    \mu{(0,\vec{k}_{-j})} += \frac{1}{S} G_j(0,\vec{k}_{-j}).
    \end{equation}
    \item for all $j$ and all $\vec{k}_{-j}$
    \begin{equation}\label{eq:def_alpha_multiplier}
    \alpha_j{(0,\vec{k}_{-j})} += \frac{1}{S} R_j(\vec{k}_{-j}).
    \end{equation}
    \begin{equation}\label{eq:def_psi_j_multiplier}
    \psi_j{(K,\vec{k}_{-j})} += \frac{1}{S} R_j(\vec{k}_{-j}).
    \end{equation}
    \item for all $j$ and all $\vec{k}_{-j}$, for all $i \neq j$ and all $k\in \Kzero$
    \begin{equation}\label{eq:def_psi_neq_j_multiplier}
    \psi_i{(k,\vec{k}_{-j})} += \frac{1}{S} g_j(k,\vec{k}_{-j}) \frac{k_i}{K}.
    \end{equation}
\end{itemize}

First observe, that all multipliers are clearly nonnegative. As a next step we want to quantify the upper bound of $\rho$ which we achieve by this choice of multipliers: The payment variables' coefficients are not affected by the circular flows (as the same flow amount is added as well as subtracted). For any payment variable $p(k,\vec{k}_{-j})$ the rectangular flows in direction $j$ leave the following net flow. For $k=1, \dots, K-1$
\begin{align*}
    \lambda^\text{rect}_{(k+1,\vec{k}_{-j})\to(k,\vec{k}_{-j})}-\lambda^\text{rect}_{(k,\vec{k}_{-j})\to(k-1,\vec{k}_{-j})} &=  \frac{1}{S} G_j(k+1, \vec{k}_{-j}) - \frac{1}{S} G_j(k, \vec{k}_{-j}) 
    = -\frac{1}{S} g_j(k, \vec{k}_{-j}).
\end{align*}
For $k=0$
\begin{align*}
    \lambda^\text{rect}_{(1,\vec{k}_{-j})\to(0,\vec{k}_{-j})}-\mu(0,\vec{k}_{-j}) =  \frac{1}{S} G_j(1, \vec{k}_{-j}) - \frac{1}{S} G_j(0, \vec{k}_{-j})
    = -\frac{1}{S} g_j(0, \vec{k}_{-j}),
\end{align*}
and for $k=K$
\begin{align*}
    -\lambda^\text{rect}_{(K,\vec{k}_{-j})\to(K-1,\vec{k}_{-j})} =  - \frac{1}{S} G_j(K, \vec{k}_{-j})
    = -\frac{1}{S} g_j(K, \vec{k}_{-j}).
\end{align*}
Hence, summing over all items' directions $j\in[m]$ we obtain
\[\sum_{j=1}^m -\frac{1}{S} g_j(k_j, \vec{k}_{-j}) = -\frac{1}{S} f(\vec{k})\]
which cancels exactly with the initial source mass.

The allocation variables' coefficients are more involved as the allocation variables come with values in the coefficients which differ in ingoing and outgoing points of a flow variable. Here the circular flows manage to cancel all coefficients in combination with the $\alpha$ and $\psi$ multipliers. The latter ones come with a constant in the objective so this is exactly what we have to quantify to obtain an upper bound. For the analysis of the telescoping coefficients we distinguish between two cases: The coefficients of allocation variables in the direction in which a line is routed, and allocations of all other items.

\paragraph{The routed direction} Fix an item $j$, $\vec{k}_{-j}$. For brevity we define for $k\in[K]$ 
\[G(k)\coloneq G_j(k, \vec{k}_{-j}), \qquad R\coloneq R_j(\vec{k}_{-j}), \qquad C(k) \coloneq \frac{K}{S} \left(R - \frac{k}{K}G(k) \right).\]
Using this notation, the downward flow on this line between heights $k$ and $k-1$ is $\frac{1}{S} G(k) + C(k)$ while the upward flow is $C(k)$.

For an interior point, i.e., for $1\leq k \leq K-1$ the coefficient of $a_j(k,\vec{k}_{-j})$ induced by the four neighbouring/incident flows is
\begin{align*}
    &\frac{k}{K} \left( \frac{1}{S} G(k) + C(k) \right) - \frac{k-1}{K} C(k) - \frac{k+1}{K} \left( \frac{1}{S} G(k+1) + C(k+1) \right) + \frac{k}{K} C(k+1) \\
    & \qquad \qquad = \frac{1}{S} \left(\frac{k}{K} G(k)  - \frac{k+1}{K} G(k+1)\right) + \frac{1}{K} C(k) - \frac{1}{K} C(k+1) \\
    & \qquad \qquad = \frac{1}{S} \left(\frac{k}{K} G(k)  - \frac{k+1}{K} G(k+1)\right) + \frac{1}{S} \left(R - \frac{k}{K}G(k)\right) - \frac{1}{S} \left(R - \frac{k+1}{K}G(k+1)\right) \\
    & \qquad \qquad = 0
\end{align*}
which follows completely by rearranging terms and substituting the definition of $C(k)$.
At the bottom, i.e., for $k=0$, the only neighbouring contribution to $a_j(0,\vec{k}_{-j})$ is
\begin{align*}
    - \frac{1}{K} \left( \frac{1}{S} G(1) + C(1) \right) 
    = - \frac{1}{K} \left( \frac{1}{S} G(1) + \frac{K}{S} (R - \frac{1}{K}G(1)) \right) 
    = - \frac{1}{S} R 
\end{align*}
which is captured and cancelled by definition by $\alpha_j(0,\vec{k}_{-j})$. At the top, i.e., for $k=K$, the neighbouring contribution to $a_j(K,\vec{k}_{-j})$ is
\begin{align*}
    \frac{K}{K} \left( \frac{1}{S} G(K) + C(K) \right) - \frac{K-1}{K} C(K) 
    &= \frac{1}{S} G(K) + \frac{1}{K} C(K) \\
    &= \frac{1}{S} G(K) + \frac{1}{K}\frac{K}{S} \left(R - G(K) \right) \\
    &= \frac{1}{S} R    
\end{align*}
which again is captured and cancelled by definition by $-\psi_j(K,\vec{k}_{-j})$ (as the allocation variables appear with a negative sign in this part of the Lagrangian).

\paragraph{All other directions} Again fix item $j$, $\vec{k}_{-j}$ and consider another item $i\neq j$. Along the $j$-coordinate line, the $i$-th coordinate, hence, the value $\frac{k_i}{K}$ is fixed/constant. Therefore, for any height $k\in\Kzero$, the coefficient of $a_i(k, \vec{k}_{-j})$ induced by the flow in $j$ direction, together with the $\mu$ term at $k=0$, is simply $\frac{k_i}{K}$ times the net flow leaving the point at this height. From the analysis of the payment variables' coefficients we know this flow is exactly $\frac{1}{S} g_j(k, \vec{k}_{-j})$, therefore, the coefficient is
\[\frac{1}{S} g_j(k, \vec{k}_{-j}) \frac{k_i}{K}.\]
By definition this is cancelled exactly by the coefficient $-\psi_i(k, \vec{k}_{-j})$ (again as the allocation variable appears with negative sign in the Lagrangian).

By this we have shown that all payment as well as all allocation variables' coefficients vanish. The remaining task now is to quantify the remaining constant terms in the Lagrangian objective. These are exactly all values of $\psi$ as each variable is associated with a constant factor of one in the objective. So by all the analysis we did and the specific choice of Lagrangian multipliers the Lagrangian~\eqref{eq:Lagrangian} reduces to 
\[L(\rho, \vec{f}, \vec{a}, \vec{p}; \vec{\psi}, \vec{\alpha}, \vec{\lambda}, \vec{\mu}, \vec{\theta}) = 
     \sum_{j=1}^m \sum_{\vec{k}} \psi_j(\vec{k})
\]
where nothing is left to optimize but the objective value is already fixed by the choice of $\psi$ in \eqref{eq:def_psi_j_multiplier} and \eqref{eq:def_psi_neq_j_multiplier} for $\psi$. By the two definitions the constants split also into two cases as for the allocation variables' coefficient cancellation: In the routed direction for every item coordinate $j$ and every $\vec{k}_{-j}$ recall \eqref{eq:def_psi_j_multiplier} which gives in total
\begin{equation}\label{eq:constant_routed_direction}
    \frac{1}{S} \sum_{j=1}^m \sum_{\vec{k}_{-j}} R_j(\vec{k}_{-j}).
\end{equation}
For all other directions, i.e., for every item coordinate $j$ and every $\vec{k}_{-j}$, every height $k$, and every $i\neq j$ recall \eqref{eq:def_psi_neq_j_multiplier} which gives
\begin{equation}\label{eq:constant_other_directions}
    \frac{1}{S} \sum_{j=1}^m \sum_{\vec{k}_{-j}} \sum_{k=0}^K \sum_{i\neq j} g_j(k, \vec{k}_{-j}) \frac{k_i}{K}.
\end{equation}
Now use the fact that $\sum_{\vec{k}_{-j}} \sum_{k=0}^K$ essentially sums over all grid points and apply the definition for a fully fixed grid point $g_j(\vec{k}) = f(\vec{k}) w_j(\vec{k})$. By this \eqref{eq:constant_other_directions} simplifies to
\begin{align*}
    \frac{1}{S} \sum_{j=1}^m \sum_{\vec{k}} \sum_{i\neq j} f(\vec{k}) w_j(\vec{k}) \frac{k_i}{K} &= \frac{1}{S} \sum_{\vec{k}} f(\vec{k}) \sum_{j=1}^m w_j(\vec{k}) \sum_{i\neq j}    \frac{k_i}{K} \\
    &=\frac{1}{S} \sum_{\vec{k}} f(\vec{k}) \sum_{j=1}^m w_j(\vec{k}) \left( \sum_{i=1}^m \frac{k_i}{K} - \frac{k_j}{K} \right) \\
    &=\frac{1}{S} \sum_{\vec{k}} f(\vec{k}) \left( \sum_{i=1}^m  \frac{k_i}{K} - \sum_{j=1}^m w_j(\vec{k}) \frac{k_j}{K}  \right).
\end{align*}

This provides the most general upper bound inequality for the discrete equidistant independent multi-item setting. Going over from the representation of realized values by the indices $k_i$, where the actual value is $k_i/K$, to the random variables $X_i$, we write
\begin{equation}\label{eq:tightest_general_upper_bound_approximation_ratio}
    \frac{\rev(\vec{X})}{\srev(\vec{X})} \leq  \sum_{j=1}^m \sum_{\vec{k}_{-j}} \frac{1}{S} R_j(\vec{k}_{-j}) + \frac{1}{S} \expect{\sum_{i=1}^m  X_i - \sum_{j=1}^m w_j(\vec{X}) X_j  }.
\end{equation}
As $S\equiv \srev(\vec{X}) $ this is exactly
\begin{equation}\label{eq:tightest_general_upper_bound_expected_revenue}
    \rev(\vec{X}) \leq  \sum_{j=1}^m \sum_{\vec{k}_{-j}} R_j(\vec{k}_{-j}) + \expect{\sum_{i=1}^m  X_i - \sum_{j=1}^m w_j(\vec{X}) X_j }.
\end{equation}
To finalize the proof it remains to show that the first part is at most equal to $\srev(\vec{X})$. For this, fix coordinate $j$ and coordinate line $\vec{k}_{-j}$. We apply the definition of $g_j$ from \eqref{eq:def_source_mass_routing}
\[g_j(l, \vec{k}_{-j}) = f_{-j}(\vec{k}_{-j}) f_j({l}) w_j(l,\vec{k}_{-j}) \leq f_{-j}(\vec{k}_{-j}) f_j({l}) .\]
Therefore, for every height $k$ it holds
\[\frac{k}{K} \sum_{l=k}^K g_j(l, \vec{k}_{-j})  \leq \frac{k}{K} \sum_{l=k}^K f_{-j}(\vec{k}_{-j}) f_j({l}) = f_{-j}(\vec{k}_{-j}) \frac{k}{K} \sum_{l=k}^K  f_j({l}) \leq f_{-j}(\vec{k}_{-j}) S_j.\]
Since this bound holds for every $k\in\Kzero$, it also holds for the maximizer over $k$. Thus, by the definition of $R_j(\vec k_{-j})$,
\[R_j(\vec{k}_{-j}) \leq f_{-j}(\vec{k}_{-j}) S_j.\]
By this, 
\[\sum_{\vec{k}_{-j}} R_j(\vec{k}_{-j}) \leq \sum_{\vec{k}_{-j}} f_{-j}(\vec{k}_{-j}) S_j = S_j \sum_{\vec{k}_{-j}} f_{-j}(\vec{k}_{-j}) = S_j \]
and the first summand in \eqref{eq:tightest_general_upper_bound_expected_revenue} can be upper bounded by
\[\sum_{j=1}^m \sum_{\vec{k}_{-j}} R_j(\vec{k}_{-j}) \leq \sum_{j=1}^m S_j = \srev(\vec{X}).\]
This yields the master inequality with the only ambiguity in the flow-routing rule $w$
\begin{equation}
    \rev(\vec{X}) \leq \srev(\vec{X}) + \expect{\sum_{j=1}^m X_j - \sum_{j=1}^m w_j(\vec{X}) X_j }.
\end{equation}
The right hand side then is minimized by choosing favourite coordinate weights (ties may be broken arbitrarily) and the statement follows.
\end{proof}

\section{Approximation Ratios of Simple Deterministic Auctions}
\label{sec:approximation_ratios}

In this section we state most of our results. We tackle the expectation term in \eqref{eq:rev_master_inequality} and bound it by expressions of simple mechanisms in multiple different ways. In \cref{sec:selling_separately} we prove a $(H_{m-1}+1) \srev$ bound for $m$ independent items, in \cref{sec:selling_in_bundle} we provide an upper bound of $(8/3) \brev$ for the expectation term. This improves the upper bounds for the approximation ratio of selling separately for the independent setting as well as for the full bundle in comparison to the optimal expected revenue in both, the independent as well as the identical setting. Finally, in \cref{sec:best_selling_separately_bundle} we bound the expectation term further by a linear combination of $\srev$ and $\brev$, providing an improved constant upper bound even for the best of the two simple mechanisms in comparison to the optimal one.

For the special case of two identically distributed items, we refine the flow of \cref{lemma:rev_upper_bound} to obtain a strong dual solution, closing the gap for this setting. This flow is designed carefully and we devote this a separate section (\cref{sec:selling_separately_2_iid}). The result is further used to improve the upper bound even for many iid items in \cref{sec:selling_separately_m_iid}.

\subsection{Selling Separately}\label{sec:selling_separately}

In the following lemma we bound the expectation term in \eqref{eq:rev_master_inequality} also by a function of $\srev(\vec{X})$.

\begin{lemma}\label{lemma:bound_expectation_by_SREV_ind}
    Let $X_1, X_2, \dots, X_m$ be independent nonnegative random variables. 
    Then
    \begin{equation}\label{eq:harmonic_srev_bound}
        \expect{\sum_{j=1}^m X_j - \max_{j\in[m]} X_j }  \leq H_{m-1} \cdot \srev(\vec{X}).
    \end{equation}
\end{lemma}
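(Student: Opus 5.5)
The plan is to write the left-hand side of \eqref{eq:harmonic_srev_bound} as a tail integral and bound the integrand pointwise in $t$. Nothing from the Lagrangian machinery of \cref{lemma:rev_upper_bound} is needed; the only input is the single-item posted-price bound.

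\emph{Step 1: tail-integral representation.} For nonnegative $X_j$ we have $\sum_j X_j-\max_j X_j=\int_0^\infty\bigl(\#\{j: X_j\ge t\}-\mathbf 1[\max_j X_j\ge t]\bigr)\,dt$. Write $q_j(t)\coloneqq\prob{X_j\ge t}$ and use independence to get
\[
\expect{\sum_{j=1}^m X_j-\max_{j\in[m]}X_j}=\int_0^\infty \varphi\bigl(q_1(t),\dots,q_m(t)\bigr)\,dt,
\qquad
\varphi(\vec q)\coloneqq\sum_{j}q_j-1+\prod_{j}(1-q_j).
\]

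\emph{Step 2: monotone replacement by the posted-price bound.} Set $S_j=\rev(X_j)$. By \eqref{eq:monopoly-revenue}, $q_j(t)\le \min\{1,S_j/t\}$. Since $\partial\varphi/\partial q_j=1-\prod_{i\ne j}(1-q_i)\ge 0$ on $[0,1]^m$, $\varphi$ is nondecreasing in each coordinate. Hence the integrand is at most $\varphi\bigl(\min\{1,S_1/t\},\dots,\min\{1,S_m/t\}\bigr)$. This reduces the lemma to a deterministic inequality in the numbers $S_1,\dots,S_m\ge0$.

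\emph{Step 3: the symmetric case.} If all $S_j=s$, substitute $u=s/t$. On $t\le s$ we have $\varphi=m-1$, contributing $(m-1)s$. On $t\ge s$, expanding $(1-u)^m$ gives
\[
s\int_0^1\frac{mu-1+(1-u)^m}{u^2}\,du=s\sum_{k\ge2}\frac{(-1)^k\binom mk}{k-1}.
\]
I would check that the total equals $s\cdot mH_{m-1}=H_{m-1}\sum_jS_j$; a partial-fractions or induction argument should give this. For $m=2$ it is $s+s=2s$, which matches $H_1\cdot 2s$.

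\emph{Step 4: general $S_j$ (main obstacle).} The heterogeneous case needs a separate argument. I would first try the telescoping identity $\varphi(\vec q)=\sum_j q_j\bigl[1-\prod_{i\in P_j}(1-q_i)\bigr]$, where $P_j$ is the set of items preceding $j$ in a uniformly random order. Averaging over orders symmetrizes the bound. Each term is then a one-dimensional integral of $\min\{1,a/t\}\bigl[1-\prod_{i\in P}(1-\min\{1,b_i/t\})\bigr]$. I would bound this linearly in $a$ with coefficient $1/(|P|+1)$, in the spirit of $\expect[\pi]{1/(|P_j|+1)}$. Summing over $|P_j|=0,\dots,m-1$ should produce the harmonic factor.

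If this becomes messy, the fallback is to show that
\[
\Phi(S_1,\dots,S_m)\coloneqq\int_0^\infty\varphi\bigl(\min\{1,S_j/t\}\bigr)_j\,dt
\]
is positively homogeneous of degree one and, for fixed $\sum_j S_j$, is maximized at equal $S_j$ (via a Schur-concavity/smoothing argument: replacing $S_i,S_j$ by their average does not decrease $\Phi$). The bound then follows from Step 3. Verifying this smoothing inequality, which comes down to comparing the kinks of the $\min\{1,\cdot\}$ truncations, is the step I expect to be hardest.
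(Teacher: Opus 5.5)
Your Steps 1--3 match the paper's proof: the tail-integral form with integrand $\sum_j q_j-1+\prod_j(1-q_j)$, the posted-price bound $q_j(t)\le\min\{1,S_j/t\}$, and the equal-$S_j$ computation. The last of these closes in one line by integrating by parts against $-1/u$: $\int_0^1\frac{mu-1+(1-u)^m}{u^2}\,du=-(m-1)+m\int_0^1\frac{1-(1-u)^{m-1}}{u}\,du=mH_{m-1}-(m-1)$.

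\textbf{The gap is Step 4.} It is the only non-routine part of the lemma, and you leave it unproved. Route (a) also fails as sketched, because the order-$\pi$ terms cannot be bounded by a multiple of the item's own $a$. For $m=2$, each order has a single nonzero term, $\int_0^\infty\min\{1,a/t\}\min\{1,b/t\}\,dt$. This equals $2a+a\ln(b/a)$ when $a\le b$, so it is unbounded in $b$ for fixed $a$. Even when all $S_j=s$, a term with $|P|=k\ge 1$ equals $s(1+H_k)$. That grows with $k$ rather than scaling like $1/(k+1)$, so the heuristic points the wrong way.

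\textbf{The missing idea is short, and it makes the smoothing claim of your route (b) hold pointwise in $t$}, with no Schur-type argument on $\Phi$:
\begin{itemize}
\item By AM--GM, $\prod_j(1-q_j)\le(1-\bar q)^m$ with $\bar q=\frac1m\sum_jq_j$. So $\varphi(\vec q)\le\psi(\bar q)$, where $\psi(x)=mx-1+(1-x)^m$.
\item $\psi'(x)=m-m(1-x)^{m-1}\ge 0$ on $[0,1]$, so $\psi$ is nondecreasing.
\item Since $x\mapsto\min\{1,x/t\}$ is concave, Jensen gives $\bar q(t)\le\frac1m\sum_j\min\{1,S_j/t\}\le\min\{1,\bar S/t\}$, with $\bar S=\srev(\vec X)/m$.
\end{itemize}
Hence for every $t$ the integrand is at most the symmetric integrand with $s=\bar S$. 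Your Step 3 then yields $mH_{m-1}\bar S=H_{m-1}\srev(\vec X)$. This is exactly how the paper handles heterogeneous $S_j$.
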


\begin{proof}
    Recall the single-item revenue definition~\eqref{eq:srev-def}. We denote
    \[S_j = \rev(X_j) = \sup_{p \geq 0} p \prob{X_j \geq p}.\]
    Then $\srev(\vec{X}) = \sum_{j=1}^m S_j.$ If $\srev(\vec{X}) = 0$, then all $X_j = 0$ and the claim is immediate. If $\srev(\vec{X}) = \infty$, the desired upper bound is trivial. Hence, assume throughout that $0<\srev(\vec{X})<\infty$. Further, define the average single-item revenue as $\bar{S}\coloneq \frac{1}{m} \srev(\vec{X})$. Also define for every threshold $t\geq 0$ the tail-probability that item $j$'s realized value is at least $t$ as
    \[q_j(t) \coloneq \prob{X_j \geq t}.\]
    Recall, that we already define this quantity in \cref{sec:model-notation} and it holds $q_j(t) = 1-F(t^-)$. However, for brevity we use $q_j(t)$ for the realization of the proof. By definition of $S_j$, it holds $t q_j(t) \leq S_j$. We define the random variable
    \[N(t) \coloneq \card{\{ j \in [m] : X_j \geq t \}}\]
    of the number/count of how many item values lie at or above a given level $t$.

    Now we rewrite the quantity we want to bound as
    \begin{equation}\label{eq:measure_by_counting_active_variables}
        \sum_{j=1}^m X_j - \max_{j\in[m]} X_j = \int_0^\infty \max\{ N(t)-1, 0\} dt,
    \end{equation}
    because every item contributes one to $N(t)$ for all positive $t$ up to its realized value. Subtracting the largest realized value removes exactly one such contribution whenever $N(t)\geq 1$. As a next step we take expectations on both sides of \eqref{eq:measure_by_counting_active_variables}. Since the integrand on the right hand side is nonnegative, we can interchange expectation and integral and obtain
    \begin{equation}\label{eq:expectation_equals_integral_over_expectation}
        \expect{ \sum_{j=1}^m X_j - \max_{j\in[m]} X_j }  = \int_0^\infty \expect{\max\{ N(t)-1, 0\}} dt.
    \end{equation}
    As $N(t)$ takes only integer values we can rewrite
    \[\expect{\max\{ N(t)-1, 0\}} = \expect{N(t)} - \prob{N(t)\geq 1}.\]
    (Quickly verify this by the case analysis $N(t)=0$ and $N(t)\geq 1$.) Now both quantities on the right hand side can be expressed by the tail probabilities:
    \begin{equation}
        \expect{N(t)} = \sum_{j=1}^m q_j(t)
    \end{equation}
    and as $\prob{N(t)\geq1} = 1 -\prob{N(t)=0}$ and all $X_j$ independent
    \begin{equation}
        \prob{N(t)\geq1} = 1 -\prob{N(t)=0} = 1- \prod_{j=1}^m (1-q_j(t)).
    \end{equation}
    Thus, we obtain
    \begin{equation}\label{eq:tail_sum_and_product}
        \expect{\max\{ N(t)-1, 0\}} = \sum_{j=1}^m q_j(t) - 1 + \prod_{j=1}^m (1-q_j(t)).
    \end{equation}

    Define the average tail of a fixed $t$ by 
    \[\bar{q}(t) \coloneq \frac{1}{m} \sum_{j=1}^m q_j(t).\]
    Then by the inequality of arithmetic and geometric means for nonnegative values follows
    \[\prod_{j=1}^m (1-q_j(t)) \leq (1-\bar{q}(t))^m \]
    and by that and \eqref{eq:tail_sum_and_product}, we get that
    \begin{equation}\label{eq:upper_bound_by_average_tail}
        \expect{\max\{ N(t)-1, 0\}} \leq m \bar{q}(t) - 1 + (1-\bar{q}(t))^m.
    \end{equation}
    Observe that $m \bar{q}(t) - 1 + (1-\bar{q}(t))^m$ is a function increasing in $\bar{q}(t)\in[0,1].$ For that, quickly check its derivative with respect to $\bar{q}(t)$:
    \[\frac{d}{d\bar{q}(t)} (m \bar{q}(t) - 1 + (1-\bar{q}(t))^m) = m - m (1-\bar{q}(t))^{m-1} \geq 0.\]
    Now, since by definition $t q_j(t) \leq S_j$ and the trivial upper bound of a tail probability of one, we write for $t>0$
    \[q_j(t) \leq \min \Big\{ 1, \frac{S_j}{t} \Big\}\]
    we directly get
    \[\bar{q}(t) \leq \frac{1}{m} \sum_{j=1}^m \min \Big\{ 1, \frac{S_j}{t} \Big\} .\]
    Observe that the expression $\min \Big\{ 1, \frac{S_j}{t} \Big\}$ is a concave function in $S_j$ as it is linear with slope $\frac{1}{t}$ for $S_j \leq t$ and constantly equal to one for $S_j \geq t$. Its slope therefore only decreases. Under concavity we can apply Jensen's inequality which gives us
    \[\bar{q}(t) \leq \frac{1}{m} \sum_{j=1}^m \min \Big\{ 1, \frac{S_j}{t} \Big\} \leq \min \Big\{ 1, \frac{1}{mt} \sum_{j=1}^m S_j \Big\} \qquad (= \min \Big\{ 1, \frac{1}{t} \bar{S}  \Big\}).\]
    Using this inequality and the increasing right hand side of \eqref{eq:upper_bound_by_average_tail} in $\bar{q}(t)$ we get
    \[\expect{\max\{ N(t)-1, 0\}} \leq m \min \Big\{ 1, \frac{\bar{S}}{t} \Big\} - 1 + \Big( 1- \min \Big\{ 1, \frac{\bar{S}}{t} \Big\} \Big)^m . \]
    Now we insert this into \eqref{eq:expectation_equals_integral_over_expectation} and obtain an upper bound for the expectation
    \[\expect{\sum_{j=1}^m X_j - \max_{j\in[m]} X_j }  \leq \int_0^\infty m \min \Big\{ 1, \frac{\bar{S}}{t} \Big\} - 1 + \Big( 1- \min \Big\{ 1, \frac{\bar{S}}{t} \Big\} \Big)^m dt. \]
    Note, that the minimum operator changes between the two quantities exactly when $t=\bar{S}$. Thus, we can split the integral and compute
    \begin{equation}\label{eq:expectation_upper_bound_two_integrals}
        \expect{\sum_{j=1}^m X_j - \max_{j\in[m]} X_j }  \leq \int_0^{\bar{S}} (m - 1) dt + \int_{\bar{S}}^\infty \Bigg[ \frac{m\bar{S}}{t}  - 1 + \Big( 1-  \frac{\bar{S}}{t} \Big)^m \Bigg] dt
    \end{equation}
   
    To simplify the second integral we start with expanding the product
    \[\Bigg( 1-  \frac{\bar{S}}{t} \Bigg)^m  = 1 - \frac{m\bar{S}}{t} + \sum_{r=2}^m \binom{m}{r} (-1)^r \frac{\bar{S}^r}{t^r}\]
    such that after the expansion, the first two terms directly cancel:
    \begin{align*}
        \int_{\bar{S}}^\infty \Bigg[ \frac{m \bar{S}}{t}  - 1 + \Big( 1-  \frac{\bar{S}}{t} \Big)^m \Bigg] dt &= \int_{\bar{S}}^\infty \Bigg[ \frac{m \bar{S}}{t}  - 1 + 1 - \frac{m\bar{S}}{t} + \sum_{r=2}^m \binom{m}{r} (-1)^r \frac{\bar{S}^r}{t^r} \Bigg] dt \\
        &= \int_{\bar{S}}^\infty \Bigg[ \sum_{r=2}^m \binom{m}{r} (-1)^r \frac{\bar{S}^r}{t^r} \Bigg] dt \\
        &= \sum_{r=2}^m \Bigg[\binom{m}{r} (-1)^r \bar{S}^r \int_{\bar{S}}^\infty \frac{1}{t^r} dt \Bigg]  \\
        &= \sum_{r=2}^m \Bigg[\binom{m}{r} (-1)^r \bar{S}^r \frac{\bar{S}^{1-r}}{r-1} \Bigg]  \\
        &= \bar{S} \sum_{r=2}^m \Bigg[\binom{m}{r} (-1)^r \frac{1}{r-1} \Bigg].
    \end{align*}
    Finally, we want to evaluate this large sum to obtain the desired bound. We make use of the standard identity of Euler \cite[\S 13]{Euler1806} for harmonic numbers
    \[H_{m-1} = \sum_{r=1}^{m-1} (-1)^{r+1} \binom{m-1}{r} \frac{1}{r}.\]
    Thus, we can apply an index shift, use $\binom{m}{r+1} = \frac{m}{r+1} \binom{m-1}{r}$ twice and obtain
    \begin{align*}
        \sum_{r=2}^m \binom{m}{r} (-1)^r \frac{1}{r-1} &= \sum_{r=1}^{m-1} \binom{m}{r+1} (-1)^{r+1} \frac{1}{r} \\
        &= m \sum_{r=1}^{m-1} \binom{m-1}{r} (-1)^{r+1} \frac{1}{r(r+1)} \\
        &= m \sum_{r=1}^{m-1} \binom{m-1}{r} (-1)^{r+1} \Big( \frac{1}{r} - \frac{1}{r+1} \Big) \\
        &= m \Bigg( H_{m-1} - \sum_{r=1}^{m-1} \binom{m-1}{r} (-1)^{r+1} \frac{1}{r+1} \Bigg) \\
        &= m H_{m-1} - \sum_{r=1}^{m-1} \binom{m}{r+1} (-1)^{r+1} \\
        &= m H_{m-1} - \sum_{r=2}^{m} \binom{m}{r} (-1)^{r} .
    \end{align*}
    Now we use the binomial identity $\sum_{r=0}^{m} \binom{m}{r} (-1)^{r} = (1-1)^m = 0$, hence,
    \[\sum_{r=2}^{m} \binom{m}{r} (-1)^{r} = 0 - \Bigg( \binom{m}{0} -\binom{m}{1} \Bigg) = m-1.\]
    Summarizing, the first integral in \eqref{eq:expectation_upper_bound_two_integrals} yields $({m-1})\bar{S}$ while the second part equals $(m H_{m-1} - (m-1))\bar{S}$. The $(m-1)$ terms cancel and in total we obtain
    \[\expect{\sum_{j=1}^m X_j - \max_{j\in[m]} X_j }  \leq  m H_{m-1} \bar{S} = H_{m-1} \srev(\vec{X}). \]
\end{proof}

\begin{proposition}
    \label{prop:upper_bound_m_ind_srev}
    Let $X_1,X_2,\dots,X_m$ be independent random variables in the unit $K$-grid $I_K$, for some positive integer $K$. Then 
    \begin{equation}
    \rev(\vec{X}) \leq (1 + H_{m-1}) \, \srev(\vec{X}).
    \end{equation}
\end{proposition}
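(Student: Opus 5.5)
The proof combines the two results already established: the master inequality of \cref{lemma:rev_upper_bound} and the harmonic bound of \cref{lemma:bound_expectation_by_SREV_ind}.

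Since $X_1,\dots,X_m$ are independent and take values in the unit $K$-grid $I_K$, the hypotheses of \cref{lemma:rev_upper_bound} hold. It gives
\[
\rev(\vec{X}) \leq \srev(\vec{X}) + \expect{\sum_{j=1}^m X_j - \max_{j\in[m]} X_j }.
\]

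The variables are also nonnegative and independent, so \cref{lemma:bound_expectation_by_SREV_ind} applies to the expectation term. It bounds that term by $H_{m-1}\cdot\srev(\vec{X})$. Substituting this bound into the inequality above gives
\[
\rev(\vec{X}) \leq (1+H_{m-1})\,\srev(\vec{X}),
\]
which is the claim.

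No step poses a real obstacle, because the substance lies in the two lemmas. Only a few checks remain.

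First, the edge case $\srev(\vec{X})=0$. Here every $X_j$ is almost surely zero, so both sides vanish; each lemma handles this case anyway.

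Second, finiteness. Values lie in $[0,1]$, so $\srev(\vec{X})\leq m<\infty$ and the finiteness assumption in \cref{lemma:bound_expectation_by_SREV_ind} holds.

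Third, consistency of conventions. \cref{lemma:bound_expectation_by_SREV_ind} is stated for general nonnegative variables, using the posted-price definition~\eqref{eq:monopoly-revenue} of single-item revenue. That is the same $\srev$ appearing in the discrete master inequality, so the two bounds combine directly.

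For $m=1$ we have $H_0=0$, and the statement reduces to Myerson's $\rev=\srev$, as expected.
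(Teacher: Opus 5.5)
Your proposal is correct and matches the paper's proof, which likewise just combines the master inequality of \cref{lemma:rev_upper_bound} with the harmonic bound of \cref{lemma:bound_expectation_by_SREV_ind}. Your extra checks (the degenerate case $\srev(\vec{X})=0$, finiteness of $\srev$, consistent revenue conventions, and the $m=1$ sanity check) are sound but add nothing beyond what the two lemmas already cover.
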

\begin{proof}
    This follows directly combining \cref{lemma:rev_upper_bound} and \cref{lemma:bound_expectation_by_SREV_ind}.
\end{proof}

\subsection{Full-Bundling}\label{sec:selling_in_bundle}

In the following lemma we bound the expectation term in \eqref{eq:rev_master_inequality} by a constant factor of $\brev(\vec{X})$.

\begin{lemma}
\label{lemma:bound_expectation_by_BREV}
Let $X_1, X_2, \dots, X_m$ be independent nonnegative random variables. Then
    \begin{equation}\label{eq:8over3_brev_bound}
        \expect{\sum_{j=1}^m X_j - \max_{j\in[m]} X_j } \leq \frac{8}{3} \brev(\vec{X}).
    \end{equation}
\end{lemma}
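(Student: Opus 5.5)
The approach is to write the expectation as an integral of tail probabilities and bound each tail by a product of two bundle tails, using independence through a disjoint-occurrence (BK-type) inequality. Set $B\coloneq\brev(\vec X)$, $Y\coloneq\sum_j X_j$ and $D\coloneq Y-\max_j X_j\ge 0$. The trivial cases $B=0$ (then $Y=0$ a.s.) and $B=\infty$ are handled separately. Then
\[\expect{D}=\int_0^\infty \prob{D\ge t}\,dt,\]
and the only information available about the distribution is the single-parameter bundle bound $\prob{Y\ge s}\le \min\{1,B/s\}$ for every $s>0$, exactly as $tq_j(t)\le S_j$ was used in \cref{lemma:bound_expectation_by_SREV_ind}. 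Since $\prob{D\ge t}\le\prob{Y\ge t}\le B/t$ alone is not integrable, independence has to produce a quadratic decay.

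\textbf{Combinatorial step.} I would show that $D\ge t$ forces two \emph{disjoint} sets of items whose value sums reach prescribed levels. Fix a realization with $D\ge t$, and let $j^*$ be an index attaining the maximum $x_{j^*}$. Add the remaining items greedily to a set $T$ until $\sum_{T}x_j\ge s$, for a level $s\le t$. Every item is at most $x_{j^*}$, so the overshoot is at most $x_{j^*}$. Hence the complementary set, which contains $j^*$, has sum at least $Y-(s+x_{j^*})\ge t-s$. Thus $\{D\ge t\}$ is contained in the disjoint occurrence $\{Y\ge s\}\,\square\,\{Y\ge t-s\}$ of two increasing events, where each event is witnessed by the coordinates in its own set. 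The events depend only on the independent coordinates $X_j$, so the van den Berg--Kesten inequality (or a direct stopping-time argument: scan items in a fixed order, stop at the first time the partial sum reaches $s$, and use independence of the remaining items) should give
\[\prob{D\ge t}\le \prob{Y\ge s}\,\prob{Y\ge t-s}\le \min\Big\{1,\tfrac{B}{s}\Big\}\min\Big\{1,\tfrac{B}{t-s}\Big\}.\]
If BK for general (non-lattice) product spaces is unavailable, I would first discretize and then take limits, or prove the needed special case directly via the stopping-time argument.

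\textbf{Optimization.} For each $t$ I would choose the split $s=s(t)$ optimally, together with the trivial bound $\prob{D\ge t}\le\min\{1,B/t\}$ for small $t$, and integrate. The symmetric choice $s=t/2$ gives $\min\{1,4B^2/t^2\}$, whose integral is only $4B$. So the refinement must come from a better use of the structure: a split point depending on $t$, and exploiting that $\{Y\ge t\}$ itself already has probability at most $B/t$. A natural target is a bound of the form $\prob{D\ge t}\le \min\{1,\,c B^2/t^2\}$; integrating gives $2\sqrt c\,B$, so $c=16/9$ would yield exactly $\tfrac83 B$. This suggests the right combination is something like $\prob{D\ge t}\le \prob{Y\ge t}\cdot\prob{Y\ge \text{(a fraction of) }t}$ with carefully chosen fractions, possibly combined with a three-regime integral (constant, $B/t$, $B^2/t^2$).

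I expect this last step to be the main obstacle: getting the constant down to exactly $8/3$. Proving \emph{some} constant along these lines is routine. What I would try first is a sharper witness structure. The greedy construction above wastes up to $x_{j^*}$ in the overshoot, so I would let the first witness be $\{Y\ge t\}$ in a way that charges the overshoot to the maximal item. I would then optimize the resulting one-parameter family of integrals to see whether $8/3$ emerges.
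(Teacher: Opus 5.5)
Your disjoint-occurrence step is correct, and the stopping-time version makes it rigorous without BK. If $\tau$ is the first index at which the partial sum reaches $s$, the overshoot is at most $\max_j X_j$. So on $\{D\ge t\}$ the items after $\tau$ sum to more than $t-s$, and independence gives $\prob{D\ge t}\le\prob{Y\ge s}\,\prob{Y\ge t-s}$.

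The genuine gap is the constant, which is the entire content of the lemma: this inequality cannot produce $\frac83$. Your argument sees the distribution only through the envelope $\prob{Y\ge s}\le\min\{1,B/s\}$. For every split, the resulting product is at least $\min\{1,B/t,4B^2/t^2\}$:
if $s\le B$ it equals $\min\{1,B/(t-s)\}\ge\min\{1,B/t\}$;
if $s,t-s\ge B$ it equals $B^2/(s(t-s))\ge 4B^2/t^2$;
the case $t-s\le B$ is symmetric.
Hence no $t$-dependent choice of $s$, even combined with $\prob{D\ge t}\le\prob{Y\ge t}$, beats $\int_0^\infty\min\{1,B/t,4B^2/t^2\}\,dt=(2+\ln 4)B\approx 3.39B$. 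The symmetric split alone gives $4B$. Your target $\prob{D\ge t}\le\frac{16}{9}B^2/t^2$ cannot be reached by optimizing the split, and the ``sharper witness structure'' is never specified. What the proposal actually proves is the lemma with $\frac83$ replaced by $2+\ln 4$.

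The missing ingredient is the paper's averaging over a random partition. Your symmetric-split bound is exactly $\expect{D}\le\int_0^\infty\prob{Y\ge t/2}^2\,dt=2\,\expect{\min\{Y,Y'\}}$, with $Y'$ an independent copy of $Y$. The paper also finishes with $\expect{\min\{Y,Y'\}}=\int_0^\infty\prob{Y\ge t}^2\,dt\le 2B$, so the whole improvement is replacing your factor $2$ by $\frac43$.

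Your worst-case greedy split only guarantees each side half the level. The paper instead assigns each item to $A$ or $B$ by an independent fair sign $\epsilon_j$. Pointwise, $\min\{y,z\}=\frac12(\sum_j x_j-|\sum_j\epsilon_jx_j|)$. Write $\bar x=\max_j x_j$ and let $W$ be the signed sum of the other items. Averaging first over the sign of a largest item gives $\expect{|\sum_j\epsilon_jx_j|}=\expect{\max\{|W|,\bar x\}}\le\bar x+\expect{W^2}/(4\bar x)\le\bar x+\frac14(\sum_jx_j-\bar x)$. Hence $\expect[\epsilon]{\min\{y,z\}}\ge\frac38(\sum_jx_j-\bar x)$.

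For a fixed partition the sums $Y_A,Y_B$ are independent. Using tail integrals with $a^2+b^2\ge 2ab$, and then $\min\{Y,Y'\}\ge\min\{Y_A,Y_A'\}+\min\{Y_B,Y_B'\}$, one gets $2\expect{\min\{Y_A,Y_B\}}\le\expect{\min\{Y_A,Y_A'\}}+\expect{\min\{Y_B,Y_B'\}}\le\expect{\min\{Y,Y'\}}$. Together these give $\expect{D}\le\frac43\expect{\min\{Y,Y'\}}\le\frac83 B$.
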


The proof of \cref{lemma:bound_expectation_by_BREV} relies on the following two elementary lemmas. The first relates a realization of the random variables to the smaller sum obtained from a random partition of the items, while the second bounds the expected minimum of two independent nonnegative random variables by the posted-price revenue of their sum which corresponds exactly to the desired bundling expression.

\begin{lemma}\label{lemma:partition_into_two_sets}
    Let $x_1,x_2, \dots, x_m$ be arbitrary fixed nonnegative numbers. Independently assign each index $j\in [m]$ to exactly one of two sets $A$ and $B$ each with probability $\frac{1}{2}$ each, i.e., $A\cup B = [m], A\cap B = \emptyset$, and define
    \[y \coloneq \sum_{j \in A} x_j, \qquad z \coloneq \sum_{j \in B} x_j.\]
    Then,
    \begin{equation}
        \expect{\min\{y,z\}} \geq \frac{3}{8} \Big( \sum_{j=1}^m x_j - \max_{j \in [m]} x_j \Big).
    \end{equation}
\end{lemma}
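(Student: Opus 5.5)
Write $S\coloneq\sum_{j}x_j$ and introduce independent Rademacher signs $\varepsilon_j\in\{\pm1\}$, with $\varepsilon_j=+1$ meaning $j\in A$. Since $y+z=S$, we have $\min\{y,z\}=\frac12\left(S-\card{y-z}\right)$ and $y-z=D\coloneq\sum_j\varepsilon_j x_j$. Hence the claim is equivalent to
\[
\expect{\card{D}}\;\leq\; \frac14 S+\frac34\max_{j}x_j .
\]
Both sides are positively homogeneous and symmetric in the coordinates. We may therefore assume $x_1\geq x_2\geq\dots\geq x_m\geq 0$, so that the right-hand side becomes the linear function $\ell(\vec{x})=\frac14\sum_j x_j+\frac34 x_1$.

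The key structural observation is that $\Phi(\vec{x})\coloneq\expect{\card{\sum_j\varepsilon_j x_j}}$ is convex, being an average of absolute values of linear forms. The ordered region $C=\{x_1\geq\dots\geq x_m\geq0\}$ is a polyhedral cone. Its extreme rays are generated by the vectors $\vec{e}^{(k)}=(1,\dots,1,0,\dots,0)$ with $k$ ones, for $k=1,\dots,m$. Every $\vec{x}\in C$ is a nonnegative combination $\vec{x}=\sum_k (x_k-x_{k+1})\vec{e}^{(k)}$ (with $x_{m+1}=0$). Convexity together with positive homogeneity gives $\Phi$ sublinear, so
\[
\Phi(\vec{x})\leq\sum_k (x_k-x_{k+1})\,\Phi(\vec{e}^{(k)}),
\]
while $\ell$ is linear. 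It therefore suffices to check the inequality on each extreme ray, namely
\[
\expect{\card{\varepsilon_1+\dots+\varepsilon_k}}\;\leq\;\frac{k+3}{4}\qquad\text{for every } k\geq1 .
\]

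To verify this one-dimensional random-walk inequality, we split into two ranges. For large $k$, Jensen (Cauchy--Schwarz) gives $\expect{\card{\varepsilon_1+\dots+\varepsilon_k}}\leq\sqrt{k}$. Moreover $\frac{k+3}{4}-\sqrt k=\frac{(\sqrt k-1)(\sqrt k-3)}{4}\geq0$ whenever $k\geq 9$ (and $k=1$). For $2\leq k\leq 8$ we compute the exact values of $\expect{\card{\varepsilon_1+\dots+\varepsilon_k}}$, which are $1,\tfrac32,\tfrac32,\tfrac{15}{8},\tfrac{15}{8},\tfrac{35}{16},\tfrac{35}{16}$. These compare against $\frac{k+3}{4}=\tfrac54,\tfrac32,\tfrac74,2,\tfrac94,\tfrac52,\tfrac{11}4$, so the bound holds in every case. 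Equality occurs at $k=3$, which shows the constant $\frac38$ is tight (three equal values).

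The main obstacle is this finite range $k\leq 8$, where the crude $\sqrt k$ bound fails and exact binomial computations are needed. These are routine, using $\expect{\card{\varepsilon_1+\dots+\varepsilon_k}}=2^{-k}\sum_i\binom{k}{i}\card{k-2i}$, or the known closed form $k\binom{k-1}{\lfloor (k-1)/2\rfloor}2^{1-k}$. The only other point needing care is the sublinearity/extreme-ray reduction, which is standard.
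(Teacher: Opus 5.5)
Your proof is correct, and after the common first step it takes a different route from the paper. Both arguments encode the partition by random signs and reduce the claim to $\expect{\card{\sum_j \varepsilon_j x_j}}\le \frac14 S+\frac34\max_j x_j$.

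\textbf{The paper's route.} It isolates a largest coordinate $\bar x=x_{j^*}$ and averages over its sign alone, which gives $\expect{\card{D}}=\expect{\max\{\card{W},\bar x\}}$ with $W=\sum_{j\neq j^*}\varepsilon_j x_j$. It then concludes with two estimates:
\begin{itemize}
\item the pointwise bound $\max\{u-\bar x,0\}\le u^2/(4\bar x)$;
\item the second-moment identity $\expect{W^2}=\sum_{j\ne j^*}x_j^2\le \bar x(S-\bar x)$.
\end{itemize}
This is short, uniform in $m$, and involves no case analysis.

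\textbf{Your route.} You use subadditivity of $\Phi$ on the ordered cone; this also follows directly from the triangle inequality, without invoking convexity. That reduces everything to the extreme rays, i.e.\ to $k$ equal values. You then verify the scalar inequality $\expect{\card{\varepsilon_1+\dots+\varepsilon_k}}\le (k+3)/4$: Cauchy--Schwarz handles $k\ge 9$, and exact values handle $k\le 8$. Your exact values $1,\tfrac32,\tfrac32,\tfrac{15}{8},\tfrac{15}{8},\tfrac{35}{16},\tfrac{35}{16}$ are right.

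\textbf{Comparison.} Your method costs a finite case check. In return, the reduction is an equivalence, since the extreme rays lie in the cone. So it pinpoints the extremal instances and certifies that $\frac38$ cannot be improved, with equality at three equal values. The paper does not state this, although its own chain of inequalities is also tight at $x=(1,1,1)$. Your argument would likewise give the sharp constant for any other linear combination of $S$ and $\max_j x_j$ on the right-hand side. The paper's argument, by contrast, yields the bound more economically but says nothing about optimality.
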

\begin{proof}
    For this proof let's denote
    \[\bar{x} \coloneq \max_{j \in [m]} x_j \]
    and encode the random partition of numbers into two sets by independent signs
    \[\epsilon_j \in \{-1, +1\}, \qquad \prob{\epsilon_j=1}=\prob{\epsilon_j=-1} = \frac{1}{2},\]
    where
    \[\epsilon_j=1 \Longleftrightarrow j\in A, \qquad \epsilon_j=-1 \Longleftrightarrow j\in B.\]
    By this we can write 
    \begin{equation}\label{eq:def_partition_sum_and_difference}
        y-z = \sum_{j=1}^m \epsilon_j x_j, \; \text{and} \; y+z = \sum_{j=1}^m x_j,
    \end{equation}
    and using the identity $\min \{a, b\} = \frac{1}{2}(a+b - |a-b|)$ therefore
    \begin{equation}\label{eq:min_sum_valu_of_disjoint_sets}
        \min \{y, z\} = \frac{1}{2}\Big( \sum_{j=1}^m x_j - \Big| \sum_{j=1}^m \epsilon_j x_j \Big| \Big).
    \end{equation}
    First note, that if $\bar{x}=0$, all $x_j$ are zero and the inequality holds immediately. Thus, from now on, we assume $\bar{x}>0$.

    Now choose an index (amongst possibly many) $j^*$ with $x_{j^*} = \max_{j\in[m]} x_j$ and define
    \[W \coloneq \sum_{j\neq j^*} \epsilon_j x_j .\]
    Then $\sum_{j=1}^m \epsilon_j x_j = \epsilon_{j^*}\bar{x} + W$  and using the identity $\max \{|a|, |b|\} = \frac{1}{2}(|a+b| + |a-b|)$ we can write 
    \[\max \{ |W|, \bar{x} \} =\frac{1}{2} (|W+\bar{x}| + |W-\bar{x}| ). \]
    Now take one realization of all signs $\epsilon_j$ with $j\neq j^*$. For this realization $W$ takes some value $w$. The only remaining random sign is $\epsilon_{j^*}$. Hence,
    \[\Big| \sum_{j=1}^m \epsilon_j x_j \Big| = |w + \epsilon_{j^*} \bar{x}| \]
    has the two possible outcomes $|w + \bar{x}|$ and $|w - \bar{x}|$ each with probability $\frac{1}{2}$. Therefore, averaging only over the remaining random sign $\epsilon_{j^*}$, we obtain
    \[\frac{1}{2} (|w+\bar{x}| + |w-\bar{x}| ) \qquad (= \max \{ |w|, \bar{x} \}).\]
    Averaging also over all possible realizations of the other signs, we obtain
    \[\expect{ \Big| \sum_{j=1}^m \epsilon_j x_j \Big| } = \expect{\max \{|W|,\bar{x}\} } = \bar{x} + \expect{\max \{|W|-\bar{x}, 0\} }.\]

    Next, observe that for any real $u\ge 0$ it holds
    \begin{equation}\label{eq:2nd_binomal_formula_positive}
        \max \{u-\bar{x}, 0\} \le \frac{u^2}{4 \bar{x}} .
    \end{equation}
    This is direct if $u-\bar{x}\le0$ and otherwise by just multiplying out $(u-2\bar{x})^2$ which is nonnegative. 
    Further, the independence of the partition signs yields
    \begin{equation}\label{eq:independence_of_W}
        \expect{W^2} = \sum_{j\neq j^*} x_j^2 \le \bar{x} \sum_{j\neq j^*} x_j = \bar{x} \Big( \sum_{j=1}^m x_j - \bar{x} \Big).
    \end{equation}
    Combining the previous two inequalities and using \eqref{eq:2nd_binomal_formula_positive} for the first and \eqref{eq:independence_of_W} for the second inequality we obtain
    \begin{align*}
        \expect{ \Big| \sum_{j=1}^m \epsilon_j x_j \Big| } &= \bar{x} + \expect{\max \{|W|-\bar{x}, 0\} } \\
        &\leq \bar{x} + \frac{\expect{W^2}}{4 \bar{x}} \\
        &\leq \bar{x} + \frac{1}{4} \Big( \sum_{j=1}^m x_j - \bar{x} \Big).
    \end{align*}
    Finally, we use this exact inequality and apply it to \eqref{eq:min_sum_valu_of_disjoint_sets} to obtain
    \begin{align*}
        \min \{y, z\} &= \frac{1}{2}\Big( \sum_{j=1}^m x_j - \Big| \sum_{j=1}^m \epsilon_j x_j \Big| \Big) \\
        &=\frac{1}{2}\Big( \sum_{j=1}^m x_j - \bar{x} + \bar{x} - \Big| \sum_{j=1}^m \epsilon_j x_j \Big| \Big) \\
        &\geq \frac{1}{2}\left[ \sum_{j=1}^m x_j - \bar{x}  - \frac{1}{4} \Big( \sum_{j=1}^m x_j - \bar{x} \Big) \right] \\
        &= \frac{3}{8} \Big( \sum_{j=1}^m x_j - \bar{x} \Big).
    \end{align*}
\end{proof}

The following elementary lemma connects the previous estimation to the expected revenue of a bundling mechanism.

\begin{lemma}\label{lemma:inequality_minXY_to_BREV}
    Let $Y$ and $Z$ be independent  random variables in $\R_+$. Then
    \begin{equation}
        \expect{\min\{Y,Z\}} \leq \rev(Y + Z).
    \end{equation}
\end{lemma}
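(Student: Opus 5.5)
The plan is to exhibit, for a suitable price, a posted-price revenue for $Y+Z$ that dominates $\expect{\min\{Y,Z\}}$. We cannot hope for a single price to work directly, so we average over prices using the layer-cake formula. For nonnegative $W=\min\{Y,Z\}$ we write
\[
\expect{\min\{Y,Z\}}=\int_0^\infty \prob{Y\geq t}\,\prob{Z\geq t}\,dt ,
\]
using independence to factor $\prob{\min\{Y,Z\}\geq t}$.

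The key observation is that if both $Y\geq t$ and $Z\geq s$, then $Y+Z\geq t+s$. With $q_Y(t)\coloneq\prob{Y\geq t}$ and $q_Z(s)\coloneq\prob{Z\geq s}$, independence gives
\[
(t+s)\,q_Y(t)\,q_Z(s)\leq (t+s)\,\prob{Y+Z\geq t+s}\leq \rev(Y+Z)
\]
by~\eqref{eq:monopoly-revenue}. Taking $t=s$, we get $2t\,q_Y(t)q_Z(t)\leq \rev(Y+Z)$, i.e.\ $q_Y(t)q_Z(t)\leq \rev(Y+Z)/(2t)$. Integrating $1/t$ diverges, however, so the pointwise bound alone is insufficient, and this is the main obstacle.

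To fix it, we avoid integrating the pointwise bound and instead compare the integral with the single-variable revenues of the factors. Denote $B\coloneq\rev(Y+Z)$. Since $Y+Z\geq Y$ and $Y+Z\geq Z$, monotonicity gives $\rev(Y),\rev(Z)\leq B$, so $q_Y(t)\leq B/t$ and $q_Z(t)\leq B/t$, and also $q\leq 1$. Even combining these, $\int\min\{1,B/t\}\cdot\min\{1,B/(2t)\cdot\ldots\}$ is only of order $B$ with a constant, and we need constant exactly $1$. A cleaner route is the following. Let $T\coloneq\min\{Y,Z\}$ and use the price $p$ on the bundle. Then $\prob{Y+Z\geq p}\geq \prob{T\geq p/2}$ is too lossy as well. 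So instead use the \emph{conditional} argument: fix $Z=z$. The conditional expectation is $\expect{\min\{Y,z\}}=\int_0^z q_Y(t)\,dt$.

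Plan B, which I commit to, is as follows. By independence, $\expect{\min\{Y,Z\}}=\int_0^\infty q_Y(t)q_Z(t)\,dt$, and we bound $\rev(Y+Z)$ from below by a randomized price argument: choosing a random price $P$ with any distribution gives $\rev(Y+Z)\geq \expect{P\,\prob{Y+Z\geq P}}$. We aim for a density $\pi$ on prices with $\int_0^\infty p\,\pi(p)\,\prob{Y+Z\geq p}\,dp\geq \int_0^\infty q_Y(t)q_Z(t)\,dt$. The difficulty is that no fixed $\pi$ works for all distributions. Therefore, as a final fallback, use the case split on the threshold $\tau$ where $q_Y(\tau)q_Z(\tau)$ is compared against $B/(2\tau)$ and $\min\{q_Y,q_Z\}\le B/t$, aiming to show that $\int_0^\infty \min\{q_Y q_Z, \ldots\}\,dt\leq B$. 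I expect this constant-$1$ tightness to be the hard step, possibly needing the sharper observation $t\,q_Y(t)q_Z(s)+s\,q_Y(t)q_Z(s)\le B$ applied with $s$ chosen as a quantile matching point.
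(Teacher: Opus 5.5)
\textbf{There is a genuine gap: the constant-$1$ step is never proved.} You correctly reduce to $\expect{\min\{Y,Z\}}=\int_0^\infty q_Y(t)q_Z(t)\,dt$. You also correctly observe that the pointwise bound $2t\,q_Y(t)q_Z(t)\le B$ is not integrable. But every later plan is either abandoned or left as an aspiration.

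Moreover, the fallback you end on cannot work. The inequalities you actually have are $(t+s)\,q_Y(t)q_Z(s)\le B$ for all $t,s\ge 0$, which uses one rectangle $\{Y\ge t,\,Z\ge s\}$ inside $\{Y+Z\ge t+s\}$, and $q_Y(t),q_Z(t)\le \min\{1,B/t\}$. All of these hold with $B=2$ for $q_Y=q_Z=q$, $q(t)=\min\{1,2/(t+1)\}$. Indeed, for $t,s\ge 1$ the rectangle inequality reduces to $(t-1)(s-1)\ge 0$, and the other cases are immediate. Yet
\[
\int_0^\infty q(t)^2\,dt = 1+\int_1^\infty \frac{4}{(t+1)^2}\,dt = 3 = \tfrac{3}{2}B .
\]
So no case split or quantile-matching choice of $s$ built only on these relaxations can prove anything better than constant $3/2$. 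You need an argument that uses the whole event $\{Y+Z\ge p\}$ rather than a single product rectangle inside it.

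\textbf{The missing idea is symmetrization with independent copies.} Take independent copies $Y',Z'$ of $Y,Z$ and set $S=Y+Z$, $S'=Y'+Z'$. Pointwise, $\min\{S,S'\}\ge \min\{Y,Y'\}+\min\{Z,Z'\}$. Combining this with AM--GM on the tails gives
\[
2\,\expect{\min\{Y,Z\}}=\int_0^\infty 2q_Yq_Z\,dt\le\int_0^\infty \bigl(q_Y^2+q_Z^2\bigr)\,dt=\expect{\min\{Y,Y'\}}+\expect{\min\{Z,Z'\}}\le \expect{\min\{S,S'\}} .
\]
Then, using $\prob{S\ge t}\le\min\{1,B/t\}$,
\[
\expect{\min\{S,S'\}}=\int_0^\infty \prob{S\ge t}^2\,dt\le\int_0^\infty \min\{1,B/t\}^2\,dt=2B,
\]
which gives $\expect{\min\{Y,Z\}}\le B$.

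Squaring the tail is what makes the $B/t$ bound integrable. The factor $2$ this costs is recovered exactly by the AM--GM step. The superadditivity of $\min$ over sums is what lets the argument see all of $\{S\ge t\}$ at once, which no single-rectangle bound can do.
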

\begin{proof}
    Let $Y^\prime$ and $Z^\prime$ be independent copies of $Y$ and $Z$, independent also of each other and of $Y,Z$. Furthermore, define $S\coloneq Y+Z$ and $S^\prime \coloneq Y^\prime + Z^\prime$. Pointwise, i.e., for every fixed realization, it holds
    \[\min \{ S,S^\prime \} \geq \min \{ Y,Y^\prime \} + \min \{ Z,Z^\prime \}.\]
    For a nonnegative random variable $U$, we will make use of the tail-integral identity
    \[\expect{U} = \int_0^\infty \prob{U \geq t} \, dt.\]
    We make use of this identity, the fact that $Y^\prime$ and $Z^\prime$ are identical copies of $Y$ and $Z$, their independence, as well as the binomial formula $a^2+b^2 \geq 2ab$ and obtain
    \begin{align*}
        \expect{\min \{ Y,Y^\prime \}} + \expect{\min \{ Z,Z^\prime \}} &= \int_0^\infty \prob{\min \{ Y,Y^\prime \} \geq t} \, dt + \int_0^\infty \prob{\min \{ Z,Z^\prime \} \geq t} \, dt \\
        &= \int_0^\infty \prob{Y \geq t}^2 + \prob{Z \geq t}^2 \, dt \\
        &\geq 2 \int_0^\infty \prob{Y\geq t} \,\prob{Z \geq t} \, dt \\
        &= 2 \expect{\min \{Y,Z\}}.
    \end{align*}
    Together, this yields
    \begin{equation}
        2 \expect{\min \{Y,Z\}} \leq \expect{\min \{ S,S^\prime \}}.
    \end{equation}

    Let $\rev(S) = \sup_{t\geq 0} t \prob{S \geq t}$. In the case of $\rev(S) \in \{0, +\infty \}$ the inequality is immediate, thus, we assume that $0<\rev(S)<\infty$ from now on. For any $t > 0$ it holds $t \prob{S \ge t} \leq \rev(S)$ and therefore
    \[\prob{S \ge t} \leq \min\sset{1,\frac{\rev(S)}{t}},\]
    where the one on the right hand side is just a trivial upper bound on the probability expression on the left hand side. Observe that the two expressions inside the minimum operator meet exactly at $t=\rev(S)$, i.e., for smaller $t$ than this threshold the minimum is attained by $1$ and for larger $t$ it is $\frac{\rev(S)}{t}$.
    Finally, since $S$ and $S^\prime$ are identical, independent copies, and $\rev(S)$ is a positive real, we make use of the previous inequality and obtain
    \begin{align*}
        \expect{\min \{ S,S^\prime \}} &= \int_0^\infty \prob{S \geq t}^2 \, dt\\
        &\leq \int_0^{\rev(S)} 1\, dt + \int_{\rev(S)}^\infty \frac{\rev(S)^2}{t^2} \, dt = 2 \rev(S)
    \end{align*}
    which in total yields the desired statement
    \begin{equation}
        \expect{\min \{ Y,Z \}} \leq \rev(S).
    \end{equation}
\end{proof}

Now we use the previous two lemmata to prove \cref{lemma:bound_expectation_by_BREV}.
\begin{proof}[Proof of~\cref{lemma:bound_expectation_by_BREV}.]
    Applying \cref{lemma:partition_into_two_sets} pointwise to every realization $(x_1,\dots,x_m)$ of the random variables $(X_1,\dots,X_m)$, letting $\epsilon$ denote the independent random signs encoding the partition, gives
    \[ \frac{3}{8} \expect[\vec{X}]{ \sum_{j=1}^m X_j-\max_{j\in[m]} X_j } \leq \expect[\vec{X}]{ \expect[\epsilon]{ \min \left \{ \sum_{j: \epsilon_j = 1}X_j,\sum_{j: \epsilon_j = -1}X_j\right\} }}. \]
    Because the random sign vector $\epsilon$ is independent of $\vec{X}$ we may exchange the expectations and obtain
    \[ \expect[\vec{X}]{ \expect[\epsilon]{ \min \left \{ \sum_{j: \epsilon_j = 1}X_j,\sum_{j: \epsilon_j = -1}X_j\right\} }} = \expect[\epsilon]{ \expect[\vec{X}]{ \min \left \{ \sum_{j: \epsilon_j = 1}X_j,\sum_{j: \epsilon_j = -1}X_j\right\} }}. \]
    We then apply \cref{lemma:inequality_minXY_to_BREV} to the inner expectation, as the two sums of independent variables are independent as well, and get
    \[ \expect[\vec{X}]{ \min \left \{ \sum_{j: \epsilon_j = 1}X_j,\sum_{j: \epsilon_j = -1}X_j\right\}} leq \rev\Big(\sum_{j = 1}^m X_j \Big) = \brev(\vec{X}). \]
    Since this bound holds for every partition, i.e., every realization of $\epsilon$, averaging over $\epsilon$ yields
    $$ \frac{3}{8} \expect{\sum_{j=1}^m X_j-\max_{j\in[m]} X_j } \le \brev(X_1,\dots,X_m). $$
    Multiplying both sides by $\frac{8}{3}$ proves \cref{lemma:bound_expectation_by_BREV}.
\end{proof}

\begin{proposition}
    \label{prop:upper_bound_m_ind_or_iid_brev}
    Let $X_1,X_2,\dots,X_m$ be independent random variables in the unit $K$-grid $I_K$, for some positive integer $K$. Then 
    \begin{equation}
        \label{eq:brev_upper_bound_m_ind}
    \rev(\vec{X}) \leq \left(m + \frac{8}{3} \right) \, \brev(\vec{X}).
    \end{equation}
\end{proposition}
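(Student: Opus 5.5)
The bound follows by combining the master inequality \eqref{eq:rev_master_inequality} of \cref{lemma:rev_upper_bound} with \cref{lemma:bound_expectation_by_BREV}. The second lemma already controls the expectation term by $\frac{8}{3}\brev(\vec{X})$. What remains is to bound the $\srev(\vec{X})$ term by $m\cdot\brev(\vec{X})$, and it suffices to show $\rev(X_j)\leq \brev(\vec{X})$ for every single item $j\in[m]$.

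This single-item comparison is elementary. Fix $j$ and any posted price $r\geq 0$. All values are nonnegative, so $X_1+\dots+X_m\geq X_j$ pointwise. Hence $\prob{X_1+\dots+X_m\geq r}\geq \prob{X_j\geq r}$, and therefore
\[
r\,\prob{X_j\geq r}\leq r\,\prob{\textstyle\sum_{i=1}^m X_i\geq r}\leq \brev(\vec{X}).
\]
Taking the supremum over $r$ in \eqref{eq:monopoly-revenue} gives $\rev(X_j)\leq\brev(\vec{X})$. This is exactly the monotonicity of $\rev$ for one-dimensional variables under pointwise domination, already implicit in the proof of \cref{lemma:approx-revenue-lipschitz-general}. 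Summing over $j$ then yields $\srev(\vec{X})=\sum_{j}\rev(X_j)\leq m\,\brev(\vec{X})$ by \eqref{eq:srev-def}.

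Putting the pieces together,
\[
\rev(\vec{X})\leq \srev(\vec{X})+\expect{\textstyle\sum_{j=1}^m X_j-\max_{j\in[m]}X_j}\leq m\,\brev(\vec{X})+\tfrac{8}{3}\brev(\vec{X}).
\]
The first inequality is \cref{lemma:rev_upper_bound}, which applies because the $X_j$ are independent on the grid $I_K$. The second combines the bound just proved with \cref{lemma:bound_expectation_by_BREV}, which needs only independence and nonnegativity.

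There is no real obstacle, since all the heavy lifting sits in the two cited results. The only point to check is that the hypotheses match: both lemmas require independence, and the grid assumption is needed only for the master inequality.
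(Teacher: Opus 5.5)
Your proposal is correct and follows the paper's proof: the master inequality \eqref{eq:rev_master_inequality}, the $\frac{8}{3}\,\brev(\vec{X})$ bound of \cref{lemma:bound_expectation_by_BREV}, and $\srev(\vec{X})\le m\,\brev(\vec{X})$. The only difference is that the paper cites Hart and Nisan for $\srev\le m\,\brev$. You instead prove it directly from $\rev(X_j)\le\brev(\vec{X})$ via pointwise domination, which is valid and does not even need independence.
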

\begin{proof}
    This follows combining \cref{lemma:rev_upper_bound} and \cref{lemma:bound_expectation_by_BREV} with a bound for 
    \[\srev (\vec{X}) \leq C(m) \; \brev (\vec{X}) .\]
    If items are independent we know from~\textcite{Hart:2017aa} that $C(m)=m$ holds, which yields the result.
\end{proof}

\subsection{The Best of Selling Separately and in a Grand Bundle}\label{sec:best_selling_separately_bundle}

From the previous section we immediately get the following result.

\begin{proposition}
    \label{prop:upper_bound_m_ind_srev_brev}
    Let $X_1,X_2,\dots,X_m$ be independent random variables in the unit $K$-grid $I_K$, for some positive integer $K$. Then 
    \begin{equation}
    \rev(\vec{X}) \leq \srev(\vec{X}) + \frac{8}{3} \, \brev(\vec{X}).
    \end{equation}
\end{proposition}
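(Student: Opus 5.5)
This follows by chaining two results already established, with no new construction. Start from the master inequality~\eqref{eq:rev_master_inequality} of \cref{lemma:rev_upper_bound}. It applies because $X_1,\dots,X_m$ are independent and supported on the unit $K$-grid $I_K$, and it gives
\[
\rev(\vec{X}) \leq \srev(\vec{X}) + \expect{\sum_{j=1}^m X_j - \max_{j\in[m]} X_j }.
\]
The only remaining task is to control the expectation term by the bundling revenue.

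For that, invoke \cref{lemma:bound_expectation_by_BREV}, inequality~\eqref{eq:8over3_brev_bound}. Its hypotheses require only independent nonnegative random variables, which is weaker than what is assumed here, since grid values lie in $[0,1]$. It yields
\[
\expect{\sum_{j=1}^m X_j - \max_{j\in[m]} X_j } \leq \tfrac{8}{3}\,\brev(\vec{X}).
\]
Substituting this into the master inequality gives exactly $\rev(\vec{X}) \leq \srev(\vec{X}) + \frac{8}{3}\brev(\vec{X})$.

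There is essentially no obstacle. The substantive work sits in the two cited results: the Lagrangian dual construction behind \cref{lemma:rev_upper_bound}, and the random-partition argument (\cref{lemma:partition_into_two_sets}) combined with \cref{lemma:inequality_minXY_to_BREV}. The only care needed is to confirm that the notion of $\brev$ in \cref{lemma:bound_expectation_by_BREV} matches definition~\eqref{eq:brev-def}, namely $\rev(X_1+\dots+X_m)$ computed via~\eqref{eq:monopoly-revenue}. It does: that lemma applies \cref{lemma:inequality_minXY_to_BREV} to the two independent partial sums, whose total is the full bundle sum.
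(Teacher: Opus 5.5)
Your proposal is correct and matches the paper's proof: the paper also substitutes the bound $\frac{8}{3}\brev(\vec{X})$ from \cref{lemma:bound_expectation_by_BREV} into the master inequality \eqref{eq:rev_master_inequality} of \cref{lemma:rev_upper_bound}. Your additional checks, that independence and nonnegativity hold for grid-valued variables and that $\brev$ there agrees with \eqref{eq:brev-def}, are appropriate.
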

\begin{proof}
    This follows directly combining \cref{lemma:rev_upper_bound} and \cref{lemma:bound_expectation_by_BREV}.
\end{proof}

In this section we improve the resulting overall bound $\rev \leq \frac{11}{3} \max \{ \srev,\brev\}$ further. The recent publication of \cite{cai2026improvedrevenueguaranteesselling} pointed us into the direction of bounding the expectation term by an integral bound over a function of tail probabilities of items' distributions. We use this as the starting point of \cref{lemma:bound_expectation_by_linear_SREV_BREV}, but from this point onward our approach differs substantially: since our goal is a bound given by a linear combination of $\srev(\vec X)$ and $\brev(\vec X)$, we introduce the two quantities in separate steps. By contrast, \textcite{cai2026improvedrevenueguaranteesselling} take a second-moment method route to directly obtain a bound of the better of the two simple mechanisms.

\begin{lemma}\label{lemma:bound_expectation_by_linear_SREV_BREV}
    Let $X_1, X_2, \dots, X_m$ be independent nonnegative random variables. Then
    \begin{equation}\label{eq:linear_srev_brev_bound}
        \expect{\sum_{j=1}^m X_j - \max_j X_j }  \leq \frac{1}{2} \srev(\vec{X}) + \frac{3}{2} \brev(\vec{X}).
    \end{equation}
\end{lemma}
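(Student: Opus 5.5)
The plan is to start from the tail-integral representation already used in the proof of \cref{lemma:bound_expectation_by_SREV_ind}, namely \eqref{eq:expectation_equals_integral_over_expectation} together with \eqref{eq:tail_sum_and_product}:
\[
\expect{\sum_{j=1}^m X_j-\max_j X_j}=\int_0^\infty \varphi(t)\,dt,\qquad \varphi(t)\coloneq \expect{\max\{N(t)-1,0\}}=\sum_{k\ge 2}\prob{N(t)\ge k}.
\]
The aim is to bound $\varphi(t)$ pointwise in $t$ by a combination of two kinds of terms: one controlled by the item tails $q_j(t)\le S_j/t$, which produces the $\srev$ part, and one controlled by the bundle tail $\prob{\sum_j X_j\ge s}\le \brev(\vec X)/s$, which produces the $\brev$ part. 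Write $S=\srev(\vec X)$, $B=\brev(\vec X)$ and $Q(t)=\sum_j q_j(t)$.

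\textbf{Step 1: the $\srev$ side.} Since $\prod_j(1-q_j)\le e^{-Q}$, we get $\varphi(t)\le Q-1+e^{-Q}\le \min\{Q,\,Q^2/2\}$. The same expression also controls $\prob{N(t)\ge 2}$ from above, using independence and a second-moment estimate $\prob{N\ge2}\le \sum_{i<j}q_iq_j\le Q^2/2$.

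\textbf{Step 2: the $\brev$ side.} On $\{N(t)\ge k\}$ the sum satisfies $\sum_jX_j\ge kt$, so $\prob{N(t)\ge k}\le \min\{1,B/(kt)\}$. The crucial case is $k=2$, which gives $\prob{N(t)\ge2}\le B/(2t)$.

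\textbf{Step 3: combine pointwise.} Decompose $\varphi=\prob{N\ge 2}+\expect{\max\{N-2,0\}}$. The second piece is handled like the first after conditioning on the top item, through the identity $\expect{(N-2)^+}\le \tfrac12\expect{(N-1)^+\,\mathbf 1[N\ge2]}$-type recursions, or more simply through the pointwise inequality $(N-1)^+\le \tfrac12 N\,\mathbf 1[N\ge 2]+\tfrac12(N-2)^+$, iterated. The target is a pointwise inequality of the form
\[
\varphi(t)\le \tfrac12\,\min\Big\{1,\tfrac{S}{t}\Big\}\cdot \theta(t)+ \tfrac32\cdot(\text{bundle tail at level }\approx t),
\]
where the first factor comes from the expected number of items above $t$ among the items other than the maximum. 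The threshold where one switches from the trivial bound $1$ to $S/t$ or $B/t$ is then chosen so that $\int_0^\infty$ of the $\srev$ part gives exactly $S/2$ and the $\brev$ part gives exactly $\tfrac32 B$. A useful check is that $\int_0^\infty\min\{1,c/t\}^2dt=2c$, which is what produced the constant in \cref{lemma:inequality_minXY_to_BREV}. I would therefore try to write $\varphi(t)$ as a sum of a product term $q_i(t)\,\prob{\text{others}\ge t}$, to be integrated with AM--GM in the spirit of that lemma, and a bundle term.

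\textbf{Main obstacle.} I expect the hardest part to be Step 3: finding the right pointwise split of $\varphi(t)$ so that the coefficients come out as exactly $\tfrac12$ and $\tfrac32$ rather than something like $1$ and $\tfrac83$. My first attempt would be to split the integral at a threshold $\tau$ proportional to $B$ (or to $S$). For $t\ge\tau$ I would use the quadratic bound $\varphi\le Q^2/2\le S^2/(2t^2)$. For $t<\tau$ I would use the bundle bound $\varphi(t)\le\sum_{k\ge2}\min\{1,B/(kt)\}$, truncated by $N\le$ (sum)$/t$. I would then optimize $\tau$, using $AB\le \tfrac14(A+B)^2$-type balancing, to get a linear expression in $S$ and $B$. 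If a single cut does not produce linearity, I would instead bound $\min\{x,y\}\le \lambda x+(1-\lambda)y$ pointwise with $\lambda=1/4$ applied to the two available bounds on $\varphi(t)$.
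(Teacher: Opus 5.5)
\textbf{The gap is Step~3, and it cannot be closed along the lines you propose.} Any argument that bounds $\varphi(t)$ at each level $t$ by a quantity depending only on $\srev$, $\brev$ and $t$ necessarily loses a factor $\ln m$. That covers your threshold split, the pointwise minimum of your two bounds, and the convex combination $\tfrac14x+\tfrac34y\ge\min\{x,y\}$, so no choice of $\tau$ or $\lambda$ helps.

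To see the loss, fix $t\in[1/m,1]$ and take $m$ iid items with $X_j=t$ with probability $q=1/(mt)$ and $X_j=0$ otherwise.
Then $\srev(\vec X)=mqt=1$.
Since $\sum_jX_j=t\,N(t)$ and $\expect{N(t)}=1/t$, Markov's inequality gives $\brev(\vec X)=t\sup_k k\,\prob{N(t)\ge k}\le 1$.
Yet $\varphi(t)=\expect{N(t)}-1+\prob{N(t)=0}\ge 1/t-1$.
So any valid pointwise bound $\Phi(\srev,\brev,t)$ must satisfy $\int_{1/m}^{1}\Phi(1,1,t)\,dt\ge \ln m-1$.
Your own Step~2 bound shows the same defect: $\sum_{k=2}^m\min\{1,B/(kt)\}$ is a truncated harmonic sum, of order $(B/t)(1+\ln(mt/B))$ for $B/m\le t\le B$.

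Replacing it by ``the bundle tail at level $\approx t$'' does not help either. Indeed $\int_0^\infty\prob{\sum_jX_j\ge ct}\,dt=\expect{\sum_jX_j}/c$, which is not $O(\brev)$. For two iid equal-revenue items truncated at $L$, $\expect{X_1+X_2}=2+2\ln L$. Meanwhile $\brev\le 2\srev=4$, by the union bound $\prob{X_1+X_2\ge p}\le\prob{X_1\ge p/2}+\prob{X_2\ge p/2}$.

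\textbf{The missing idea is a way to use both benchmarks \emph{globally}, coupling all thresholds at once.} The paper keeps your Step~1 bound, $\varphi(t)\le K(t)-1+e^{-K(t)}$ with $K(t)=\sum_j\prob{X_j>t}$.
It then proves the integrated inequality $\int_0^\infty(K-1+e^{-K})\,dt\le\int_0^\infty e^{-\lambda t}K(t)\,dt$ with $\lambda=5/(6\srev)$ (\cref{lemma:exponential_capping}). This uses only that $K$ is nonincreasing with $tK(t)\le\srev$, and it fails pointwise for large $t$.
The right-hand side equals $\lambda^{-1}\sum_j\expect{1-e^{-\lambda X_j}}$.
Independence and $x\le e^{x-1}$ give $\expect{e^{-\lambda\sum_jX_j}}=\prod_j\expect{e^{-\lambda X_j}}\le\exp\bigl(-\lambda\int_0^\infty e^{-\lambda t}K(t)\,dt\bigr)$.
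Now $\brev$ enters through a Laplace transform rather than a tail at a single level. Specifically, $\sum_jX_j$ is stochastically dominated by $\brev\cdot Y$ with $Y$ equal-revenue on $[1,\infty)$. A Jensen argument with the density $160/(1+x)^6$ (\cref{lemma:exponential_integral_lower_bound}) gives $\expect{e^{-zY}}\ge e^{-5/12-3z/2}$.
Taking logarithms yields $\int_0^\infty e^{-\lambda t}K(t)\,dt\le \frac{5}{12\lambda}+\frac32\brev=\frac12\srev+\frac32\brev$.
So the constants come from $\frac{5}{12}\cdot\frac{6}{5}=\frac12$ and from the mean $\frac32$ of that auxiliary density, not from balancing a threshold. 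Your plan contains neither the exponential weighting nor the factorization of the bundle's Laplace transform via independence, and without something of this kind the $\ln m$ loss cannot be avoided.
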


Before proving \cref{lemma:bound_expectation_by_linear_SREV_BREV} we state two auxiliary lemmas. \cref{lemma:exponential_capping} reduces the considered expression to an exponentially weighted integral, allowing us to introduce $\srev(\vec X)$, while \cref{lemma:exponential_integral_lower_bound} let's us incorporate $\brev(\vec X)$ through a stochastic dominance argument.

\begin{lemma}\label{lemma:exponential_capping}
    Let $C>0$. If $K:(0,\infty)\map[0, \infty)$ is nonincreasing, $\int_0^1 K(t) dt < \infty$, and $tK(t)\leq C$, then
    \begin{equation}
        \int_0^\infty (K(t)-1+e^{-K(t)}) \, dt \leq \int_0^\infty e^{-5t/(6C)}K(t) \, dt.
    \end{equation}
\end{lemma}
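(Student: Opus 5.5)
The plan is to normalize $C=1$ and then rewrite both sides by a layer-cake decomposition in the \emph{value} variable rather than in $t$. Pointwise in $t$ the inequality fails for small values of $K(t)$, so monotonicity of $K$ has to be used in an integrated form.

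\emph{Normalization.} Substituting $t=Cs$ and $\tilde K(s)\coloneq C\,K(Cs)$ does not preserve the integrand $K-1+e^{-K}$. So instead of scaling I keep $C$ and only note that the constraint reads $K(t)\le C/t$. To simplify notation I will nevertheless present the computation as if $C=1$; the general case is identical with $1/s$ replaced by $C/s$ and $5/6$ by $5/(6C)$.

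\emph{Layer-cake step.} Let $\varphi(k)\coloneq k-1+e^{-k}$, so that $\varphi(0)=0$ and $\varphi'(s)=1-e^{-s}\ge 0$. Since $K$ is nonincreasing, for each level $s>0$ the set $\{t: K(t)>s\}$ is an interval $(0,T(s))$. Here $T$ is nonincreasing in $s$, and the assumption $tK(t)\le C$ gives $T(s)\le C/s$. By Fubini,
\[
\int_0^\infty \varphi(K(t))\,dt=\int_0^\infty (1-e^{-s})\,T(s)\,ds,
\qquad
\int_0^\infty e^{-5t/(6C)}K(t)\,dt=\int_0^\infty \tfrac{6C}{5}\bigl(1-e^{-5T(s)/(6C)}\bigr)\,ds .
\]
So it suffices to show that $\Phi(T)\coloneq\int_0^\infty D(s,T(s))\,ds\ge 0$, where
\[
D(s,\tau)\coloneq \tfrac{6C}{5}\bigl(1-e^{-5\tau/(6C)}\bigr)-(1-e^{-s})\tau .
\]
The hypothesis $\int_0^1K<\infty$ becomes $\int_0^\infty \min\{T(s),1\}\,ds<\infty$. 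This is what rules out the degenerate profile $T(s)=C/s$ for all $s$.

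\emph{Why a pointwise argument is not enough, and the extreme-point reduction.} $D(s,\cdot)$ is concave with $D(s,0)=0$. Its slope at $0$ is $e^{-s}\ge 0$, so $D(s,\tau)\ge 0$ holds unless $\tau$ is close to the cap $C/s$ with $s$ large. A check at $\tau=C/s$ shows that $D(s,C/s)<0$ for large $s$. Hence the negative contributions at large levels must be paid for by positive contributions at small levels, and this is exactly where monotonicity of $T$ enters.

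Since $\Phi$ is concave in $T$, its minimum over the convex set of admissible $T$ is attained at extreme profiles. Admissible means: nonincreasing, $0\le T(s)\le C/s$, and the integrability condition. I expect these extreme profiles to be of the two-parameter form
\[
T_{a,b}(s)=\min\{a,\,C/s\}\cdot\mathbf 1[s<b],
\]
with $a>0$ and $b<\infty$. The truncation at $b$ reflects the integrability constraint; a limiting argument will be needed to justify it. Then $\Phi(T_{a,b})$ is an explicit integral in $s$, and I would verify $\Phi(T_{a,b})\ge 0$ directly. On $s<C/a$ the constant part $\tau=a$ yields $\int_0^{C/a}D(s,a)\,ds$, which should supply the slack. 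On $C/a\le s<b$ one has $\tau=C/s$, giving the potentially negative tail $\int_{C/a}^{b}D(s,C/s)\,ds$.

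\emph{Main obstacle.} The hardest step is this final two-parameter calculus inequality. It is presumably where the constant $5/6$ is tuned, and it may well be tight for some $(a,b)$. I would first show that $\Phi(T_{a,b})$ is monotone in $b$ beyond the point where $D(s,C/s)$ changes sign, so that the worst case is $b\to\infty$. Convergence for $b\to\infty$ has to be checked, since $D(s,C/s)$ decays like $O(1/s^2)$ when $(1-e^{-s})C/s$ and $\tfrac{6C}{5}(1-e^{-5/(6s)})$ are expanded. This leaves a one-variable inequality in $a/C$, which I would verify by expanding $e^{-x}$ and bounding remainders, with a numerical check of the critical region.

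If the extreme-point reduction turns out to be delicate, the fallback is a direct rearrangement argument. One moves mass of $T$ from large levels down to the first level at which the cap $C/s$ binds, and shows that this does not increase $\Phi$.
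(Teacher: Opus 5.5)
\textbf{The gap is the reduction to the two-parameter profiles $T_{a,b}$.}

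Your layer-cake reformulation is correct. It is in fact the paper's own object seen from the other side. Take $C=1$, which is without loss of generality after all: the substitution $\bar K(u)=K(Cu)$ rescales only the argument. Then $\Phi(T)=\tfrac65\iint_{\mathcal D}(e^{-x}+e^{-y}-1)\,dx\,dy$ with $\mathcal D=\{(x,y)>0:\ x<\tfrac56T(y)\}$, which is a downward-closed subset of $H=\{xy<\tfrac56\}$. You slice this area integral horizontally, where the paper slices it vertically.

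The extreme points of the set of nonincreasing $T$ with $0\le T(s)\le C/s$ are not the profiles $T_{a,b}$. Every staircase whose constant pieces touch the cap at their right endpoints is extreme. For example, take $T=\tfrac{C}{s_1}\mathbf 1[s<s_1]+\tfrac{C}{s_2}\mathbf 1[s_1\le s<s_2]$ with $s_1<s_2$. If $T=\tfrac12(T_1+T_2)$, the difference $T_1-T$ must be constant on each piece to keep both $T_1$ and $T_2$ monotone. The cap at the right endpoint then forces that constant to be $0$. This $T$ is not of the form $\min\{a,C/s\}\mathbf 1[s<b]$.

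So concavity of $\Phi$ only reduces you to an infinite-dimensional family of staircases. Verifying $\Phi(T_{a,b})\ge 0$, which you have not carried out either, would not prove the lemma. Relaxing monotonicity level by level also fails, since it gives $\int_0^\infty\min\{0,D(s,C/s)\}\,ds<0$. Your fallback ``rearrangement'' is not specified. Because $\partial_\tau D(s,\tau)=e^{-5\tau/(6C)}-(1-e^{-s})$ changes sign, moving mass of $T$ between levels has no definite effect on $\Phi$.

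\textbf{What is missing is a way to retain just enough of the monotonicity.} The paper does this by symmetrizing. Since $e^{-x}+e^{-y}-1$ is symmetric, $\iint_{\mathcal D}=\tfrac12\bigl(\iint_{\mathcal D\cap\mathcal D^\top}+\iint_{\mathcal D\cup\mathcal D^\top}\bigr)$. Both sets on the right are symmetric downward-closed subsets of $H$, so it suffices to treat those.

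Such a set is a square $(0,a)^2$ plus two mirror-image arms. Downward closedness is used only to guarantee the square. For each fixed $x<a$, the arm integral is concave in the arm length. It is therefore at least $\min\{0,\beta_a(x)\}$, corresponding to an arm of length zero or one reaching the hyperbola. This gives a one-parameter lower bound $J(a)$.

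The paper shows $J(a)\ge 0$ in two regimes. For $a\le\tfrac12$ it uses explicit Taylor estimates. For $\tfrac12\le a\le\sqrt{5/6}$ it uses $J'(a)=2\beta_a(a)\le 0$. This reduces everything to $J(\sqrt{5/6})=\iint_H(e^{-x}+e^{-y}-1)\,dx\,dy=\tfrac56\bigl(1-2\gamma-\ln\tfrac56\bigr)>0$. That barely positive value is where the constant $5/6$ is tuned.
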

\begin{proof}
It suffices to prove the lemma for $C=1$. For general $C>0$, set $\bar{K}(u)=K(Cu)$. Then $u \bar{K}(u)\le1$. Applying the $C=1$ case to $\bar{K}$, and substituting $t=Cu$, gives
\[\int_0^\infty (K(t)-1+e^{-K(t)})\,dt=C\int_0^\infty(\bar K(u)-1+e^{-\bar K(u)})\,du
\le C\int_0^\infty e^{-su}\bar K(u)\,du=\int_0^\infty e^{-st/C}K(t)\,dt.\]
Thus the result for $C=1$ implies the result for every $C>0$.

We use
\[ x-1+e^{-x}=\int_0^x(1-e^{-v})\,dv \le \min \left\{x, \frac{x^2}{2} \right\}.\]
On $(0,1)$ we use the assumed integrability of $K(t)$, while for $t\geq 1$ we have $K(t) \leq \frac{1}{t}$ and the quadratic upper bound ensures integrability. Similarly, $e^{-st}K(t)$ is integrable near zero by the same assumption and at infinity because $e^{-st}K(t)\leq e^{-st}1/t$.

Next we reduce the problem of proving nonnegativity for a specific area integral. We define $s = 5/6$, $A=\sqrt{s}$ and the two sets $D,H\subseteq \R^2$
\[D \coloneq \{(x,y) > 0 : y<K(x/s)\} \]
and
\[H \coloneq \{(x,y) > 0 : xy<s\}. \]
We call such sets \emph{downward closed}, if for every $(x,y)$ in the set and every $(x^\prime,y^\prime)$ with $0<x^\prime\leq x$ and $0<y^\prime\leq y$ is also in the set. $D$ and $H$ are downwards closed as $K$ is a nonincreasing function. Using $t K(t) \leq 1$ with $t=x/s$ shows that $D$ lies in $H$. Next, we show that the symmetric function $e^{-x}+e^{-y}-1$ is absolutely integrable over $H$. As $H$ is also symmetric
\[|e^{-x}+e^{-y}-1| = |e^{-y}-(1-e^{-x})|\leq e^{-y}+x\]
and it is enough to integrate over the half of $H$ above the diagonal $y=x$. In that half, $0<x<A,$ and $x<y<\frac{s}{x}$. We obtain
\[ \iint_H |e^{-x}+e^{-y}-1| \le 2\int_0^A\int_x^{s/x}(e^{-y}+x)\,dy\,dx. \]
We treat the two terms separately. For the exponential term,
\[\int_0^A\int_x^{s/x}e^{-y}\,dy\,dx \le\int_0^A\int_x^\infty e^{-y}\,dy\,dx=\int_0^A e^{-x}\,dx<\infty \]
and for the $x$ term,
\[\int_0^A\int_x^{s/x}x\,dy\,dx =\int_0^A x\left(\frac{s}{x}-x\right)\,dx=\int_0^A(s-x^2)\,dx
<\infty.\]

Thus, now integrating $e^{-x}+e^{-y}-1$ over $D$ yields the double integral
\[ \iint_D(e^{-x}+e^{-y}-1)\,dx\,dy = \int_0^\infty\int_0^{K(x/s)}(e^{-x}+e^{-y}-1)\,dy\,dx.\]
The inner integral views $x$ fixed:
\[
\begin{aligned}
\int_0^{K(x/s)}(e^{-x}+e^{-y}-1)\,dy
&=K(x/s)e^{-x}+1-e^{-K(x/s)}-K(x/s)\\
&=K(x/s)e^{-x}-\left(K(x/s)-1+e^{-K(x/s)}\right).
\end{aligned}
\]
For the outer integral substitute $x=st$, so $dx=s\,dt$ and $K(x/s)=K(t)$
\[\int_0^\infty K(x/s)e^{-x}-\left(K(x/s)-1+e^{-K(x/s)}\right)\,dx = \int_0^\infty \left[ K(t)e^{-st}-\left(K(t)-1+e^{-K(t)}\right)\right] s \,dt. \]
Dividing both sides by $s$ we obtain
\begin{equation}\label{eq:area_integral_shows_inequality}
    \frac{1}{s} \iint_D(e^{-x}+e^{-y}-1)\,dx\,dy = \int_0^\infty K(t)e^{-st}-\left(K(t)-1+e^{-K(t)}\right)\,dt.
\end{equation} 
Thus, if we can show that the integral is nonnegative, we have shown the statement.

Clearly, $H$ is a symmetric set, i.e. $(x,y)\in H$ if and only if $(y,x)\in H$. Define the transposed set of $D$ as $D^\top \coloneq \{(x,y) : (y,x)\in D\}$. Further note that $D\cap D^\top$ and $D \cup D^\top$ are both symmetric sets. As $e^{-x}+e^{-y}-1$ is a symmetric function
\[\iint_{D\cap D^\top} (e^{-x}+e^{-y}-1)\,dx\,dy +\iint_{D\cup D^\top} (e^{-x}+e^{-y}-1)\,dx\,dy =2 \iint_{D} (e^{-x}+e^{-y}-1)\,dx\,dy .\]
By this we see that if we can show that this integral is (not necessarily for $D$) non-negative for \textit{any} symmetric downwards closed subset of $H$ this directly implies the nonnegativity of the integral over $D$ as we now see that we can write it as a sum of symmetric sets.

The empty set contributes zero. Let $E\subseteq H$ be a nonempty measurable symmetric downward-closed set, and define
\[ a \coloneq \sup\{r>0:(r,r)\in E\}\in(0,A].\]
Up to the boundary sets of measure zero the set consists of the square $(0,a)^2$ and two transposed arms. For some fixed $0<x<a$ such an arm reaches out into $y$ direction to the arm’s upper endpoint $b\in[a,s/x]$. We define and integrate along this arm outside the square
\[\int_a^b(e^{-x}+e^{-y}-1)dy = e^{-a}-e^{-b}- (b-a) (1-e^{-x}).\]
This is strictly concave in $b$ as the second derivative $-e^{-b}$ is strictly negative so its minimum occurs at an endpoint. We denote the evaluation of this integral from $a$ to $s/x$ for $x\in(0,a]$ by 
\[\beta_a(x) \coloneq e^{-a}-e^{-s/x}- (s/x-a) (1-e^{-x}).\]
As the integral is concave the minimum contribution of the arm for fixed $0<x\leq a$ is either zero or $\beta_a(x)$, whatever is smaller.

We lower bound the integral over $E$ by
\[ \iint_E (e^{-x}+e^{-y}-1)\,dx\,dy \ge 2a(1-e^{-a})-a^2 +2\int_0^a\min\{0,\beta_a(x)\}\,dx =: J(a) \]
where the first part comes from integration over the $(0,a)^2$ square. The integral goes from zero to $a$. For each fixed value $0<x\leq a$, as the integral value above is concave in $b$, the term adds at least the minimum of the two endpoints which is exactly zero when we set $b=a$ and when setting $b=s/x$ we obtain $\beta_a(x)$. The factor two comes from the symmetric second arm. We define by this the function $J(a)$ and now it remains to show that $J(a) \geq 0$ to prove the statement. The rest of the proof is devoted to showing this. For that we make a case distinction of symmetric sets, for small and for large values of $a$.

\paragraph{Small diagonal sets $0\leq a \leq 1/2$.} We first claim that for these small values
\[\beta_a(x) = e^{-a}-e^{-s/x}- \left(\frac{s}{x}-a\right) (1-e^{-x}) \geq e^{-a}-s+\frac{x}{3}.\]
We use the standard Taylor bounds $x-\frac{x^2}{2} \leq 1-e^{-x} \leq x- \frac{x^2}{2} + \frac{x^3}{6}$ and get
\begin{align*}
    \beta_a(x) - \left(e^{-a}-s+\frac{x}{3}\right) &= s \left(1-\frac{1-e^{-x}}{x}\right)+a (1-e^{-x})-e^{-s/x} - \frac{x}{3} \\
    &\geq s \left(\frac{x}{2}-\frac{x^{2}}{6}\right)+ x (x-\frac{x^2}{2})-e^{-s/x} - \frac{x}{3}
\end{align*}
Thus, we will show for the polynomial part
\[P(x) \coloneq \frac{x}{12} + \frac{31x^2}{36} - \frac{x^3}{2} \geq \frac{1}{e^{s/x}}\]
for $x\in(0,1/2]$. Factoring out $\frac{x}{36}$ we obtain $\frac{x}{36}\left(3+31x-18x^2\right)$ where the derivative with respect to $x$ $\frac{d}{dx}\left(3+31x-18x^2\right) = 31-36x$. This is clearly positive on the considered domain of $x$. Hence, with $3+31x-18x^2 \geq 3$ we get $P(x)>0$. We also can bound
\[\frac{5}{3}P(x) - xP^\prime(x) = \frac{x}{108} (72x^2-31x+6) > 0,\]
because $72x^2-31x+6 = (8x-2)^2 + 8x^2+x+2>0.$ Therefore,
\[\frac{d}{dx} \ln(P(x)e^{s/x}) = \frac{P^\prime(x)}{P(x)} - \frac{s}{x^2} \leq \frac{5}{3x} - \frac{5}{6x^2} \leq 0.\]
Consequently,
\[P(x)e^{s/x} \geq P(1/2)e^{5/3} = \frac{7}{36}e^{5/3} > 1\]
proving the lower bound on $\beta_a(x)$ on this interval.

For $0<a\leq 1/3$ this implies $\min \{0, \beta_a(x)\} \geq \min \{0, e^{-a}-s\}$, so 
\[J(a) \geq 2a(1-e^{-a}) - a^2 + 2a \min \{0, e^{-a}-s\} = \min \left\{2a(1-e^{-a}) - a^2, a \left( \frac{1}{3} -a \right) \right\}.\]
Since $e^{-a} \leq 1-a+\frac{a^2}{2}$ we obtain
\[2a(1-e^{-a}) - a^2 \geq a^2-a^3 = a^2 (1-a) \geq 0\]
we have $J(a)\geq 0$.

Now let $1/3<a\leq 1/2$. We use again $e^{-a} \leq 1-a+\frac{a^2}{2}$ and get
\[e^{-a} - s + \frac{a}{3} \leq 1-a+\frac{a^2}{2} - \frac{5}{6} + \frac{a}{3} = \frac{1}{6} - \frac{4a}{6}+\frac{3a^2}{6} = - \frac{(3a-1)(1-a)}{6} \leq 0.\] 
Therefore, the proven lower bound of $\beta_a(x)$ is nonpositive for $x\in[0,a]$ and
\[\begin{aligned}
    J(a) \geq 2a(1-e^{-a}) - a^2 + 2 \int_0^a \left( e^{-a}-s+\frac{x}{3} \right) dx &= 2a(1-e^{-a}) - a^2 + 2 \left(ae^{-a} - as + \frac{a^2}{6}\right)\\ &= \frac{a(1-2a)}{3} \geq 0.
\end{aligned}\]

\paragraph{Large diagonal sets $1/2 \leq a \leq A$.}
We first show that in this case $\beta_a(x) \leq 0$ for all $0<x\leq a$ and show that then $J(A)$ lower bounds $J(a)$. For this look at the auxiliary function 
\[\phi(v)=\int_0^1 e^{-vu} du = \begin{cases}
    \frac{1-e^{-v}}{v}, &v>0, \\
    1, & v=0.
\end{cases}\] 

We first bound $\phi(v) \leq \frac{6+v}{6+4v+v^2}$ for all $v\geq 0$. For $v>0$ this is equivalent to showing $(1-e^{-v}) (6+4v+v^2) \leq (6+v) v$ or multiplying it out it is equivalent to prove 
\[(6+4v+v^2)e^{-v} - 6 + 2v \geq 0.\]
Note that this expression is now also well defined for $v=0$. Furthermore, for $v=0$ it equal to zero. The first derivative is 
\[2 - (6+4v+v^2)e^{-v} + (4+2v)e^{-v} = -(v^2+2v+2)e^{-v} + 2\] 
which also equals zero at $v=0$. Finally, the second derivative is 
\[-(2v+2)e^{-v}+(v^2+2v+2)e^{-v}=v^2e^{-v} \ge 0.\]
Thus, this term starts at zero, and has derivative zero there. As the second derivative is nonnegative the first one is increasing and the original term nonnegative, thus, the bound holds.

Since $x\leq a \leq A$, we have $s/x-a\geq 0$ and can thus write
\begin{equation}\label{eq:beta_nonpositive}
    \beta_a(x) = \left(\frac{s}{x}-a\right)\left[e^{-a} \phi\left(\frac{s}{x}-a\right) - 1 + e^{-x}\right]
\end{equation}
where for fixed $x$,
\[e^{-a} \phi\left(\frac{s}{x}-a\right) = \int_0^1 e^{-a(1-u)-(s/x)u} du\]
is nonincreasing in $a$. Hence, 
\[e^{-a} \phi\left(\frac{s}{x}-a\right) \leq e^{-1/2} \phi\left(\frac{s}{x}-\frac{1}{2}\right) . \]
Using $e^x \geq 1+x+\frac{x^2}{2}$ we directly get $e^{-1/2} \leq \frac{5}{8}$ as well as
\[1- e^{-x} \geq 1-\frac{1}{1+x+x^2/2} = \frac{x(x+2)}{x^2+2x+2}.\]
With our initial upper bound and the inequality
\[\frac{x(x+2)}{x^2+2x+2} - \frac{5}{8} \frac{6x(33x+5)}{153x^2+90x+25} = \frac{x(x+5)(117x^2-66x+10)}{4(x^2+2x+2)(153x^2+90x+25)} > 0, \]
which holds due to $117x^2-66x+10 = \frac{1}{13} \left[ (39x-11)^2 +9\right] > 0$, we obtain
\begin{align*}
    e^{-1/2} \phi\left(\frac{s}{x}-\frac{1}{2}\right) &\leq \frac{5}{8} \frac{6 +\frac{5}{6x}-\frac{1}{2}}{6 + 4(\frac{5}{6x}-\frac{1}{2}) + (\frac{5}{6x}-\frac{1}{2})^2)} \\
    &= \frac{5}{8} \frac{\frac{33x+5}{6x}}{\frac{153x^2+90x+25}{36x^2}} \\
    &= \frac{5}{8} \frac{6x(33x+5)}{153x^2+90x+25} \\
    &\leq \frac{x(x+2)}{x^2+2x+2} \\
    &\leq 1- e^{-x}.
\end{align*}
Together with \eqref{eq:beta_nonpositive} this proves $\beta_a(x) \leq 0$ for all $1/2 \leq a \leq A$.

Now by this, 
\[J(a)=2a(1-e^{-a})-a^2 +2\int_0^a \beta_a(x)\,dx.\]
Moreover, the partial derivative of $\beta_a(x)$ with respect to $a$ is 
\[\partial_a \beta_a(x) = \partial_a \left( e^{-a}-e^{-s/x}- (s/x-a) (1-e^{-x}) \right) = 1-e^{-x}-e^{-a}\]
and 
\begin{equation}
    \begin{aligned}
        J^\prime(a) &= \frac{d}{da}\left( 2a(1-e^{-a})-a^2 +2\int_0^a \beta_a(x)\,dx \right) \\
         &= \frac{d}{da}\left( 2a(1-e^{-a})-a^2 \right) +2\int_0^a \partial_a \beta_a(x)\,dx + 2 \beta_a(a)  \\
         &= 2(1-e^{-a})+2a e^{-a} - 2a +2 \int_0^a 1-e^{-x}-e^{-a} \,dx  + 2 \beta_a(a)\\
         &= (2-2a)(1-e^{-a}) +2 \left( a - (1-e^{-a}) - (ae^{-a}) \right)  + 2 \beta_a(a)\\
         &= 2 \beta_a(a) \leq 0.
    \end{aligned}
\end{equation}
Here we use the Leibniz integral rule in the second equation, which can be applied as both $\beta_a(x)$ and $\partial_a\beta_a(x)$ extend continuously to $x=0$, with $\beta_a(0)=e^{-a}-s$. What we have gained is an always nonpositive gradient of $J$ with respect to $a$, thus, $J(a)\geq J(A)$. Computing the integral $J(A)$ and showing nonnegativity completes the proof. At $a=A=\sqrt{s}$ the area we integrate over is exactly $H$, so
\[J(A) = \iint_H e^{-x}+e^{-y}-1 \, dy \, dx.\]

For any $0<\varepsilon<A$ define $H_\varepsilon \coloneq H \cap [\varepsilon,\infty)^2$. By symmetry 
\begin{equation}
    \begin{aligned}
        \iint_{H_\varepsilon} e^{-x}+e^{-y}-1 \, dx \, dy &= \int_\varepsilon^{s/\varepsilon} \int_\varepsilon^{s/x} e^{-x}+e^{-y}-1 \, dy \, dx \\ 
        &= 2\int_\varepsilon^{s/\varepsilon} \int_\varepsilon^{s/x} e^{-x} \, dy \, dx - \int_\varepsilon^{s/\varepsilon} \int_\varepsilon^{s/x} 1 \, dy \, dx \\
        &= 2\int_\varepsilon^{s/\varepsilon} e^{-x} \left[y\right]_\varepsilon^{s/x} \, dx - \int_\varepsilon^{s/\varepsilon} \left[y\right]_\varepsilon^{s/x} \, dx \\
        &= 2\int_\varepsilon^{s/\varepsilon} e^{-x} \left(\frac{s}{x}-\varepsilon \right) \, dx - \int_\varepsilon^{s/\varepsilon} \left(\frac{s}{x}-\varepsilon \right) \, dx \\
        &= 2s\int_\varepsilon^{s/\varepsilon} \frac{e^{-x}}{x} \, dx - 2\varepsilon\int_\varepsilon^{s/\varepsilon} e^{-x}  \, dx - \int_\varepsilon^{s/\varepsilon} \frac{s}{x} \, dx+ \int_\varepsilon^{s/\varepsilon} \varepsilon \, dx \\
        &= 2s\int_\varepsilon^{s/\varepsilon} \frac{e^{-x}}{x} \, dx - 2\varepsilon\int_\varepsilon^{s/\varepsilon} e^{-x}  \, dx - s\left[\ln(x)\right]_\varepsilon^{s/\varepsilon} + \varepsilon \left( \frac{s}{\varepsilon} - \varepsilon \right) \\
        &= 2s\int_\varepsilon^{s/\varepsilon} \frac{e^{-x}}{x} \, dx - 2\varepsilon\int_\varepsilon^{s/\varepsilon} e^{-x}  \, dx - s\ln\left( \frac{s}{\varepsilon^2} \right) + s - \varepsilon^2 
    \end{aligned}
\end{equation}
Now we send $\epsilon \rightarrow 0$. Most of the terms go directly to zero and can be kept in an $O(\varepsilon)$ notation. One of the remaining complicating parts is
\[2s\int_\varepsilon^{s/\varepsilon} \frac{e^{-x}}{x} \, dx.\]
We make use of the standard exponential-integral identities \cite[Equations~(6.2.1), (6.2.3), and (6.2.4)]{olver2010nist}
\[\int_0^1\frac{e^{-x}-1}{x}\,dx+ \int_1^\infty\frac{e^{-x}}x\,dx = -\gamma \]
where $\gamma\approx 0.577$ is Euler's constant \cite[Equation~(5.2.3)]{olver2010nist}.
In our case, we split
\[
\begin{aligned}
\int_\varepsilon^{s/\varepsilon} \frac{e^{-x}}{x}\,dx &= \int_\varepsilon^{1} \frac{e^{-x}}x\,dx+\int_1^{s/\varepsilon} \frac{e^{-x}}{x}\,dx\\
&= \int_\varepsilon^{1} \frac{1}{x}\,dx+\int_\varepsilon^{1} \frac{e^{-x}-1}{x}\,dx+\int_1^{s/\varepsilon} \frac{e^{-x}}x\,dx\\
&= -\ln\varepsilon + \int_\varepsilon^1\frac{e^{-x}-1}{x}\,dx+ \int_1^{s/\varepsilon}\frac{e^{-x}}x\,dx\\
&=-\ln\varepsilon-\gamma+O(\varepsilon).
\end{aligned}
\]
In the final equation the error is $O(\varepsilon)$, since
\[ \left|\int_0^\varepsilon \frac{e^{-x}-1}{x}\,dx\right| \le \int_0^\varepsilon 1\,dx=\varepsilon, \]
while
\[ 0\le \int_{s/\varepsilon}^\infty \frac{e^{-x}}x\,dx \le \frac{\varepsilon}{s}e^{-s/\varepsilon} =o(\varepsilon). \]

Summarizing, we obtain for $\epsilon \rightarrow 0$
\begin{align*}
    \iint_{H_\varepsilon} e^{-x}+e^{-y}-1 \, dx \, dy &= 2s\int_\varepsilon^{s/\varepsilon} \frac{e^{-x}}{x} \, dx - 2\varepsilon\int_\varepsilon^{s/\varepsilon} e^{-x}  \, dx - s\ln\left( \frac{s}{\varepsilon^2} \right) + s - \varepsilon^2 \\
    &= 2s\int_\varepsilon^{s/\varepsilon} \frac{e^{-x}}{x} \, dx - s\ln\left(s \right) +2 s\ln(\varepsilon) + s + O(\varepsilon)\\
    &= 2s\left(-\ln\varepsilon-\gamma+O(\varepsilon)\right) - s\ln\left(s \right) +2 s\ln(\varepsilon) + s + O(\varepsilon) \\
    &= -2s\ln\varepsilon-2s\gamma - s\ln\left(s \right) +2 s\ln(\varepsilon) + s +O(\varepsilon)\\
    &= s(1-2\gamma - \ln\left(s \right) ) +O(\varepsilon).
\end{align*}
Finally, we use $\gamma\leq \frac{7}{12}$ and $s=\frac{5}{6}$ and conclude
\[1-2\gamma - \ln\left( \frac{5}{6}\right) \geq 1 - \frac{7}{6} - \ln\left( \frac{5}{6}\right) = -\frac{1}{6} - \ln\left( \frac{5}{6}\right) >0\]
using $\ln(x)\leq x-1$. This concludes the proof as we have shown that the integral $J(A)$ is nonnegative which was the remaining case for showing that the area integral \eqref{eq:area_integral_shows_inequality} is always nonnegative, proving the desired inequality.
\end{proof}

\begin{lemma}\label{lemma:exponential_integral_lower_bound}
    For every $z\geq 0$,
    \begin{equation}
        \int_1^\infty e^{-zx} \frac{1}{x^2} \, dx \geq e^{-5/12 - (3/2)z}.
    \end{equation}
\end{lemma}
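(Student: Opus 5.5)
The plan is to treat the left-hand side as an expectation and apply Jensen's inequality. The weight $x^{-2}$ on $[1,\infty)$ is a probability density, since $\int_1^\infty x^{-2}\,dx=1$. So if $Y$ has density $y^{-2}$ on $[1,\infty)$, then
\[ \int_1^\infty e^{-zx}\frac{1}{x^2}\,dx=\expect{e^{-zY}}. \]
However, $\expect{Y}=\infty$, so Jensen cannot be applied to $e^{-zY}$ directly. Instead I would use the factorisation $e^{-zx}=\exp(-z x)$ together with a truncation, or better a change of measure that makes the mean finite.

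\textbf{Reweighting.} Split the weight as $x^{-2}=x^{-1}\cdot x^{-1}$ and work on a bounded interval. Concretely, restrict to $x\in[1,T]$ and write
\[ \int_1^T e^{-zx}x^{-2}\,dx = c_T\,\expect{e^{-zY_T}}, \qquad c_T=1-\frac1T, \]
where $Y_T$ is distributed with the normalised density $x^{-2}/c_T$ on $[1,T]$. Jensen then gives
\[ \int_1^T e^{-zx}x^{-2}\,dx\ \ge\ c_T\exp\bigl(-z\,\expect{Y_T}\bigr), \qquad \expect{Y_T}=\frac{\ln T}{1-1/T}. \]
We need $c_T\ge e^{-5/12}$ and $\expect{Y_T}\le 3/2$ simultaneously.

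\textbf{Checking the two constraints.} We have $e^{-5/12}\approx 0.659$, so $T\ge 2.93$ is required. But then $\ln T/(1-1/T)\approx 1.076/0.659\approx 1.63>1.5$, so plain truncation narrowly fails. I therefore expect to use a better-adapted measure. Write
\[ \int_1^\infty e^{-zx}x^{-2}\,dx\ \ge\ \int_1^\infty e^{-zx}\,h(x)\,dx \]
for a nonnegative $h\le x^{-2}$, and apply Jensen to $h$ normalised. The best $h$ for a given mass $c$ and minimal mean is obtained by taking $x^{-2}$ on $[1,T]$ and allocating mass near $1$ as efficiently as possible, and truncation is already that optimum. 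So Jensen with a single exponential is insufficient, and a different argument is needed.

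\textbf{Direct comparison of log-convex functions.} Set $g(z)=\ln\int_1^\infty e^{-zx}x^{-2}\,dx$. This is convex in $z$ (it is a log-Laplace transform) with $g(0)=0$. I would compare it against the affine function $\ell(z)=-5/12-\tfrac32 z$. As $z\to\infty$, $g(z)=-z-2\ln z+O(1)$, which eventually exceeds $\ell(z)$ since its slope tends to $-1>-3/2$. Also $g(0)=0>-5/12$, and $g'(0)=-\infty$. Thus $g$ is convex and lies above $\ell$ at both ends. Because $g$ is convex, $g-\ell$ is convex, so it suffices to show its minimum is nonnegative. At the minimiser $z^*$ we have $g'(z^*)=-3/2$, i.e.\ the tilted mean $\expect{Y e^{-z^*Y}}/\expect{e^{-z^*Y}}=3/2$, and we must verify $g(z^*)\ge -5/12-\tfrac32 z^*$.

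\textbf{Main obstacle: the tangency point.} Locating $z^*$ requires bounding the tilted density. My plan is to use the substitution $x=1+u$, so that the integral equals $e^{-z}\int_0^\infty e^{-zu}(1+u)^{-2}\,du$. I would then bound this by explicit rational lower estimates, using $(1+u)^{-2}\ge e^{-2u}$ or sharper Padé-type bounds, which gives $\ge e^{-z}/(z+2)$. This reduces the claim to
\[ -z-\ln(z+2)\ \ge\ -\tfrac5{12}-\tfrac32 z, \qquad\text{i.e.}\qquad \tfrac z2+\tfrac5{12}\ge\ln(z+2). \]
The minimum of the left-hand side minus the right-hand side is at $z=0$, where it equals $0.4167-0.693<0$, so this fails. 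The bound must therefore be sharpened near small $z$, for instance via $\int_0^\infty e^{-zu}(1+u)^{-2}\,du\ge 1/(1+z\ln(\cdot))$-type estimates, or by splitting into $z\le z_0$, where Jensen-truncation works, and $z\ge z_0$, where the rational bound works. Choosing $z_0$ so that the two regimes overlap is where I expect the real difficulty.
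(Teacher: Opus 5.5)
\textbf{Verdict: genuine gap.} Your proposal does not reach a proof. Each route you sketch is either shown by you to fail or left open, and the final case split cannot be completed with the estimates you name.

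\textbf{The missing idea.} The decisive misstep is concluding that Jensen is insufficient because truncation is optimal. Truncation is optimal only among bounds that first replace $x^{-2}$ by a sub-density $h\le x^{-2}$ and then apply Jensen to $e^{-zx}$. What you miss, and what the paper's proof does, is a change of measure to a density $q$ that is \emph{not} dominated by $x^{-2}$. The likelihood ratio is kept inside the Jensen step through the concave logarithm:
\[
\ln\int_1^\infty e^{-zx}\frac{1}{x^2}\,dx=\ln\mathbb{E}_{X\sim q}\Bigl[\frac{e^{-zX}}{X^2q(X)}\Bigr]\ \ge\ -z\,\mathbb{E}_q[X]-\mathbb{E}_q\bigl[\ln\bigl(X^2q(X)\bigr)\bigr].
\]
The penalty term is the relative entropy of $q$ with respect to the density $x^{-2}$. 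For your truncations it equals $-\ln(1-1/T)$, about $0.54$ when the mean is $3/2$. A $q$ that puts \emph{more} mass near $x=1$ than $x^{-2}$ does can be much cheaper. The paper takes $q(x)=160/(1+x)^6$ on $[1,\infty)$, whose density at $x=1$ is $2.5$ times that of $x^{-2}$. For this choice $\mathbb{E}_q[X]=3/2$ and
\[
\mathbb{E}_q[\ln(X^2q(X))]=2\,\mathbb{E}_q[\ln X]+\ln\tfrac{5}{2}-\tfrac{6}{5}=64\ln 2-\tfrac{131}{3}+\ln\tfrac{5}{2}-\tfrac{6}{5}\approx 0.411<\tfrac{5}{12},
\]
which proves the lemma for all $z\ge 0$ at once.

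This also links to your tangency reformulation. By the Gibbs variational principle, the lemma is equivalent to the existence of some $q$ with mean $3/2$ and penalty at most $5/12$. The optimal such $q$ is the tilted density proportional to $e^{-z^*x}x^{-2}$ at your $z^*$. The paper simply replaces it by an explicit, computable surrogate.

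\textbf{Why your fallback split fails.} The problem is not the choice of $z_0$: the two regimes do not overlap.
\begin{itemize}
\item Optimizing the truncation bound $\ln(1-\frac{1}{T})-z\frac{\ln T}{1-1/T}$ over $T$ certifies $g(z)\ge-\frac{5}{12}-\frac{3}{2}z$ in the small-$z$ range only for $z\lesssim 0.38$, with the best $T\approx 4.5$.
\item The bound $e^{-z}/(z+2)$ certifies it only for $z\gtrsim 1.88$.
\end{itemize}
The uncovered interval contains the critical point. The function $g(z)+\frac{3}{2}z$ attains its minimum, about $-0.408$, near $z\approx 0.93$. That is only about $0.009$ above $-\frac{5}{12}$. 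At the same point, $e^{-z}/(z+2)$ is off by about $0.2$ in the logarithm. Any case split would therefore need an essentially sharp lower bound on $\int_1^\infty e^{-zx}x^{-2}\,dx$ around $z\approx 0.9$, which you do not supply.

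A minor point: as $z\to\infty$, $g(z)=-z-\ln z+o(1)$, not $-z-2\ln z+O(1)$. This does not affect your convexity argument.
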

\begin{proof}
    In this proof we make use the distribution $Q$ with probability density $q(x) \coloneq \frac{160}{(1+x)^6}$ for $x\geq 1$. For this distribution and $X\sim Q$ we compute some useful quantities. Before doing this, we must verify that $q$ actually is a density:
    \[160 \int_1^\infty \frac{1}{(1+x)^6} dx = -160 \left[\frac{1}{5} \frac{1}{(1+x)^5} \right]_1^\infty = 160 \frac{1}{5 \cdot 32} = 1.\]
    \begin{equation}\label{eq:expectation_XsimQ}
        \expect{X} = 160 \int_2^\infty (u-1)u^{-6} du = 160 \int_2^\infty u^{-5}-u^{-6} du = 160 (1/64 - 1/160) = \frac{3}{2}
    \end{equation}
    We use integration by parts and the substitution $u=1+x$ to obtain
    \begin{equation}\label{eq:expectation_ln1+X_simQ}
    \begin{aligned}
        \expect{\ln(1+X)} &= 160 \int_1^\infty \frac{\ln(1+x)}{(1+x)^6} dx = 160 \int_2^\infty \frac{\ln(u)}{u^6} du \\&= 160 \left[-\ln(u) \frac{1}{5 u^5}\right]_2^\infty + 32 \int_2^\infty \frac{1}{u^6} du = \ln (2) + \frac{1}{5}
    \end{aligned}
    \end{equation}
    \begin{equation}\label{eq:expectation_lnqxquad_simQ}
    \begin{aligned}
        \expect{\ln(q(X) X^2)} &= \expect{\ln\left(\frac{160 X^2}{(1+X)^6}\right)} = \expect{\ln(160) + \ln(X^2) - \ln((1+X)^6)} \\ 
        &= \ln(160) + 2\expect{\ln(X)} - \ln (64) - \frac{6}{5}= 2\expect{\ln(X)} + \ln(5/2) - 6/5.
    \end{aligned}
    \end{equation}
    For the last expectation we need the following calculation
    \[\expect{\ln(X)} = 160 \int_1^\infty \frac{\ln(x)}{(1+x)^6} dx = 160 \left[-\ln(x) \frac{1}{5}\frac{1}{(1+x)^5}\right]_1^\infty + 32 \int_1^\infty \frac{1}{x(1+x)^5} dx. \]
    Note that the evaluated part becomes zero. We use the geometric series identity
    \[\sum_{k=1}^5 \frac{1}{y^k} = \frac{1-y^{-5}}{y-1}.\]
    Choosing $x=y-1$ we obtain
    \[\sum_{k=1}^5 \frac{1}{(x+1)^k} = \frac{1-(x+1)^{-5}}{x} \Longleftrightarrow  \frac{1}{x(x+1)^{5}}= \frac{1}{x} - \sum_{k=1}^5 \frac{1}{(x+1)^k}.\]
    Therefore we further simplify
    \begin{equation}
        \begin{aligned}
        \expect{\ln(X)} &= 32 \int_1^\infty \frac{1}{x(1+x)^5} dx = 32 \int_1^\infty \frac{1}{x} - \sum_{k=1}^5 \frac{1}{(x+1)^k} dx  \\
        &= 32 \int_1^\infty \left( \frac{1}{x} - \frac{1}{x+1} \right) - \frac{1}{(x+1)^2}- \frac{1}{(x+1)^3}- \frac{1}{(x+1)^4}- \frac{1}{(x+1)^5} dx \\
        &= 32 \left( \ln(2) - \frac{1}{2} - \frac{1}{8} - \frac{1}{24} - \frac{1}{64} \right) \\
        &= 32 \ln (2) - \frac{131}{6}
        \end{aligned}
    \end{equation}

    Now we directly prove the lemma statement. For the first inequality we use Jensen's inequality.
    \begin{equation}
    \begin{aligned}
        \ln \left( \int_1^\infty e^{-zx} \frac{1}{x^2} \, dx \right) &= \ln \left( \int_1^\infty e^{-zx} \frac{1}{x^2} \frac{q(x)}{q(x)}\, dx \right) = \ln \left( \expect{ e^{-zX} \frac{1}{X^2q(X)}}\right) \\
        &\geq \expect{ \ln \left( e^{-zX} \frac{1}{X^2q(X)}\right)} = \expect{ \ln (e^{-zX}) - \ln(X^2q(X))} \\
        &= -z \expect{X} -  2\expect{\ln(X)} - \ln(5/2) + 6/5 \\
        &= -z \frac{3}{2} -  64 \ln (2) + \frac{131}{3} - \ln(5/2) + 6/5 \\
        &\geq -z \frac{3}{2} -5/12.
    \end{aligned}
    \end{equation}
    In the last inequality we simply use that $-0.411\approx -  64 \ln (2) + \frac{131}{3} - \ln(5/2) + 6/5 \geq -5/12 \approx -0.417$.
\end{proof}

Now we prove \cref{lemma:bound_expectation_by_linear_SREV_BREV} by bringing $\srev(\vec{X})$ into the constant $C$ in \cref{lemma:exponential_capping} and $\brev(\vec{X})$ into $z$ in \cref{lemma:exponential_integral_lower_bound}.

\begin{proof}[Proof of~\cref{lemma:bound_expectation_by_linear_SREV_BREV}.]
    If $\srev = \infty$ the inequality is immediate, if $\srev = 0$ all values vanish almost surely. Thus, assume $0 < \brev (\vec{X}) \leq m \srev (\vec{X}) < \infty$. As in \cite{cai2026improvedrevenueguaranteesselling}, we define $K$ as the sum of the tail probabilities as
    \[ K(t) \coloneq \sum_{j=1}^m \prob{X_j>t}. \]
    Clearly $tK(t) \leq \srev(\vec{X})$. Independence of the distributions give
    \[ \prob{\max_{j\in [m]} X_j>t} = 1 - \prod_{j=1}^m (1-\prob{X_j>t}) \geq 1-e^{-K(t)}.\]
    The inequality follows from $1-x\le e^{-x}$ for $x\in[0,1]$. Applying this with $x=\prob{X_j>t}$ and multiplying over $j$ gives us the same initial starting point
    \[\expect{\sum_{j=1}^m X_j - \max_j X_j }  \leq \int_0^\infty (K(t)-1+e^{-K(t)}) \, dt \]
    as in \cite{cai2026improvedrevenueguaranteesselling}. We use \cref{lemma:exponential_capping} with the additional definition $\lambda \coloneq \frac{5}{6\srev(\vec{X})}$ (i.e. $C=\srev(\vec{X})$) to further obtain
    \begin{equation}\label{eq:upper_bound_Kt_integral}
        \int_0^\infty (K(t)-1+e^{-K(t)}) \, dt \leq \int_0^\infty e^{-\lambda t} K(t) \, dt = \sum_{j=1}^m \expect{\frac{1-e^{-\lambda X_j}}{\lambda}}.
    \end{equation}
    The last equality comes from writing $K(t)$ as defined, rewriting the probability as expectation and interchanging expectation and integral.
    \begin{align*}
        \int_0^\infty e^{-\lambda t} K(t) \, dt &= \int_0^\infty e^{-\lambda t} \sum_{j=1}^m \prob{X_j>t} \, dt = \sum_{j=1}^m \int_0^\infty e^{-\lambda t}  \prob{X_j>t} \, dt \\
        &= \sum_{j=1}^m \int_0^\infty e^{-\lambda t}  \expect{\mathbf{1}_{X_j>t}} \, dt =  \sum_{j=1}^m \expect{\int_0^\infty e^{-\lambda t}  \mathbf{1}_{X_j>t} \, dt} \\
        &=  \sum_{j=1}^m \expect{\int_0^{X_j} e^{-\lambda t} \, dt} =  \sum_{j=1}^m \expect{\left[ - \frac{e^{-\lambda t}}{\lambda}\right]_0^{X_j}} \\
        &= \sum_{j=1}^m \expect{\frac{1-e^{-\lambda X_j}}{\lambda}}.
    \end{align*}

    Observe, that all integrals in \eqref{eq:upper_bound_Kt_integral} are finite. This ensures that the inequality between the integrals hold as well as that we can apply \cref{lemma:exponential_capping} at all. For $t\rightarrow 0$ clearly $K(t) \leq m$, so \cref{lemma:exponential_capping} is applicable. For all $t$ we know $K(t)\leq \srev(\vec{X})/t$ and for $x\geq 0$ 
    \[ x-1+e^{-x}=\int_0^x(1-e^{-v})\,dv \le\int_0^x v\,dv=\frac{x^2}{2}.\]
    Thus, the integrand is dominated by $K(t)^2/2$ which by the upper bound of $K(t)$, therefore, has a tail of order $1/t^2$ and is finite. the second integral has the same upper bounds on $K(t)$ and an exponentially decreasing factor and is clearly finite as well.

    Let $Y$ be drawn from the distribution with density $\frac{1}{x^2}$ on $[1,\infty)$. From $t \prob{\sum_{j=1}^m X_j > t} \leq \brev(\vec{X})$ we further state the trivial inequality $\prob{\sum_{j=1}^m X_j > t} \leq \min \{1, \brev(\vec{X})/t\}$. Then $\sum_{j=1}^m X_j$ is stochastically dominated by $Y \brev(\vec{X})$. By 
    \[\prob{Y\brev(\vec{X}) > t} = \begin{cases}
        1, &0\leq t < \brev(\vec{X})\\
        \brev(\vec{X})/t, &t\geq \brev(\vec{X}).
    \end{cases}\]
    we directly observe
    \begin{equation}\label{eq:stochastic_dominance_brev}
    \prob{\sum_{j=1}^m X_j > t} \leq \min\{1, \brev(\vec{X})/t\} = \prob{Y \brev(\vec{X}) > t}. 
    \end{equation}

    Now we use in the first inequality \cref{lemma:exponential_integral_lower_bound} with $z \coloneq \lambda \brev(\vec{X})$. The second inequality comes from the stochastic dominance. The last inequality uses independence as well as $\ln(1-y)\leq -y$ for $y\in[0,1)$. This comes from $\ln(x)\leq x-1$, i.e., $x\leq e^{x-1}$ for every $x>0$. We set $x=\expect{e^{-\lambda X_j}}$. The last equality is just the equality part of \eqref{eq:upper_bound_Kt_integral} multiplied with $\lambda$.
    \begin{align*}
        e^{-5/12 - (3/2) \lambda \brev(\vec{X})} &\leq \int_1^\infty e^{-\lambda \brev(\vec{X})x} \frac{1}{x^2} \, dx = \expect{e^{-\lambda \brev(\vec{X})Y}} \leq \expect{e^{-\lambda \sum_j X_j}} \\
        &= \prod_{j=1}^m \expect{e^{-\lambda X_j}} \leq \exp \left( {-\sum_{j=1}^m(1-\expect{e^{-\lambda X_j}})}\right) = \exp \left( -\lambda \int_0^\infty e^{-\lambda t} K(t) \, dt  \right).
    \end{align*}
    Taking logarithms and multiplying by minus one gives
    \begin{align*}
        \frac{5}{12} + \frac{3}{2} \lambda \brev(\vec{X}) \geq  \lambda \int_0^\infty e^{-\lambda t} K(t)\, dt .
    \end{align*}
    we divide by $\lambda$ and obtain
    \[\frac{5}{12 \lambda} + \frac{3}{2}  \brev(\vec{X}) \geq \int_0^\infty e^{-\lambda t} K(t) \, dt = \sum_{j=1}^m \expect{\frac{1-e^{-\lambda X_j}}{\lambda}} \geq \expect{\sum_{j=1}^m X_j - \max_j X_j }\]
    which by inserting the chosen value for $\lambda$ provides the desired result:
    \[\frac{1}{2} \srev(\vec{X}) + \frac{3}{2}  \brev(\vec{X}) \geq \expect{\sum_{j=1}^m X_j - \max_j X_j}.\]
\end{proof}

For the combined benchmark the following proposition improves the constant $\frac{11}{3}$ upper bound of \cref{prop:upper_bound_m_ind_srev_brev} even further.

\begin{proposition}
    \label{prop:S-M-upper-bound-SREV-BREV}
    Let $X_1,X_2,\dots,X_m$ be independent random variables in the unit $K$-grid $I_K$, for some positive integer $K$. Then 
    \[\rev(\vec{X}) \leq 3 \cdot \max \sset{\srev(\vec{X}),\brev(\vec{X})}.\]
\end{proposition}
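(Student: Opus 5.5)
This follows by chaining the master inequality with the linear bound on the expectation term. By \cref{lemma:rev_upper_bound}, for independent items on the grid $I_K$ we have
\[
\rev(\vec{X}) \leq \srev(\vec{X}) + \expect{\sum_{j=1}^m X_j - \max_{j\in[m]} X_j}.
\]
\Cref{lemma:bound_expectation_by_linear_SREV_BREV} holds for arbitrary independent nonnegative random variables, so in particular for grid-valued ones. It bounds the expectation term by $\frac{1}{2}\srev(\vec{X}) + \frac{3}{2}\brev(\vec{X})$. Substituting gives
\[
\rev(\vec{X}) \leq \frac{3}{2}\srev(\vec{X}) + \frac{3}{2}\brev(\vec{X}).
\]
Each of $\srev$ and $\brev$ is at most $\max\{\srev,\brev\}$, so the right-hand side is at most $3\max\{\srev(\vec{X}),\brev(\vec{X})\}$, which is the claim.

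The only bookkeeping is the degenerate case $\srev(\vec{X})=0$. Then every $X_j$ is almost surely $0$, so $\rev(\vec{X})=0$ and the inequality is trivial. Both lemmas already treat this case, so no separate argument is needed.

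There is no real obstacle here: all the difficulty sits in \cref{lemma:exponential_capping} and \cref{lemma:exponential_integral_lower_bound}, which feed \cref{lemma:bound_expectation_by_linear_SREV_BREV}. One remark is worth adding. The bound $\frac{3}{2}\srev+\frac{3}{2}\brev$ is stronger than the stated result, because it uses the sum rather than twice the maximum. For continuous distributions on $[0,1]^m$ the same constant $3$ follows by letting $K\to\infty$ and invoking \cref{prop:discretization-approximation-revenues}.
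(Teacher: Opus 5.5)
Your proposal is correct and is exactly the paper's argument: the paper proves this proposition in one line by combining \cref{lemma:rev_upper_bound} with \cref{lemma:bound_expectation_by_linear_SREV_BREV}, giving $\rev(\vec{X}) \leq \frac{3}{2}\srev(\vec{X}) + \frac{3}{2}\brev(\vec{X}) \leq 3\max\{\srev(\vec{X}),\brev(\vec{X})\}$. Your closing remark on extending to continuous distributions via \cref{prop:discretization-approximation-revenues} is also valid, since coordinate-wise rounding preserves independence and the discretization error vanishes as $K\to\infty$.
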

\begin{proof}
    This follows directly combining \cref{lemma:rev_upper_bound} with \cref{lemma:bound_expectation_by_linear_SREV_BREV}.
\end{proof}

Furthermore, this provides an improvement of the constant approximation rate of selling all items in the grand bundle when items are iid.

\begin{proposition}
    \label{prop:upper_bound_m_iid_brev}
    Let $X_1,X_2,\dots,X_m$ be iid random variables in the unit $K$-grid $I_K$, for some positive integer $K$. Then 
    \begin{equation}
        \label{eq:brev_upper_bound_m_iid}
    \rev(\vec{X}) \leq 4.180 \; \brev(\vec{X}).
    \end{equation}
\end{proposition}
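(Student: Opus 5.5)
Combining \cref{lemma:rev_upper_bound} with \cref{lemma:bound_expectation_by_linear_SREV_BREV} gives
\[
\rev(\vec{X}) \leq \srev(\vec{X}) + \frac{1}{2}\srev(\vec{X}) + \frac{3}{2}\brev(\vec{X}) = \frac{3}{2}\srev(\vec{X}) + \frac{3}{2}\brev(\vec{X}).
\]
It therefore suffices to bound $\srev$ by a constant multiple of $\brev$ in the iid case. For this I would invoke the result of \textcite{Kupfer2016} mentioned in the related work: for iid items, $\srev(\vec{X}) \leq 1.787\,\brev(\vec{X})$. Substituting,
\[
\rev(\vec{X}) \leq \left(\tfrac{3}{2}\cdot 1.787 + \tfrac{3}{2}\right)\brev(\vec{X}) = 4.1805\,\brev(\vec{X}).
\]
Since $4.1805$ slightly exceeds $4.180$, I must check the precise form of Kupfer's constant. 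If it is stated as $1.787$ rounded up from something like $1.7866$, then $1.5\cdot 2.7866 = 4.1799 \leq 4.180$ and we are done. Otherwise the stated bound needs a slight improvement elsewhere.

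The main obstacle is this numerical tightness. My fallback plan is to avoid the lossy step $\srev \le 1.787\,\brev$ applied to the full $\frac{3}{2}\srev$ term. I would instead write
\[
\rev \leq \srev + \min\Bigl\{\tfrac{1}{2}\srev + \tfrac{3}{2}\brev,\ \tfrac{8}{3}\brev\Bigr\},
\]
using \cref{lemma:bound_expectation_by_BREV} for the second option. I would then optimize a convex combination of the two bounds on the expectation term. With weight $\theta$ on the linear bound, the coefficient of $\brev$ becomes
\[
(1+\theta/2)c + \tfrac{3}{2}\theta + \tfrac{8}{3}(1-\theta), \qquad c=1.787.
\]
This is linear in $\theta$, so the minimum is at an endpoint. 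The endpoints give $4.1805$ and $c+\frac{8}{3}\approx 4.454$, so the convex combination does not help.

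I therefore expect the intended proof to be exactly the first computation, with Kupfer's constant known to slightly better precision, roughly $\srev \leq 1.7866\,\brev$. This fits the statement writing $4.180$ with three decimals. I would cite the exact form from Kupfer, so the proof is a one-line combination of \cref{lemma:rev_upper_bound}, \cref{lemma:bound_expectation_by_linear_SREV_BREV} and Kupfer's iid inequality.
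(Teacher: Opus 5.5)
Your proposal is the paper's proof: combine \cref{lemma:rev_upper_bound} with \cref{lemma:bound_expectation_by_linear_SREV_BREV} to get $\rev\le\frac32\srev+\frac32\brev$, then apply Kupfer's iid bound $\srev\le C\,\brev$ to obtain $\frac32(1+C)\,\brev$. The paper also just takes $C\approx 1.787$ and never addresses the third-decimal issue you raise ($\frac32\cdot 2.787=4.1805$), so its $4.180$ depends on the precise value of Kupfer's constant exactly as yours does, and your check that mixing in the $\frac{8}{3}\brev$ bound cannot help is correct.
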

\begin{proof}
    This follows combining \cref{lemma:rev_upper_bound} and \cref{lemma:bound_expectation_by_linear_SREV_BREV} with a constant $C$ for 
    \[\srev (\vec{X}) \leq C \; \brev (\vec{X}) .\]
    If items are iid \textcite{Kupfer2016} shows that $C \approx 1.787$ holds for any number of iid items. Therefore, we multiply the constant of $\frac{3}{2}$ for $\srev$ from \eqref{eq:rev_master_inequality} and \cref{lemma:bound_expectation_by_linear_SREV_BREV} by the constant $C$ and obtain the proposed bound.
\end{proof}

\section{A Tight Bound for Selling Two IID Items Separately}
\label{sec:selling_separately_2_iid}

In this section we return to the discrete auction setting and prove a tight bound for the approximation ratio of selling two identically and independently distributed items compared to the optimal selling mechanism. We do this by providing a carefully chosen dual solution of Lagrangian multipliers which extends the flow in \cref{sec:lagrangian-dual-many-items-weak} carefully. Here, the discretization according to integer $K$ is of particular interest, as we show the tight upper bound $(1+\omega_K)$ depending on a fixed choice of $K$. By this we show that for \emph{every} choice of discretization fineness the approximation's upper bound is strictly smaller than the known lower bound result from Hart and Nisan \cite{Hart:2017aa}. Yet, these bounds hold only for the restricted discretized setting. In the limit of the discretization $(1+\omega_K)$ converges to $(1+W(1/e))$, where $W$ denoted the Lambert-$W$ function (see, e.g.,~\cite{Corless1996}) closing the existing open gap of $[1.278, 1.368]$ \cite{Hart:2017aa} at the lower interval bound. A plot of $\omega_K$ and how it approaches $W(1/e)$ is provided in \cref{fig:omega_convergence_plot}.

We define our flow for a certain class of distributions where the smallest support point with positive probability mass is the optimal single-item selling price. 
\begin{definition}[Anchored Distributions]
    \label{def:distribution_achored_in_r}
    Let $K\geq 1$ and $r \in [K]$. A distribution supported on $\ssets{\frac{r}{K},\frac{r+1}{K},\dots,1} \subseteq I_K$ is called \emph{anchored} in $r$, if
    \begin{equation}\label{def:anchored_distribution_equation}
         k\sum_{l=k}^K f(l) \leq r \qquad \text{for all}\;\; k= r,r+1,\dots,K.
    \end{equation}
\end{definition}
Note that condition~\eqref{def:anchored_distribution_equation} is equivalent to $\frac{r}{K}=\frac{r}{K}\sum_{l=r}^Kf(l)\geq\frac{k}{K}\sum_{l=k}^Kf(l)$ and thus, for any $r$-anchored distribution, $\frac{r}{K}$ is an optimal posted price. Therefore, the expected revenue of selling two iid $r$-anchored items is equal to $\frac{2r}{K}$.

While the lower bound of \cite{Hart:2017aa} is attained by the (continuous) equal-revenue distribution\footnote{The distribution supported on $[1,\infty)$ with CDF $F(x) = 1-1/x$ and density function $f(x) = 1/x^2$. It has the property that selling at any posted price $x\geq 1$ it gives the same revenue, that is, $\rev(F)=1$.} we obtain the tight bound by a truncated discrete version of this distribution. For each $K$ we define distributions with similar properties and will later choose the candidate which maximizes the approximation ratio.
\begin{definition}[Discrete Equal-Revenue Distribution]
    \label{def:distribution_equal_revenue_r}
    Let $K\geq 1$. For $r\in[K]$, let $\er_{K}(r)$ denote the \emph{discrete equal-revenue} distribution with probability mass function
    \begin{equation}
         f(k) \coloneq \begin{cases}
0, & k< r,\\[1ex]
\dfrac{r}{k(k+1)}, & r\leq k< K,\\[1ex]
\dfrac{r}{K}, & k=K.
\end{cases}
    \end{equation}
\end{definition}
Observe, that every discrete equal-revenue distribution $\er_K (r)$ is anchored in $r$, thus, \[\srev(\er_K (r),\er_K (r))=\frac{2r}{K}.\] Furthermore, the tail probability can be computed exactly as 
\[\sum_{k=s}^K f(k) = \frac{r}{s} \]
for any $s=r,\dots, K$ which highlights the connection to the continuous equal revenue distribution that any posted price $\frac{s}{K}$ with $s\geq r$ yields revenue $\frac{2r}{K}$.

Before going into the flow definition for the upper bound, we state a mathematical relation of the discrete equal-revenue to general anchored distributions. In abuse of notation for the remainder of this section we will drop the word \textit{discrete} for equal-revenue distributions whenever it is obvious from the context that we refer to discrete distributions.

\begin{lemma}\label{lemma:first_order_stochastic_dominance_er_anchored}
    Let $K\geq 1$ and $r\in[K]$. Then \emph{any} distribution anchored in $r$ is (first-order) stochastically dominated\footnote{We say that a distribution $F$ \emph{stochastically dominates} (in the usual first order) a distribution $\hat{F}$, if $F(x)\leq \hat{F}(x)$ for any $x\in \R$; see, e.g., \textcite[Sec.~6.D]{Mas-Colell1995a}.} by the discrete equal-revenue distribution $\er_K(r)$. 
\end{lemma}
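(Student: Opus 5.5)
Since both distributions live on the grid $I_K$, it suffices to compare tail probabilities at grid points. Let $f$ be the mass function of a distribution anchored in $r$, and write its tail as $T(k)\coloneq\sum_{l=k}^K f(l)$. We must show that $T(k)$ is at most the corresponding tail of $\er_K(r)$ for every $k\in\Kzero$; this is exactly $F(x)\geq \hat F(x)$ for all real $x$, where $\hat F$ is the CDF of $\er_K(r)$. Recall that the tail of $\er_K(r)$ was computed just after \cref{def:distribution_equal_revenue_r}: it equals $\frac{r}{s}$ for $s=r,\dots,K$, and it equals $1$ for $s\leq r$, since $\er_K(r)$ has no mass below $r$. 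Between grid points both CDFs are constant, and both vanish below $0$ and equal $1$ at or above $1$. So the real-line inequality reduces to the grid inequality.

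The proof then splits into two cases.
\begin{itemize}
\item For $k\leq r$, the tail of $\er_K(r)$ is $1$, so $T(k)\leq 1$ holds trivially.
\item For $r\leq k\leq K$, the anchoring condition~\eqref{def:anchored_distribution_equation} gives $kT(k)\leq r$, that is, $T(k)\leq \frac{r}{k}$. This is precisely the tail of $\er_K(r)$ at $k$.
\end{itemize}
Hence, for every $k$, the probability of a value at least $\frac{k}{K}$ is no larger under the anchored distribution than under $\er_K(r)$. Passing to complements gives $F(x)\geq \hat F(x)$ everywhere, which is the claimed first-order dominance.

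The only step needing care is the translation between the index-based tail inequality and the CDF formulation used in the footnote. One must check the half-open intervals between grid points and make sure that $F(x)$, rather than $F(x^-)$, is compared correctly: for $x\in[\frac{k}{K},\frac{k+1}{K})$ we have $1-F(x)=T(k+1)$. I would also double-check the tail formula for $\er_K(r)$ by telescoping. Using $\frac{r}{k(k+1)}=\frac{r}{k}-\frac{r}{k+1}$ together with the top atom $\frac{r}{K}$, the sum $\sum_{k=s}^K f(k)$ collapses to $\frac{r}{s}$. This confirms that the extremal distribution makes every anchoring constraint tight.
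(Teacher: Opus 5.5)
Your proposal is correct and follows essentially the same route as the paper. Both arguments reduce first-order dominance to comparing tails at grid points, use that the tail of $\er_K(r)$ equals $1$ below $r$ and $r/s$ for $s\geq r$, and read off $s\sum_{k\geq s} f(k)\leq r$ directly from the anchoring condition. Your extra bookkeeping (translating between grid tails and the CDF convention of the footnote, and checking the tail formula by telescoping) only makes explicit what the paper leaves implicit.
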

\begin{proof}
    Fix an arbitrary $r\in[K]$, let $f$ be the probability mass function of an arbitrary distribution anchored in $r$ and $\bar{f}$ the probability mass function of $\er_K(r)$. We have to show that
    \[\sum_{k=s}^K \bar{f}(k) \geq \sum_{k=s}^K f(k) \]
    for every $s\in [K]$.
    By definition, \textit{every} tail of an $\er_K(r)$ satisfies $\sum_{k=s}^K \bar{f}_K(k) = \frac{r}{s}$ for $k\geq r$ and is equal to one otherwise, which reduces the inequality to
    \[\frac{r}{s} \geq \sum_{k=s}^K f(k). \]
    Multiplying both sides with $s$ this is by definition satisfied by $f$ for every $s$.
\end{proof}

Now we define a quantity we extensively use in the upper bound proof later on. \cref{fig:omega_convergence_plot} illustrates the behaviour of this quantity for different values of $K$.
\begin{definition}\label{def:omega_K}
    Let $K\geq1$. Define
    \begin{equation}\label{def:discrete_lambert_approximate}
        \omega_K \coloneq \max_{s\in[K]} \frac{s(H_K - H_s)}{K+s+1}.
    \end{equation}
\end{definition}

\begin{figure}
    \centering
    \begin{tikzpicture}
\begin{axis}[
    width=13cm,
    height=8cm,
    xmin=1, xmax=100,
    ymin=0, ymax=0.285,
    xlabel={$K$},
    ylabel={$\omega_K$},
    ylabel style={
        rotate=-90,
        yshift=-8pt
    },
    scaled y ticks=false,
    yticklabel style={
        /pgf/number format/fixed,
        /pgf/number format/precision=3
    },
    xtick={1,10,20,30,40,50,60,70,80,90,100},
    ymajorgrids=true,
    xmajorgrids=true,
    grid style={gray!20},
    legend style={
        at={(0.98,0.03)},
        anchor=south east,
        draw=none,
        fill=white
    },
]

\addplot[
    red,
    thick,
    mark=*,
    mark size=1.2pt,
] coordinates {
(1,0.000000000000)
(2,0.125000000000)
(3,0.166666666667)
(4,0.180555555556)
(5,0.195833333333)
(6,0.211111111111)
(7,0.218571428571)
(8,0.221428571429)
(9,0.229761904762)
(10,0.234778911565)
(11,0.237308802309)
(12,0.239971139971)
(13,0.243733427067)
(14,0.245942945943)
(15,0.246979131979)
(16,0.249408104522)
(17,0.251351997676)
(18,0.252453071846)
(19,0.253324536264)
(20,0.255053257143)
(21,0.256148293878)
(22,0.256719982804)
(23,0.257743244439)
(24,0.258803351384)
(25,0.259445674069)
(26,0.259785731053)
(27,0.260799913421)
(28,0.261473274834)
(29,0.261851927312)
(30,0.262354247314)
(31,0.263036401786)
(32,0.263470058886)
(33,0.263685342929)
(34,0.264275941439)
(35,0.264741992402)
(36,0.265019349538)
(37,0.265281585069)
(38,0.265765427052)
(39,0.266084757542)
(40,0.266256204512)
(41,0.266605023251)
(42,0.266952464386)
(43,0.267168844798)
(44,0.267305134423)
(45,0.267670960770)
(46,0.267919923018)
(47,0.268062149603)
(48,0.268274410778)
(49,0.268546908824)
(50,0.268723110227)
(51,0.268810787404)
(52,0.269076828853)
(53,0.269278887391)
(54,0.269400300110)
(55,0.269529604155)
(56,0.269751312121)
(57,0.269899291071)
(58,0.269978794776)
(59,0.270156763662)
(60,0.270325694042)
(61,0.270431552657)
(62,0.270507735475)
(63,0.270693181209)
(64,0.270820371797)
(65,0.270893023049)
(66,0.271012327498)
(67,0.271156807522)
(68,0.271250613854)
(69,0.271296812786)
(70,0.271449912197)
(71,0.271561219030)
(72,0.271628063747)
(73,0.271706859556)
(74,0.271832660850)
(75,0.271916860223)
(76,0.271961747256)
(77,0.272071203373)
(78,0.272170016033)
(79,0.272231888549)
(80,0.272281910200)
(81,0.272393040743)
(82,0.272469406443)
(83,0.272512761838)
(84,0.272590427192)
(85,0.272679177319)
(86,0.272736750203)
(87,0.272765867824)
(88,0.272865210291)
(89,0.272935068257)
(90,0.272976815753)
(91,0.273030824900)
(92,0.273111311209)
(93,0.273165132719)
(94,0.273193484884)
(95,0.273268476798)
(96,0.273332843921)
(97,0.273372985293)
(98,0.273409079685)
(99,0.273482667210)
(100,0.273533189169)
};
\addlegendentry{$\omega_K$}

\addplot[
    blue,
    dashed,
    thick,
] coordinates {
    (1,0.278464542761)
    (100,0.278464542761)
};
\addlegendentry{$W(1/e)\approx 0.27846$}

\end{axis}
\end{tikzpicture}
    \caption{Values of $\omega_K$ for $K=1,\dots,100$. The dashed horizontal line indicates the limiting value $W(1/e)\approx 0.27846$, approached from below as $K$ increases (see \cref{lemma:omega_K_convergence}). For $K=1,2,3,4,\dots$ the values of $\omega_K$ are $0,\frac{1}{8},\frac{1}{6},\frac{13}{72},\dots$.}
    \label{fig:omega_convergence_plot}
\end{figure}

The following lemma shows an important property for this quantity.
\begin{lemma}\label{lemma:omega_bound}
    Let $K\geq1$. Then for all $1\le r\le s\le K$,
    \begin{equation}\label{eq:omega_K_upper_bounds_anchored_value}
        \frac{r(H_s-H_r)}{r+s+1} \leq \omega_K.
    \end{equation} 
\end{lemma}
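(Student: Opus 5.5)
The plan is to reduce the statement to a monotonicity property of $\omega_K$ in $K$. Write
\[
g_K(r)\coloneqq \frac{r(H_K-H_r)}{K+r+1},\qquad\text{so that}\qquad \omega_K=\max_{r\in[K]}g_K(r).
\]
The left-hand side of~\eqref{eq:omega_K_upper_bounds_anchored_value} is exactly $g_s(r)$, and $1\le r\le s$, so $g_s(r)\le\omega_s$ by \cref{def:omega_K}. It therefore suffices to prove that $(\omega_K)_{K\ge1}$ is nondecreasing, i.e.\ $\omega_K\le\omega_{K+1}$ for every $K\ge1$. Iterating this from $K=s$ up to the given $K$ yields the claim.

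To show $\omega_K\le\omega_{K+1}$, fix a maximizer $r^*\in[K]$ of $g_K$ and exhibit some $r'\in[K+1]$ with $g_{K+1}(r')\ge g_K(r^*)$. The first candidate is $r'=r^*$. A direct computation, using $H_{K+1}=H_K+\frac1{K+1}$ and cross-multiplying, gives
\[
g_{K+1}(r^*)\ge g_K(r^*)\iff H_K-H_{r^*}\le \frac{K+r^*+1}{K+1}=1+\frac{r^*}{K+1}.
\]
So in the regime $H_K-H_{r^*}\le 1+\frac{r^*}{K+1}$ we are done. This is the same unimodality criterion that governs $s\mapsto (H_s-H_r)/(r+s+1)$: that ratio increases in $s$ as long as $H_s-H_r\le 1+\frac{r}{s+1}$ and decreases afterwards.

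In the complementary regime $H_K-H_{r^*}>1+\frac{r^*}{K+1}$, the maximizer $r^*$ is ``too small'' relative to $K$, and I would take $r'=r^*+1$ (which lies in $[K+1]$). The goal is then
\[
\frac{(r^*+1)(H_{K+1}-H_{r^*+1})}{K+r^*+3}\;\ge\;\frac{r^*(H_K-H_{r^*})}{K+r^*+1}.
\]
Setting $D\coloneqq H_K-H_{r^*}$, the left numerator is $(r^*+1)\bigl(D+\frac1{K+1}-\frac1{r^*+1}\bigr)=(r^*+1)D+\frac{r^*+1}{K+1}-1$. After clearing denominators, the inequality becomes linear in $D$ with positive leading coefficient:
\[
(r^*+1)(K+r^*+1)-r^*(K+r^*+3)=K+1-r^*> 0 .
\]
Hence it holds once $D$ exceeds an explicit threshold. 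I would verify that this threshold is at most $1+\frac{r^*}{K+1}$, so that it is implied by the regime assumption. Additionally, or as a fallback, I would use optimality of $r^*$ for $g_K$, namely $g_K(r^*)\ge g_K(r^*+1)$, which yields an upper bound on $D$ in terms of $r^*$ and $K$ and may be needed to close the inequality.

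I expect this second case to be the main obstacle. It requires balancing two exact harmonic-number identities against the first-order optimality condition of $r^*$, and the boundary cases $r^*=K$ and small $K$ must be checked by hand (e.g.\ $\omega_1=0$, $\omega_2=\frac18$).

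If the clean comparison fails, the fallback is to bypass monotonicity of $\omega$ and work directly with the unimodal function $s\mapsto g_s(r)$. If its maximum over $[r,K]$ is attained at $s=K$, the claim is immediate from \cref{def:omega_K}. Otherwise, at an interior peak $s^\star$ the criterion gives $H_{s^\star}-H_r\le 1+\frac{r+1}{s^\star}$, hence $g_{s^\star}(r)\le \frac{r}{s^\star}$, and it remains to dominate $\frac{r}{s^\star}$ by some $g_K(s')$ with a suitably chosen $s'$.
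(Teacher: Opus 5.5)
Your proof is correct, and the step you flag as the main obstacle closes immediately. With $r=r^*$ and $D=H_K-H_r$, the cross-multiplied inequality in your second case reads
\[
(K+1-r)\,D\;\ge\;\Bigl(1-\frac{r+1}{K+1}\Bigr)(K+r+1)=\frac{(K-r)(K+r+1)}{K+1}.
\]
So the threshold is $T=\bigl(1+\frac{r}{K+1}\bigr)\frac{K-r}{K-r+1}<1+\frac{r}{K+1}$. The regime assumption $D>1+\frac{r}{K+1}$ therefore gives $g_{K+1}(r+1)>g_K(r)$ directly.

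Several parts of your plan are therefore unnecessary. You do not need the optimality condition $g_K(r^*)\ge g_K(r^*+1)$, nor the fallback through the peak $s^\star$. You also do not need hand-checked small cases: for $r=K$ one has $D=0$, which always falls under the first case. In fact your dichotomy never uses that $r^*$ is a maximizer. It shows $\max\{g_{K+1}(r),g_{K+1}(r+1)\}\ge g_K(r)$ for every $r\in[K]$, hence $\omega_{K+1}\ge\omega_K$. The lemma then follows from $g_s(r)\le\omega_s\le\omega_K$.

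The paper's text states this lemma without a written proof, so I cannot compare your argument with theirs step by step. On its own terms, the reduction to monotonicity of $(\omega_K)_K$ is a clean, self-contained route. As a by-product it shows that $\omega_K$ is nondecreasing in $K$. This matches the values plotted in \cref{fig:omega_convergence_plot}, but the paper does not state it as a result.
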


In the following section we define Lagrangian multipliers for \eqref{eq:Lagrangian} in the setting of two iid items. We introduce a flow for the more restricted setting of both items being identically distributed according to an anchored distribution. Subsequently, we prove in \cref{lemma:two_iid_anchored_upper_bound} that on the one hand this restricted setting of distributions has an approximation rate of $(1+\omega_K)$ and on the other hand, in \cref{prop:two_items_iid_SREV} we show, that the restriction is without loss of generality, i.e., the approximation rate $(1+\omega_K)$ holds for the two iid item setting with any distribution supported in $I_K$.

\subsection{A Strong Dual for Two IID Items}
\label{sec:tight_flow_anchored_distribution}

For the construction of Lagrangian multipliers we define four layers of flows which are added in the end if multiple layers affect the same arc. Therefore, assume that all multipliers are initially zero and all contributions we define are added to the initial values. From now on, we consider an arbitrary but fixed distribution which is anchored in $r$. Therefore we know that $\srev(F_r,F_r) = \frac{2r}{K}$.

First, define the $\theta$ variables as $\theta(r,r)=\frac{K}{2r}$ and $\theta(k_1,k_2)=0$ otherwise. Observe, that with this choice the coefficient of $\rho$ directly cancels and from the relaxed $\theta(r,r)$ constraint every $p(k_1,k_2)$ variable has an initial flow of $\frac{K}{2r} f(k_1) f(k_2)$. For this initial flow mass we now exceed the flow-routing rule \eqref{eq:flow_rule_definition} by choosing $w_1(k_1,k_2)=1$ for $k_1>r$ and $k_2\geq r$ as well as $w_2(k_1,k_2)=1$ for $k_1\geq r$ and $k_2>r$, both scaled by $\theta(r,r)$. This defines the first flow layer. Throughout the proof, we will call all grid points $(r,\cdot)$ or $(\cdot,r)$ boundary points, and all other grid points, interior points.
    
\paragraph{Rectangular downwards flow}
    For $k_1=r+1, \dots, K$ and $k_2=r,\dots,K$,
    \begin{equation}
        \lambda_{(k_1,k_2)\rightarrow(k_1-1,k_2)}^\text{rect} = \frac{K}{2r} f(k_2) \sum_{l=k_1}^K f(l),
    \end{equation}
    and symmetrically for $k_1=r, \dots, K$ and $k_2=r+1,\dots,K$, 
    \begin{equation}
        \lambda_{(k_1,k_2)\rightarrow(k_1,k_2-1)}^\text{rect} = \frac{K}{2r} f(k_1) \sum_{l=k_2}^K f(l).
    \end{equation}
    As these outgoing flows exceed the initial flow mass at all points with $k_1>r$ and $k_2>r$ by exactly this mass, the respective coefficients of the payment variables change from $\frac{K}{2r} f(k_1) f(k_2)$ to $-\frac{K}{2r} f(k_1) f(k_2)$. A visualization of this layer is provided in \cref{fig:rectangular_flows}. The next flow layer defines circular flows which do not affect the payments' but only the allocation' coefficients. Therefore, we also define $\psi$ variables to control this.
    \begin{figure}[ht]
        \centering
        \begin{tikzpicture}[scale=1.3]

    \def\K{4}
    \def\r{1}

    \fill[gray!25]
        ({\r-.15},{\r-.15})
        rectangle
        ({\K+.15},{\K+.15});

    \foreach \i in {2,...,4}
        \foreach \j in {1,...,4}
            \draw[myarrow] (\i,\j) -- ({\i-1},\j);

    \foreach \i in {1,...,4}
        \foreach \j in {2,...,4}
            \draw[myarrow] (\i,\j) -- (\i,{\j-1});

    \foreach \i in {0,...,4}
        \foreach \j in {0,...,4}
            \fill (\i,\j) circle (1.5pt);

    \draw (-.5,-.5) rectangle (4.5,4.5);

    \foreach \i in {0,...,4} {
        \draw (\i,-.5) -- (\i,-.57);
        \node[below] at (\i,-.57) {\small \i};
    }

    \foreach \j in {0,...,4} {
        \draw (-.5,\j) -- (-.57,\j);
        \node[left] at (-.57,\j) {\small \j};
    }

    \node at (2,-1.5) {$k_1$};
    \node at (-1.5,2) {$k_2$};

\end{tikzpicture}
        \caption{The first flow layer of rectangular flows for $K=4$ and $r=1$. These flows cause an imbalance in every interior point as $w_1(k_1,k_2)=w_2(k_1,k_2)=1$.}
        \label{fig:rectangular_flows}
    \end{figure}
    
\paragraph{Circular flow}
    For $k_1=r+1, \dots, K$ and $k_2=r,\dots,K$, add the circular flows
    \begin{align*}
        \lambda_{(k_1,k_2)\rightarrow(k_1-1,k_2)}^\text{circ} = \frac{K}{2r} f({k_2}) (r - k_1\sum_{l=k_1}^K f(l)), \\
        \lambda_{(k_1-1,k_2)\rightarrow(k_1,k_2)}^\text{circ} = \frac{K}{2r} f({k_2}) (r - k_1\sum_{l=k_1}^K f(l)),
    \end{align*}
    and symmetrically for $k_1=r, \dots, K$ and $k_2=r+1,\dots,K$, add the flows
    \begin{align*}
        \lambda_{(k_1,k_2)\rightarrow(k_1,k_2-1)}^\text{circ} = \frac{K}{2r} f({k_1}) (r - k_2\sum_{l=k_2}^K f(l)), \\
        \lambda_{(k_1,k_2-1)\rightarrow(k_1,k_2)}^\text{circ} = \frac{K}{2r} f({k_1}) (r - k_2\sum_{l=k_2}^K f(l)).
    \end{align*}
    We also set for $k_2=r,\dots,K$
    \[ \psi_1^\text{circ}(K,k_2) = \frac{1}{2} f(k_2),\]
    as well as for $k_1=r,\dots,K$
    \[ \psi_2^\text{circ}(k_1,K) = \frac{1}{2} f(k_1).\]
    Note, that as we define the $\psi$ variables for both items, their total mass at the border sums up to one. \cref{fig:circular_flows} visualizes the circular flows as well as the mass induced by $\psi$ which accumulates at the top borders. With the next flow layer, we will manage to cancel all the interior payment variables' coefficients. To do so, we send a total flow of the missing mass from two border points, i.e., a grid point where at least one index equals $r$, to an interior one. For each such point in the lower triangular part we choose the two border grid points which form the $45$ degree diagonal the interior point lies on. In the upper triangular part no such border points exist. There, the flow comes from $(K,r)$ and $(r,K)$.
    \begin{figure}[ht]
        \centering
        \begin{tikzpicture}[scale=1.3]

    \def\K{4}
    \def\r{1}

    \fill[gray!25]
        ({\r-.15},{\r-.15})
        rectangle
        ({\K+.15},{\K+.15});

    \fill[red!30,rounded corners=2pt]
        (3.875,.875)
        rectangle
        (4.125,4.125);

    \fill[red!30,rounded corners=2pt]
        (.875,3.875)
        rectangle
        (4.125,4.125);

    \foreach \i in {2,...,4}
        \foreach \j in {1,...,4} {
            \draw[myarrow] (\i,\j)
                to[bend left=13] ({\i-1},\j);
            \draw[myarrow] ({\i-1},\j)
                to[bend left=13] (\i,\j);
        }

    \foreach \i in {1,...,4}
        \foreach \j in {2,...,4} {
            \draw[myarrow] (\i,\j)
                to[bend left=13] (\i,{\j-1});
            \draw[myarrow] (\i,{\j-1})
                to[bend left=13] (\i,\j);
        }

    \foreach \i in {0,...,4}
        \foreach \j in {0,...,4}
            \fill (\i,\j) circle (1.5pt);

    \draw (-.5,-.5) rectangle (4.5,4.5);

    \foreach \i in {0,...,4} {
        \draw (\i,-.5) -- (\i,-.57);
        \node[below] at (\i,-.57) {\small \i};
    }

    \foreach \j in {0,...,4} {
        \draw (-.5,\j) -- (-.57,\j);
        \node[left] at (-.57,\j) {\small \j};
    }

    \node at (2,-1.5) {$k_1$};
    \node at (-1.5,2) {$k_2$};

\end{tikzpicture}
        \caption{The second flow layer of circular flows for $K=4$ and $r=1$. These flows cancel with respect to payment variables but affect the allocation variables by shifting a total constant mass of one to the upper boundaries.}
        \label{fig:circular_flows}
    \end{figure}

\paragraph{Diagonal flow}
    For any inner grid point of the lower triangular part, i.e., for any $(k_1,k_2)$ with $k_1+k_2 \le K+r$ and $k_1,k_2 \neq r$, set
    \begin{align*}
        \lambda_{(k_1+k_2-r,r)\rightarrow(k_1,k_2)}^\text{diag} = \frac{K}{2r} f(k_1) f(k_2) \frac{k_1-r}{k_1+k_2-2r},\\
        \lambda_{(r,k_1+k_2-r)\rightarrow(k_1,k_2)}^\text{diag} = \frac{K}{2r} f(k_1) f(k_2) \frac{k_2-r}{k_1+k_2-2r}.
    \end{align*}
    For any grid point in the upper triangular part, i.e., for any $(k_1,k_2)$ with $k_1+k_2 > K+r$, set
    \begin{align*}
        \lambda_{(K,r)\rightarrow(k_1,k_2)}^\text{diag} = \frac{K}{2r} f(k_1) f(k_2) \frac{k_1-r}{k_1+k_2-2r}, \\
        \lambda_{(r,K)\rightarrow(k_1,k_2)}^\text{diag} = \frac{K}{2r} f(k_1) f(k_2) \frac{k_2-r}{k_1+k_2-2r}.
    \end{align*}

    We also add for these grid points, i.e., for any $(k_1,k_2)$ with $k_1+k_2 > K+r$,
    \[ \psi_1^\text{diag}(k_1,k_2) = \frac{K}{2r} f(k_1) f(k_2) \frac{k_1-r}{k_1+k_2-2r} \frac{k_1+k_2-r-K}{K},\]
    as well as
    \[ \psi_2^\text{diag}(k_1,k_2) = \frac{K}{2r} f(k_1) f(k_2) \frac{k_2-r}{k_1+k_2-2r} \frac{k_1+k_2-r-K}{K}.\]
    We also visualize the diagonal flows in \cref{fig:diagonal_flows}. Note that now every interior point has exactly a net outflow of the initial flow mass, hence, all payment variables' coefficients cancel. In addition, the definition of the $\psi$ variables manages that also all allocation variables' coefficients become zero in the upper triangle where the flow comes from boundary point which are \textit{too close}. The remaining imbalance on the boundary points is resolved by the final flows.
    \begin{figure}[ht]
    \centering
    \begin{subfigure}[t]{0.48\textwidth}
        \centering
        \begin{tikzpicture}[scale=1.0]

    \def\K{4}
    \def\r{1}

    \fill[gray!25]
        ({\r-.15},{\r-.15})
        rectangle
        ({\K+.15},{\K+.15});

    \foreach \i in {2,...,4}
        \foreach \j in {2,...,4} {
            \pgfmathtruncatemacro{\sumij}{\i+\j}
            \ifnum\sumij<6
                \pgfmathtruncatemacro{\s}{\i+\j-1}

                \draw[myarrow]
                    (\s,1)
                    to[bend left=10]
                    (\i,\j);

                \draw[myarrow]
                    (1,\s)
                    to[bend right=10]
                    (\i,\j);
            \fi
        }

    \foreach \i in {0,...,4}
        \foreach \j in {0,...,4}
            \fill (\i,\j) circle (1.5pt);

    \draw (-.5,-.5) rectangle (4.5,4.5);

    \foreach \i in {0,...,4} {
        \draw (\i,-.5) -- (\i,-.57);
        \node[below] at (\i,-.57) {\small \i};
    }

    \foreach \j in {0,...,4} {
        \draw (-.5,\j) -- (-.57,\j);
        \node[left] at (-.57,\j) {\small \j};
    }

    \node at (2,-1.5) {$k_1$};
    \node at (-1.5,2) {$k_2$};

\end{tikzpicture}
        \caption{In the lower triangular part every interior point can be expressed as a convex combination of lower boundary points.}
        \label{fig:lower_diagonal_flows}
    \end{subfigure}
    \hfill
    \begin{subfigure}[t]{0.48\textwidth}
        \centering
        \begin{tikzpicture}[scale=1.0]

    \def\K{4}
    \def\r{1}
    \def\t{1}

    \fill[gray!25]
        ({\r-.15},{\r-.15})
        rectangle
        ({\K+.15},{\K+.15});

    \fill[red!30,rounded corners=2pt]
        (4.1,1.9)
        -- (1.9,4.1)
        -- (4.1,4.1)
        -- cycle;

    \foreach \i in {1,...,4}
        \foreach \j in {1,...,4} {
            \pgfmathtruncatemacro{\sumij}{\i+\j}
            \ifnum\sumij>5
                \draw[myarrow]
                    (4,1)
                    to[bend left=16]
                    (\i,\j);

                \draw[myarrow]
                    (1,4)
                    to[bend right=16]
                    (\i,\j);
            \fi
        }

    \coordinate (A) at (2,4);
    \coordinate (B) at (3,3);
    \coordinate (C) at (5,1);

    \draw[dashed,gray!70,line width=1pt]
        (1.8,4.2) -- (C);

    \draw[red!50,line width=1.2pt]
        (4,1)
        to[bend left=12]
        (C);

    \draw[red!50,line width=1.2pt,myarrow]
        (C)
        to[bend left=12]
        (B);

    \foreach \i in {0,...,4}
        \foreach \j in {0,...,4}
            \fill (\i,\j) circle (1.5pt);

    \fill (C) circle (1.8pt);

    \draw (-.5,-.5) rectangle (5.5,4.5);

    \foreach \i in {0,...,4} {
        \draw (\i,-.5) -- (\i,-.57);
        \node[below] at (\i,-.57) {\small \i};
    }

    \foreach \j in {0,...,4} {
        \draw (-.5,\j) -- (-.57,\j);
        \node[left] at (-.57,\j) {\small \j};
    }

    \node at (2.5,-1.5) {$k_1$};
    \node at (-1.5,2) {$k_2$};

\end{tikzpicture}
        \caption{In the upper triangular part the expression as a convex combination is not possible. The detour from the corner points is captured by $\psi$ multipliers (red).}
        \label{fig:upper_diagonal_flows}
    \end{subfigure}
    \caption{The third flow layer of diagonal flows for $K=4$ and $r=1$. These flows repair the imbalance from the first layer in every interior point.}
    \label{fig:diagonal_flows}
    \end{figure}

\paragraph{Boundary flow}
    The boundary flow requires more involved expressions. For this we define the boundary imbalance as
    \begin{equation}\label{def:boundary_correction_term}
        B(f;s) \coloneq \frac{1}{K} \Big[ \sum_{k_1+k_2\geq r+s+1} (k_1+k_2 - r-s) f(k_1)f(k_2) - 2 \sum_{k=s+1}^K (k-s) f(k)  \Big],
    \end{equation}
    where the second sum conventionally equals zero for $s=K$. Recall the quantity $\omega_K$ from \cref{def:omega_K} and set for every interior boundary point $k_1=r+1,\ldots,K-1$,
\[
\lambda_{(k_1+1,r)\rightarrow(k_1,r)}^{\mathrm{bdry}}
=\lambda_{(k_1-1,r)\rightarrow(k_1,r)}^{\mathrm{bdry}}
=\frac{K}{2r}\frac{K}{2} \left( \frac{2r}{K}\omega_K-B(f;k_1) \right).
\]
At the lower endpoint we add
\[\lambda_{(r+1,r)\rightarrow(r,r)}^{\mathrm{bdry}}
= \frac{K}{2r}\frac{K}{2}\left( \frac{2r}{K}\omega_K-B(f;r) \right),\]
as well as
\[\alpha_1^{\mathrm{bdry}}(r,r)= \frac{K}{2r}\frac{1}{2}
\left( \frac{2r}{K}\omega_K-B(f;r)\right),\]
and at the upper boundary endpoint define
\[\lambda_{(K-1,r)\rightarrow(K,r)}^{\mathrm{bdry}}=
\frac{K}{2r} \frac{K}{2}\left(\frac{2r}{K}\omega_K-B(f;K)\right),\]
as well as
\[\psi_1^{\mathrm{bdry}}(K,r)=\frac{K}{2r}\frac{1}{2}\left(\frac{2r}{K}\omega_K-B(f;K)\right).\]
We use the symmetric construction on the boundary $(r,k_2)$ for $k_2=r,\dots, K$, with $\alpha_2^{\mathrm{bdry}}(r,r)$ at the lower endpoint and $\psi_2^{\mathrm{bdry}}(r,K)$ at the upper endpoint.
Finally, we set
\[\mu(r,r)=\frac{K}{2r},\]
and leave $\mu(k_1,k_2)=0$ for all other $(k_1,k_2)\neq(r,r)$. This multiplier
cancels the remaining individual-rationality term at $(r,r)$.

This boundary flow which is visualized in \cref{fig:boundary_flows} now adds exactly the missing amount required to replace the distribution-dependent coefficient $B(f;k)$ by the common coefficient $\frac{2r}{K}\omega_K$ at every boundary position $(r,k)$ and $(k,r)$.
    \begin{figure}[ht]
        \centering
        \begin{tikzpicture}[scale=1.3]

    \def\K{4}
    \def\r{1}

    \fill[gray!20]
        ({\r-.15},{\r-.15})
        rectangle
        ({\K+.15},{\K+.15});

    \fill[red!30] (4,1) circle (.15);
    \fill[red!30] (1,4) circle (.15);
    \fill[blue!25] (1,1) circle (.15);

    \foreach \i in {1,...,3} {
        \pgfmathtruncatemacro{\ip}{\i+1}

        \draw[myarrow]
            (\i,1)
            to[bend left=16]
            (\ip,1);

        \draw[myarrow]
            (\ip,1)
            to[bend left=16]
            (\i,1);
    }

    \foreach \j in {1,...,3} {
        \pgfmathtruncatemacro{\jp}{\j+1}

        \draw[myarrow]
            (1,\j)
            to[bend left=16]
            (1,\jp);

        \draw[myarrow]
            (1,\jp)
            to[bend left=16]
            (1,\j);
    }

    \foreach \i in {0,...,4}
        \foreach \j in {0,...,4}
            \fill (\i,\j) circle (1.5pt);

    \draw (-.5,-.5) rectangle (4.5,4.5);

    \foreach \i in {0,...,4} {
        \draw (\i,-.5) -- (\i,-.57);
        \node[below] at (\i,-.57) {\small \i};
    }

    \foreach \j in {0,...,4} {
        \draw (-.5,\j) -- (-.57,\j);
        \node[left] at (-.57,\j) {\small \j};
    }

    \node at (2,-1.5) {$k_1$};
    \node at (-1.5,2) {$k_2$};

\end{tikzpicture}
        \caption{The fourth flow layer of boundary flows for $K=4$ and $r=1$. These flows handle the remaining imbalance on the lower boundaries. While the $\psi$ multipliers repair the imbalance at the top boundary corners (red), $\alpha$ nicely captures all outgoing mass from the grid (blue).}
        \label{fig:boundary_flows}
    \end{figure}

In the next lemma we prove an upper bound of the approximation ratio of selling two items, both independently distributed according to an anchored distribution supported on $I_K$. We first verify that indeed, all multipliers are nonnegative, i.e., the proposed solution is feasible, and then quantify the objective as $(1+\omega_K)$.

\begin{lemma}\label{lemma:two_iid_anchored_upper_bound}
    Let $K\geq1$, $r\in[K]$, and the values of two iid items be distributed according to an anchored distribution $F_r$. Then
    \begin{equation}\label{eq:two_iid_anchored_upper_bound}
        \rev (F_r,F_r) \leq \left( 1 + \omega_K \right) \srev(F_r,F_r).
    \end{equation}
\end{lemma}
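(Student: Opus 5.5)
The plan is weak duality for the Lagrangian \eqref{eq:Lagrangian}, using the four-layer multiplier construction of \cref{sec:tight_flow_anchored_distribution} (rectangular, circular, diagonal and boundary flows, together with $\theta(r,r)=\frac{K}{2r}$, $\mu(r,r)=\frac{K}{2r}$ and the $\alpha,\psi$ contributions). Fix $F_r$. Then for every feasible $(\vec a,p)$ and every $\rho$, the Lagrangian value is an upper bound on $\rho$. I will show two things. First, all multipliers are nonnegative. Second, the coefficients of $\rho$, of every $p(k_1,k_2)$ and of every $a_j(k_1,k_2)$ vanish. The Lagrangian then collapses to $\sum_{j,\vec k}\psi_j(\vec k)$, and the goal is to prove this sum equals $1+\omega_K$. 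That yields $\rev(F_r,F_r)/\srev(F_r,F_r)\le 1+\omega_K$. Points with some $k_i<r$ carry zero mass, and for them I set all multipliers to zero.

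\textbf{Nonnegativity.} The rectangular, diagonal and $\psi^{\mathrm{circ}}$, $\psi^{\mathrm{diag}}$ terms are visibly nonnegative. For the diagonal $\psi$'s this uses $k_1+k_2>K+r$. The circular flows are nonnegative exactly because of the anchoring condition $k\sum_{l\ge k}f(l)\le r$. The boundary flows need
\[
B(f;s)\le \tfrac{2r}{K}\,\omega_K \qquad \text{for all } s=r,\dots,K,
\]
and this is the main obstacle. I would argue in three steps. First, write $B(f;s)$ via tails as $\frac1K\sum_{t}\big[\prob{X_1+X_2\ge t}-2\prob{X\ge t}\big]$, summed over the appropriate thresholds beyond $r+s$. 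Second, use a monotonicity argument to reduce to the extremal anchored distribution $\er_K(r)$. This is where I would invoke \cref{lemma:first_order_stochastic_dominance_er_anchored}. The difficulty is that $B$ is not obviously monotone under dominance, so I would try a coupling or exchange argument that moves mass upward one unit at a time and shows $B$ does not decrease. Third, for $\er_K(r)$ compute $B$ in closed form, using the tails $r/k$. I expect this to give $B=\frac{r}{K}\cdot\frac{2r(H_s-H_r)}{r+s+1}$ or a similar harmonic expression. \cref{lemma:omega_bound} then closes the inequality.

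\textbf{Cancellation.} Payment coefficients are checked point by point.
\begin{itemize}
\item At interior points, the rectangular layer flips the coefficient to $-\frac{K}{2r}f(k_1)f(k_2)$. The two diagonal inflows add exactly $\frac{K}{2r}f(k_1)f(k_2)$, since their weights sum to $1$. Circular and boundary flows are balanced, so they contribute nothing.
\item At boundary points, the outflows into diagonals must match the initial mass plus the incoming rectangular flow.
\item At $(r,r)$, $\mu$ absorbs the remaining mass.
\end{itemize}
For allocation coefficients I reuse the telescoping computation from the proof of \cref{lemma:rev_upper_bound}, specialised to $R_j=\frac{r}{K}f(k_{-j})$. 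This cancels the routed-direction terms, with the top terms absorbed by $\psi^{\mathrm{circ}}$. The diagonal flows then need separate treatment in the two triangles:
\begin{itemize}
\item In the lower triangle, the target $(k_1,k_2)$ is a convex combination of its two sources with weights $\frac{k_1-r}{k_1+k_2-2r}$ and $\frac{k_2-r}{k_1+k_2-2r}$. Hence the allocation terms at the target match the sources, and only source-side terms remain.
\item In the upper triangle, the sources $(K,r),(r,K)$ overshoot. The discrepancy is exactly $\frac{k_1+k_2-r-K}{K}$ times the flow, and $\psi^{\mathrm{diag}}$ absorbs it.
\end{itemize}
The residual source-side coefficients on the boundary line $(s,r)$ aggregate to $B(f;s)$. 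I would verify this by directly summing the diagonal outflows from $(s,r)$ and rewriting the sum in tail form. The boundary circulation with weights $\frac{K}{2}(\frac{2r}{K}\omega_K-B)$ then telescopes, in the same way as in \cref{lemma:rev_upper_bound}, against $\alpha_1(r,r)$ and $\psi_1(K,r)$.

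\textbf{Objective value.} Finally I sum the $\psi$'s. $\psi^{\mathrm{circ}}$ contributes $\frac12+\frac12=1$. $\psi^{\mathrm{diag}}$ plus $\psi^{\mathrm{bdry}}$ should combine, via the definition of $B(f;K)$ with its conventional zero second sum, into $\frac{K}{2r}\cdot\frac{2r}{K}\omega_K=\omega_K$. The $B(f;K)$ part should cancel against the total of $\psi^{\mathrm{diag}}$. This gives total $1+\omega_K$.
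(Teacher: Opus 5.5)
Your overall plan matches the paper's route: weak duality for the four-layer certificate, with feasibility reduced to $B(f;s)\le\frac{2r}{K}\omega_K$ via \cref{lemma:first_order_stochastic_dominance_er_anchored} and \cref{lemma:omega_bound}. Two parts are right as stated. Your objective count is correct: $1$ from $\psi^{\mathrm{circ}}$, $\frac{K}{2r}B(f;K)$ from $\psi^{\mathrm{diag}}$, and $\omega_K-\frac{K}{2r}B(f;K)$ from $\psi^{\mathrm{bdry}}$. Your guessed closed form $B(\er_K(r);s)=\frac{2r^2(H_s-H_r)}{K(r+s+1)}$ is also correct. The monotonicity you were unsure about is easy. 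Let $J_1,J_2$ be the iid value indices. A three-case check shows $(k_1+k_2-r-s)^+-(k_1-s)^+-(k_2-s)^+=(\min\{k_1,s\}+\min\{k_2,s\}-r-s)^+$ for $k_1,k_2\ge r$. Hence $K\,B(f;s)=\expect{(\min\{J_1,s\}+\min\{J_2,s\}-r-s)^+}$, which is nondecreasing in each $J_i$, and a monotone coupling gives $B(f;s)\le B(\er_K(r);s)$.

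The genuine gap is your cancellation argument on the boundary lines. You assert that at a boundary point the diagonal outflow matches the initial mass plus the rectangular inflow. You also treat the boundary layer as a payment-neutral circulation that telescopes as in \cref{lemma:rev_upper_bound}. Both are false.
For $r<s<K$, after the rectangular layer the coefficient of $p(s,r)$ is $\frac{K}{2r}f(s)(1-f(r))$. By symmetry, the diagonal outflow of $(s,r)$ is only $\frac{K}{4r}\bigl(\prob{J_1+J_2=r+s}-2f(r)f(s)\bigr)$. For example, at $(r+1,r)$ with $r+1<K$ no diagonal flow leaves at all, yet the coefficient there is $\frac{K}{2r}f(r+1)(1-f(r))$.
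Likewise, at $(r,r)$ the coefficient after the rectangular inflows is $\frac{K}{2r}f(r)(2-f(r))$, not the fixed $\mu(r,r)=\frac{K}{2r}$. Your verification would therefore break exactly on the boundary. The same goes for the bottom end of your reused telescoping: here there is no $\alpha_j$ at $k_j=r$ for it to land on, as there was at $k_j=0$ in \cref{lemma:rev_upper_bound}.

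What actually closes the boundary is the net flow of the boundary layer. At an interior boundary point both of its flows point \emph{into} $(s,r)$. With $c(s)\coloneqq\frac{K^2}{4r}\bigl(\frac{2r}{K}\omega_K-B(f;s)\bigr)$, its net inflow there is $2c(s)-c(s-1)-c(s+1)$. The identity you need is
\[ K\bigl(B(f;s+1)-2B(f;s)+B(f;s-1)\bigr)=\prob{J_1+J_2=r+s}-2f(s), \]
which makes this net inflow exactly cancel the residual $\frac{K}{2r}f(s)-\frac{K}{4r}\prob{J_1+J_2=r+s}$. At the endpoints you need the first-difference version $K\bigl(B(f;s)-B(f;s-1)\bigr)=2\prob{J_1\ge s}-\prob{J_1+J_2\ge r+s}$. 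For instance, it supplies the net inflow $\frac{K}{2r}(1-f(r))^2$ at $(r,r)$ that makes $\mu(r,r)=\frac{K}{2r}$ correct. This is the reason for the specific form of $B$, rather than any aggregation of source-side allocation residuals.
On the allocation side, the two equal inflows into $(s,r)$ come from $s-1$ and $s+1$. So the layer's effect on $a_j$ is exactly $-\frac{k_j}{K}$ times its effect on $p$, i.e.\ it acts like a signed $\mu$, and allocation coefficients cancel once payments do. The only exceptions are the one-sided endpoints, which $\alpha_1(r,r)$ and $\psi_1(K,r)$ correct.
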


Before extending the upper bound from anchored distributions to arbitrary distributions, we show that the factor $(1+\omega_K)$ is already tight within the class of equal-revenue distributions. 

\begin{lemma}\label{lemma:equal_revenue_tightness}
    Let $K\geq 1$, and let $r^\star\in[K]$ satisfy
    \[\omega_K = \frac{r^\star(H_K-H_{r^\star})}{K+r^\star+1}. \]
    Then the equal-revenue distribution $\er_K(r^\star)$ attains the upper bound
    \[ \frac{\rev\bigl(\er_K(r^\star),\er_K(r^\star)\bigr)}{\srev\bigl(\er_K(r^\star),\er_K(r^\star)\bigr)}=1+\omega_K.\]
\end{lemma}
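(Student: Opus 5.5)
The upper bound is immediate from what precedes. By \cref{def:distribution_equal_revenue_r}, $\er_K(r^\star)$ is anchored in $r^\star$, so \cref{lemma:two_iid_anchored_upper_bound} gives $\rev \le (1+\omega_K)\,\srev$. We also already know $\srev(\er_K(r^\star),\er_K(r^\star)) = 2r^\star/K$. So the whole task is the matching lower bound: we must exhibit one truthful mechanism whose expected revenue on $\er_K(r^\star)^2$ is at least $(1+\omega_K)\,2r^\star/K$. Writing $r=r^\star$, the target identity is
\[
\expect{p(\vec{X})} = \frac{2r}{K}\Bigl(1+\frac{r(H_K-H_r)}{K+r+1}\Bigr).
\]

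To find the mechanism I would not guess. Instead I would read it off from complementary slackness with the dual of \cref{sec:tight_flow_anchored_distribution}, which must be tight when $r$ is the maximiser in $\omega_K$.
\begin{itemize}
\item Every arc carrying positive flow, whether rectangular, circular, diagonal or boundary, forces the corresponding (DSIC) constraint to bind.
\item $\mu(r,r)>0$ forces zero utility at $(r,r)$.
\item The $\psi$ multipliers on the top rows $(K,\cdot)$, $(\cdot,K)$ and on the upper triangle $k_1+k_2>K+r$ force $a_1=1$ or $a_2=1$ there.
\item $\alpha(r,r)>0$ forces zero allocation at $(r,r)$ wherever applicable.
\end{itemize}
Taken together, these conditions determine the utility $u(\vec{k})$ along the boundary lines $k_1=r$ and $k_2=r$, and then propagate it to the interior. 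The circular flows make each coordinate line a "take the item at price $r/K$ or not" line. The diagonal flows make the interior utility affine along the $45^\circ$ diagonals, which suggests a menu combining per-item prices with a bundle or lottery option whose price is tied to $s=r$ and to $K$.

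Concretely, I would first try the deterministic two-price menu: each item separately at price $r/K$, and the bundle at a price $b/K$ with $b$ near $K+r+1$ (the denominator in $\omega_K$ suggests this). I would compute its revenue using the exact tails $\sum_{k\ge s} f(k)=r/s$. The bundle is purchased on $\{k_1+k_2\ge b\}$, whose probability is $\sum_{k_1} f(k_1)\, r/(b-k_1)$. This gives harmonic differences, and the gain over $2r/K$ should collapse to $\frac{2r}{K}\cdot\frac{r(H_K-H_r)}{K+r+1}$. If the deterministic menu falls short, I would add the lottery entries dictated by the binding constraints (partial allocation at boundary points, where the $\alpha/\psi$ boundary terms sit). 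Truthfulness is then checked directly, since the menu is finite and the buyer picks a utility-maximising entry.

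The main obstacle is this exact revenue computation. The bound is only tight at the optimiser $r^\star$, so the identity must use the optimality of $r^\star$, in the form of the discrete first-order condition $\omega_K\ge\frac{s(H_K-H_s)}{K+s+1}$ for $s=r^\star\pm1$. That is needed to show the chosen menu prices are indeed optimal for the buyer's incentives and that no boundary term is lost. If the direct summation becomes messy, the fallback is to avoid evaluating the revenue directly and instead verify complementary slackness: if the primal candidate is feasible and satisfies every binding condition listed above, then the Lagrangian value $1+\omega_K$ equals the primal ratio automatically.
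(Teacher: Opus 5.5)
\textbf{The gap is in the lower bound.} Your upper-bound step matches the paper, but the mechanism you commit to for the lower bound cannot beat $\srev$. You offer each item at $r/K$ and the bundle at $b/K$ with $b$ near $K+r+1$. Since $K+r+1>2r$ and $\er_K(r)$ puts all mass on values at least $r/K$, every buyer type buys both items separately for $2r/K$. The bundle is never chosen, so the revenue is exactly $2r/K=\srev$, with no gain.

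Reading it as a pick-one menu does not help either. There the bundle is chosen iff $\min(k_1,k_2)\ge b-r$, and the revenue in index units is $r+(b-r)\prob{\min(J_1,J_2)\ge b-r}$. This equals $r+r^2/(b-r)$ for $r\le b-r\le K$ and never exceeds $2r$. Your own formula, ``the bundle is purchased on $\{k_1+k_2\ge b\}$'', is valid only for pure bundling, so it does not describe the menu you wrote down.

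The complementary-slackness reasoning that led you to per-item prices also misreads the dual. For $\er_K(r)$ the tails are exactly $r/k$, so the circular multipliers $\frac{K}{2r}f(k_2)\bigl(r-k_1\sum_{l\ge k_1}f(l)\bigr)$ vanish identically and impose nothing. Your fallbacks (adding lottery entries, or ``verify complementary slackness'') do not produce a concrete primal solution.

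\textbf{The missing idea: pure grand bundling at price exactly $(K+r)/K$, not $(K+r+1)/K$.} Conditioning on $J_1$ gives $\prob{J_1+J_2\ge K+r}=\frac{r}{K}+r^2\sum_{k=r}^{K-1}\frac{1}{k(k+1)(K+r-k)}$. The partial fractions $\frac{1}{k(k+1)(K+r-k)}=\frac{1}{K+r}\frac{1}{k}-\frac{1}{K+r+1}\frac{1}{k+1}+\frac{1}{(K+r)(K+r+1)}\frac{1}{K+r-k}$ then give bundling revenue $\frac{2r}{K}\bigl[1+\frac{r(H_K-H_r)}{K+r+1}\bigr]$.

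This identity holds for every $r\in[K]$. So, contrary to what you anticipate, no discrete first-order condition at $r^\star$ and no incentive check is needed, since a posted bundle price is trivially truthful. The optimality of $r^\star$ is used only to identify the bracket with $1+\omega_K$, after which $\rev\ge\brev$ finishes the proof.
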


\begin{proof}
    Fix $r\in[K]$ and let $X_1,X_2$ be independently distributed according
    to $\er_K(r)$. Recall that
    \[ \srev\left( \er_K(r),\er_K(r)\right)=\frac{2r}{K}.\]
    For the index-notation we define $J_1 := KX_1$ and $J_2 := KX_2$ We next consider selling the two items as a bundle at price $\frac{K+r}{K}.$
    The bundle is sold at this price if and only if $J_1+J_2\geq K+r$.
    Conditioning on $J_1=k$ and independence gives
    \begin{align*}
        \prob{J_1+J_2\geq K+r} &= \sum_{k=r}^K f(k)\prob{J_2\geq K+r-k}\\
        &= \frac{r}{K} + \sum_{k=r}^{K-1} \frac{r}{k(k+1)} \frac{r}{K+r-k}\\
        &= \frac{r}{K} + r^2\sum_{k=r}^{K-1} \frac{1}{k(k+1)(K+r-k)}.
    \end{align*}
    For the sum over $r\leq k<K$, we use
    \[\frac{1}{k(k+1)(K+r-k)} = \frac{1}{K+r}\frac{1}{k}-\frac{1}{K+r+1}\frac1{k+1}
        +\frac{1}{(K+r)(K+r+1)} \frac{1}{K+r-k}. \]
    Therefore, applying the sum gives the harmonic number expressions
    \begin{align*}
        \sum_{k=r}^{K-1} \frac{1}{k(k+1)(K+r-k)} &= \frac{1}{K+r}\sum_{k=r}^{K-1} \frac{1}{k} - \frac{1}{K+r+1} \sum_{k=r}^{K-1} \frac{1}{k+1} \\ &\qquad+ \frac{1}{(K+r)(K+r+1)} \sum_{k=r}^{K-1} \frac{1}{K+r-k} \\
        &= \frac{1}{K+r} (H_{K-1}-H_{r-1}) - \frac{1}{K+r+1} (H_K-H_r) \\ &\qquad+ \frac{1}{(K+r)(K+r+1)} (H_K-H_r) \\
        &= \frac{1}{K+r} \left(H_{K}-\frac{1}{K}-H_{r}+\frac{1}{r}\right) - \frac{1}{K+r+1} (H_K-H_r) \\ &\qquad+ \frac{1}{(K+r)(K+r+1)} (H_K-H_r) \\
        &= \frac{2(H_K-H_r)}{(K+r)(K+r+1)} + \frac{1}{K+r}\left(\frac{1}{r}-\frac{1}{K}\right).
    \end{align*}
    Consequently,
    \begin{align*}
        \prob{J_1+J_2\geq K+r} &= \frac{r}{K} + r^2 \left[ \frac{2(H_K-H_r)}{(K+r)(K+r+1)} + \frac{1}{K+r}\left(\frac{1}{r}-\frac{1}{K}\right)\right] \\
        &= \frac{r}{K} + \frac{2r^2(H_K-H_r)}{(K+r)(K+r+1)} + \frac{r}{K+r} - \frac{r^2}{K(K+r)}\\
        &= \frac{2r^2(H_K-H_r)}{(K+r)(K+r+1)} +\frac{r}{K} \left(1-\frac{r}{K+r}\right) + \frac{r}{K+r} \\
        &= \frac{2r^2(H_K-H_r)}{(K+r)(K+r+1)} +\frac{r}{K} \frac{K}{K+r} + \frac{r}{K+r} \\
        &= \frac{2r^2(H_K-H_r)}{(K+r)(K+r+1)} +\frac{2r}{K+r} \\
        &= \frac{2r}{K+r} \left[1+\frac{r(H_K-H_r)}{K+r+1}\right]. \\
    \end{align*}  
    By selling the bundle at price $(K+r)/K$ the bundling revenue is
    \[\frac{K+r}{K} \prob{J_1+J_2\geq K+r} = \frac{2r}{K} \left[ 1+\frac{r(H_K-H_r)}{K+r+1} \right]. \]
    Now choose $r=r^\star$. By the definition of $r^\star$ and $\rev \geq \brev$
    \[ \rev\bigl(\er_K(r^\star),\er_K(r^\star)\bigr) \geq \frac{2r^\star}{K}(1+\omega_K)  = (1+\omega_K)\, \srev\bigl(\er_K(r^\star),\er_K(r^\star)\bigr).
    \]
    Together with the established upper bound for anchored distributions (see \cref{lemma:two_iid_anchored_upper_bound}), to which any $\er(r)$ belongs this proves the statement.
\end{proof}

\subsection{Tightness for General Distributions}

For the proof of the generalization from anchored to general marginal distributions we make use of the discrete setting we are in. This allows us to argue via induction over the number of support points with positive probability mass.

\begin{proposition}\label{prop:two_items_iid_SREV}
    Let $K\geq1$ and the values of two iid items be distributed according to an arbitrary $F$ supported on $I_K$, then
    \begin{equation}
        \rev (F,F) \leq \left( 1 + \omega_K \right) \srev(F,F).
    \end{equation}
\end{proposition}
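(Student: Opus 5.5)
The plan is to reduce to \cref{lemma:two_iid_anchored_upper_bound}, following the hint in the text: induct on the number of support points of $F$ carrying positive mass. Let $\frac{r}{K}$ be an optimal posted price for $F$, taking the largest maximizer of $k\sum_{l\geq k}f(l)$, so that $\srev(F,F)=\frac{2r}{K}\prob{X\geq r/K}$. If $r=0$, then $F$ is the point mass at $0$ and both sides vanish. If $F$ puts no mass below $\frac{r}{K}$, then $F$ is anchored in $r$ after the normalization $\prob{X\geq r/K}=1$, and \cref{lemma:two_iid_anchored_upper_bound} applies directly; this is the base case. The same bookkeeping covers a distribution with a single support point.

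Otherwise there is mass strictly below $\frac{r}{K}$, at a smallest support point $\frac{k_0}{K}<\frac{r}{K}$. I will build a new distribution $F'$ with one fewer support point, satisfying $\srev(F',F')=\srev(F,F)$ and $\rev(F',F')\geq\rev(F,F)$. The natural move is to push the mass $f(k_0)$ upward onto the point $\frac{r}{K}$. This raises values pointwise, but revenue is not monotone, so monotonicity alone does not give the revenue inequality. It is also tempting to delete the mass at $k_0$ and renormalize, but that scales $\srev$ by the same factor as a bound on $\rev$ would, so the ratio comparison has to be made explicitly.

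My concrete approach is a mixture decomposition. Write $F=(1-q)G+qF'$, where $G$ collects the mass below $\frac{r}{K}$ and $F'$ is $F$ conditioned on $X\geq\frac{r}{K}$. Then $F\times F$ is a mixture of the four products $F'\times F'$, $G\times F'$, $F'\times G$ and $G\times G$. Since $\rev$ is subadditive over mixtures, it suffices to bound each component's revenue, weighted by its probability, and compare the total against $\srev(F,F)=2\cdot\frac{r}{K}q$.

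\begin{itemize}
\item The $F'\times F'$ term contributes at most $q^2(1+\omega_K)\frac{2r}{K}$, by induction: $F'$ has fewer support points and is anchored in $r$.
\item The mixed terms are the main obstacle. A crude bound such as $\rev\leq\expect{X_1+X_2}$ on these components would lose too much.
\end{itemize}

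I would therefore try the flow approach instead. I would reuse the dual multipliers of \cref{lemma:two_iid_anchored_upper_bound} for $F'$, and route the extra initial mass at the points with a coordinate below $r$ with the rule $w_j=1$ in the direction of the larger coordinate. This mirrors \cref{lemma:rev_upper_bound}, and its circular-flow cost stays below $f_{-j}S_j$.

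In other words, I would show that the Lagrangian bound $\frac{K}{2rq}\sum(\cdot)$ remains at most $1+\omega_K$. This needs two checks:

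\begin{itemize}
\item The added circular flows, charged to the $\psi$ multipliers, are controlled by the condition $k\sum_{l\geq k}f(l)\leq rq$ for all $k$, which is exactly the optimality of $r$.
\item The cross terms $\expect{\min\{X_1,X_2\}}$ that arise on mixed points can be absorbed: when one value lies below $\frac{r}{K}$, routing along the other coordinate costs at most $\frac{k_0}{K}$ times the mass, and this is dominated by the slack $\frac{2r}{K}\omega_K-B(f;\cdot)$ of the boundary flow.
\end{itemize}

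Nonnegativity of that slack, via \cref{lemma:omega_bound} together with the stochastic dominance in \cref{lemma:first_order_stochastic_dominance_er_anchored}, is where I expect the real work to lie. If the slack is insufficient, I would instead attempt the induction step of moving the lowest atom up to $\frac{r}{K}$ while proving $\rev$ does not decrease for this specific modification. That argument would go through a menu-based one: any mechanism for $F$ restricted to types $\geq\frac{r}{K}$ can serve the moved types at price at least their former payment.
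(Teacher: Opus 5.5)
You share the paper's frame: induction on the number of atoms, with \cref{lemma:two_iid_anchored_upper_bound} as the base case. But you never produce a valid inductive step, and that step is the entire content of the proposition. (Also take the \emph{smallest} optimal price: with the largest one, $\er_K(r)$ itself, for which every price $s/K$ with $s\ge r$ is optimal, falls outside your base case.)

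The mixture route loses in the decomposition itself, not only through crude bounds. Take $F=\varepsilon\,\delta_{(r^\star-1)/K}+(1-\varepsilon)\,\er_K(r^\star)$ with $r^\star\ge 2$ (e.g.\ $K=5$, $r^\star=2$). All component revenues are known exactly: $\rev(F'\times F')=(1+\omega_K)\frac{2r^\star}{K}$ by \cref{lemma:equal_revenue_tightness}, $\rev(G\times F')=\rev(F'\times G)=\frac{2r^\star-1}{K}$, and $\rev(G\times G)=\frac{2(r^\star-1)}{K}$. Yet their weighted sum exceeds $(1+\omega_K)\srev(F,F)=(1+\omega_K)\frac{2r^\star(1-\varepsilon)}{K}$ by about $\frac{2\varepsilon}{K}\bigl(r^\star(1-\omega_K)-1\bigr)>0$.

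The flow route has no slack to absorb anything. In the anchored certificate, $\psi^{\mathrm{circ}}$, $\psi^{\mathrm{diag}}$ and $\psi^{\mathrm{bdry}}$ contribute $1$, $\frac{K}{2r}B(f;K)$ and $\omega_K-\frac{K}{2r}B(f;K)$. So its value is exactly $1+\omega_K$ for every anchored $f$, and $\frac{2r}{K}\omega_K-B(f;s)\ge 0$ only makes the boundary multipliers nonnegative. With $\theta(r,r)=\frac{K}{2rq}$ the anchored block costs $q(1+\omega_K)$. Your routing of the remaining mass towards the larger coordinate costs exactly $1-q$ in circulation, since each such line has $R_j=f(k_{-j})\frac{rq}{K}$. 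On top of that it costs $\expect{\min\{X_1,X_2\}\,;\,\min\{X_1,X_2\}<\frac rK}/\srev(F,F)$; the per-unit cost is the smaller coordinate, not $\frac{k_0}{K}$. You would therefore need that expectation to be at most $\omega_K(1-q)\frac{2rq}{K}$. An $\varepsilon$-atom at $\frac{r-1}{K}$ violates this once $r\ge 2$: roughly $2\varepsilon\frac{r-1}{K}$ against $2\varepsilon\frac{r}{K}\omega_K$.

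The fallback fails too. Pushing the lowest atom up to $\frac rK$ contradicts your own invariant: the revenue at price $\frac rK$ becomes $\frac rK(q+f(k_0))$, so $\srev$ strictly increases. It can also strictly lower the ratio, so no bound on $F$ is inherited. For $K=2$ and $F=(\tfrac12-\varepsilon)\delta_{1/2}+(\tfrac12+\varepsilon)\delta_1$, the bundle at price $\tfrac32$ gives $\rev(F,F)/\srev(F,F)\ge\tfrac32\bigl(1-(\tfrac12-\varepsilon)^2\bigr)/(1+2\varepsilon)\to\tfrac98=1+\omega_2$. The pushed-up distribution $\delta_1$ has ratio $1$. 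Pushing the atom down to $0$ keeps $\srev$ but also gives ratio $1$. Your menu argument needs pointwise payment monotonicity, which is false: with item~$2$ offered at $0.9$ and item~$1$ at $0.5$, type $(0.2,1)$ pays $0.9$ but type $(1,1)$ pays $0.5$.

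What is missing is one of two things. Either a support-reducing modification $F\mapsto F'$ that provably satisfies $\rev(F,F)/\srev(F,F)\le\rev(F',F')/\srev(F',F')$, which the natural moves above do not. Or a genuinely new dual block for the grid points with a coordinate below $\frac rK$ whose cost stays within $(1-q)(1+\omega_K)$.
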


\begin{lemma}\label{lemma:omega_K_convergence}
    Let $K\geq 1$. Then
    \[\omega_K < W(1/e),\]
    where $W$ denotes the Lambert-W function. Furthermore,
    \begin{equation}\label{eq:omega_K_convergence}
        \lim_{K\to\infty}\omega_K = W(1/e).
    \end{equation}
\end{lemma}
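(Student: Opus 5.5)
The plan is to compare the discrete quantity in \cref{def:omega_K} with its continuous analogue. Put $s = xK$ and use $H_K - H_s \approx \ln(K/s)$. Then the expression $\frac{s(H_K-H_s)}{K+s+1}$ behaves like
\[
g(x) \coloneqq \frac{-x\ln x}{1+x}, \qquad x\in(0,1].
\]
So I will first show that $\max_{x\in(0,1]} g(x) = W(1/e)$. After that, I will bound $\omega_K$ from above by $g$ strictly, and from below by $g$ up to an error that vanishes as $K\to\infty$.

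\textbf{Step 1 (the continuous maximum).} Differentiating gives
\[
g'(x) = \frac{(-\ln x - 1)(1+x) + x\ln x}{(1+x)^2} = \frac{-\ln x - 1 - x}{(1+x)^2}.
\]
The numerator is strictly decreasing in $x$. It tends to $+\infty$ as $x\to 0^+$ and equals $-2$ at $x=1$. Hence $g$ has a unique maximizer $c\in(0,1)$, characterised by $\ln c = -1-c$. Equivalently $c e^{c} = e^{-1}$, i.e.\ $c = W(1/e)$. Substituting $-\ln c = 1+c$ gives the maximum value
\[
g(c) = \frac{c(1+c)}{1+c} = c.
\]
Therefore $g(x)\le W(1/e)$ for all $x\in(0,1]$.

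\textbf{Step 2 (strict upper bound).} Take $s\in[K]$. If $s=K$, the term in \cref{def:omega_K} is $0<W(1/e)$. If $s<K$, the function $1/t$ is strictly decreasing, so
\[
H_K-H_s = \sum_{i=s+1}^K \frac1i < \int_s^K \frac{dt}{t} = \ln(K/s).
\]
Also $K+s+1 > K+s$. Combining these,
\[
\frac{s(H_K-H_s)}{K+s+1} < \frac{s\ln(K/s)}{K+s} = g(s/K) \le W(1/e).
\]
Taking the maximum over the finitely many $s\in[K]$ yields $\omega_K < W(1/e)$.

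\textbf{Step 3 (the limit).} For the lower bound, choose $s_K \coloneqq \lceil cK\rceil \in [K]$. The integral comparison in the other direction gives
\[
H_K-H_s \ge \int_{s+1}^{K+1}\frac{dt}{t} = \ln\frac{K+1}{s+1}.
\]
Hence
\[
\omega_K \ge \frac{s_K\ln\bigl((K+1)/(s_K+1)\bigr)}{K+s_K+1}.
\]
As $K\to\infty$ we have $s_K/K\to c$ and $(s_K+1)/(K+1)\to c$, so the right-hand side converges to
\[
\frac{c\ln(1/c)}{1+c} = g(c) = W(1/e).
\]
Together with Step 2, the squeeze gives \eqref{eq:omega_K_convergence}.

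I do not expect a real obstacle. The only delicate points are the bookkeeping of the harmonic-sum versus integral comparisons, keeping both inequalities strict in Step 2, and treating the endpoint case $s=K$ separately.
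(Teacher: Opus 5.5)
Your proof is correct and follows the paper's own route: bound $H_K-H_s$ by $\ln(K/s)$, reduce to $g(x)=x\ln(1/x)/(1+x)$ maximised at $xe^x=1/e$ with value $W(1/e)$, and squeeze from below using $s\approx W(1/e)\,K$. Your version is slightly tighter than the paper's in two places: the paper only concludes $\omega_K\le W(1/e)$ and uses $\lfloor x^\star K\rfloor$, which can be $0$ for small $K$, whereas you keep the inequality strict (handling $s=K$ separately) and use $\lceil cK\rceil\in[K]$ together with an explicit integral lower bound on $H_K-H_s$.
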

\begin{proof}
For every $r\in[K]$,
\[H_K-H_r=\sum_{h=r+1}^K\frac1h\leq \int_r^K\frac{1}{x}\,dx=\ln\frac Kr.\]
Therefore
\[\frac{r(H_K-H_r)}{K+r+1}\leq\frac{r\ln(K/r)}{K+r+1}\leq\frac{\frac rK\ln(K/r)}{1+\frac rK}.
\]
Writing $x=\frac rK\in(0,1],$ the right-hand side is
\[g(x):=\frac{x\ln(1/x)}{1+x}\]
and has derivative
\[g'(x)=\frac{\ln(1/x)-1-x}{(1+x)^2}.\]
Thus, its unique interior maximizer satisfies
\[\ln(1/x)=1+x \quad \Longleftrightarrow \quad xe^x=\frac1e,
\]
which directly gives $x=W(1/e)$. At this point, $g(x)=\frac{x(1+x)}{1+x}=x$.
Consequently
\[\omega_K\leq W(1/e)\]
for every $K$.

For the convergence we set $x^\star \coloneq W(1/e)$ and choose $r_K=\lfloor x^\star K\rfloor$.
Then
\[\lim_{K\to\infty}\frac{r_K}{K}= x^\star.\]
Moreover,
\[\lim_{K\to\infty}\left(H_K-H_{r_K}\right)=\lim_{K\to\infty}\sum_{h=r_K+1}^K\frac1h = \ln\frac1{x^\star},\]
and
\[\lim_{K\to\infty}\frac{K+r_K+1}{K} =1+x^\star.\]
Therefore
\[\lim_{K\to\infty}\frac{r_K(H_K-H_{r_K})}{K+r_K+1}=\lim_{K\to\infty}\frac{\frac{r_K}{K}(H_K-H_{r_K})}
{1+\frac{r_K}{K}+\frac1K} = \frac{x^\star\ln(1/x^\star)}{1+x^\star}.
\]
Since \(x^\star=W(1/e)\) satisfies
\[
\ln(1/x^\star)=1+x^\star,
\]the limit is simply \(x^\star\). Thus by the definition of $\omega_K$ (\cref{def:omega_K}) as the maximum choice of all possible $r$, it is at least as large as for the choice $r_K$, hence
\[ \liminf_{K\to\infty}\omega_K\geq W(1/e).\]
Together with the uniform upper bound $\omega_K\leq W(1/e)$, this yields
the desired result.
\end{proof}

\subsection{Selling Many IID Items Separately}
\label{sec:selling_separately_m_iid}

In this section we provide a closer upper bound of selling $m$ items separately for the iid case in comparison to the independent setting. Our bound of $(H_{m-1} + \omega_K)$ is sharp for the two items as we make use of the tightness result in the case of two items derived by the novel flow definition. 

\begin{proposition}\label{lemma:upper_bound_srev_m_iid}
    Let $K\geq1$ and the values of $m$ iid items be distributed according to an arbitrary $F$ supported on $I_K$, then
    \begin{equation}
        \rev (\vec{F}) \leq \left( H_{m-1} + \omega_K \right) \srev(\vec{F}).
    \end{equation}
\end{proposition}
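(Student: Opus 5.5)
The plan is to interpolate between the weak dual of \cref{lemma:rev_upper_bound} and the strong two-item dual of \cref{sec:tight_flow_anchored_distribution}. The weak dual pays the full expectation term $\expect{\sum_j X_j-\max_j X_j}$, which by \cref{lemma:bound_expectation_by_SREV_ind} costs $H_{m-1}\srev$ on top of $\srev$. For $m=2$ this gives $2\,\srev$, whereas the strong dual pays only $(1+\omega_K)\srev$. So the saving of $1-\omega_K$ should come from treating the \emph{two largest} coordinates with the two-item certificate, and only the remaining coordinates with the cheap $\psi$-charging of \cref{lemma:rev_upper_bound}.

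\textbf{Step 1 (dual construction).} As in \cref{lemma:rev_upper_bound}, fix $\vec{F}=F^m$, put $\theta$ on the optimal price vector with weight $1/S$, and decompose the initial mass $f(\vec{k})/S$. Each grid point is assigned to the pair $\{i,j\}$ of its two largest coordinates, with a fixed tie-breaking rule. On each two-dimensional slice (coordinates $i,j$ free, $\vec{k}_{-ij}$ fixed) we install the four-layer flow of \cref{sec:tight_flow_anchored_distribution} (rectangular, circular, diagonal, boundary), scaled by the slice mass $f_{-ij}(\vec{k}_{-ij})$. For each coordinate $l\notin\{i,j\}$ the allocation coefficients $a_l$ are constant along the slice. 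Exactly as in the ``all other directions'' paragraph of the proof of \cref{lemma:rev_upper_bound}, they are cancelled by $\psi_l(\vec{k})+=f(\vec{k})k_l/(KS)$. This leaves a constant
\[
\tfrac{1}{S}\expect{\textstyle\sum_j X_j-X_{(1)}-X_{(2)}},
\]
where $X_{(1)}\geq X_{(2)}$ are the two largest values.

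\textbf{Step 2 (the pair part).} By \cref{lemma:two_iid_anchored_upper_bound}, each slice contributes at most $(1+\omega_K)$ times the separate-selling revenue of its slice distribution. Here we must first reduce to anchored marginals. I would do this in the same way as \cref{prop:two_items_iid_SREV}, by induction on the number of support points, since the slice distributions are not anchored in general. The subsequent summation over slices mirrors the final step of \cref{lemma:rev_upper_bound}: the slice masses are sub-densities of $f_i f_j$, so the per-slice separate revenues sum to at most $\srev(\vec{X})$, since $R_j(\vec{k}_{-j})\leq f_{-j}(\vec{k}_{-j})S_j$.

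\textbf{Step 3 (the remainder).} It then remains to show
\[
\expect{\textstyle\sum_j X_j-X_{(1)}-X_{(2)}}\leq (H_{m-1}-1)\,\srev(\vec{X}).
\]
I would repeat the proof of \cref{lemma:bound_expectation_by_SREV_ind} with $\max\{N(t)-2,0\}$ in place of $\max\{N(t)-1,0\}$. In the iid case $N(t)$ is binomial with parameter $q(t)\leq\min\{1,\bar S/t\}$. The integrand $\expect{\max\{N-2,0\}}=mq-2+2(1-q)^m+mq(1-q)^{m-1}$ is increasing in $q$, so one may substitute the equal-revenue tail. The resulting integral is evaluated via the same Euler-type binomial identities and should give $(H_{m-1}-1)m\bar S$; for $m=2$ both sides vanish, consistent with \cref{prop:two_items_iid_SREV}.

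\textbf{Main obstacle.} I expect Step 2 to be the hardest part. Restricting mass to the region where $i,j$ are the two largest coordinates makes each slice distribution a truncated, non-identical pair (the other coordinate is capped by the third-largest value). The anchored two-item certificate, however, is built for iid anchored marginals, and its boundary constant $\omega_K$ relies on the bound $B(f;s)\leq\frac{2r}{K}\omega_K$, which uses stochastic dominance by $\er_K(r)$ (\cref{lemma:first_order_stochastic_dominance_er_anchored}). My first attempt would be to route symmetrically, using weights $\tfrac12,\tfrac12$ over orderings, so that each slice stays symmetric. I would then check that truncation from above only decreases $B(f;s)$ and preserves anchoring, so that \cref{lemma:omega_bound} still applies slice by slice.
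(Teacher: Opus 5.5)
\textbf{Overall.} Your architecture is the right one: pair the two largest coordinates, charge every other coordinate via $\psi$, and bound the remainder. The $\psi$-charging in Step 1 is sound. Step 3 is also correct: substituting the equal-revenue tail $\min\{1,\bar S/t\}$ gives $\sum_{l\ge 3}\expect{X_{(l)}}=\sum_{l\ge3}\frac{m\bar S}{l-1}=m\bar S(H_{m-1}-1)$. The gap is in Step 2, on three counts.

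\textbf{The summation.} It does not mirror \cref{lemma:rev_upper_bound}. There, each item owns one family of lines of total weight one. Here, item $i$ lies in $m-1$ pair families, each of total weight $\sum_{\vec k_{-ij}}f_{-ij}(\vec k_{-ij})=1$. Bounding slice marginals by $f_{-ij}(\vec k_{-ij})S_i$ therefore only gives $(m-1)\srev(\vec X)$, and the final constant becomes $(m-1)(1+\omega_K)+H_{m-1}-1$. The missing observation is that the slice of $\{i,j\}$ is truncated \emph{from below} at the third-largest value; it is not capped. Hence its $i$-marginal carries the extra factor $\prob{j \text{ beats every } l\notin\{i,j\}\mid \vec k_{-ij}}$. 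For fixed $i$, the events ``$j$ is the top of $[m]\setminus\{i\}$'', $j\neq i$, partition the space, so the sum is exactly $\sum_i S_i$.

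\textbf{Product structure of slices.} Any two-item iid bound needs each slice law to be a product $c\,\tilde f\otimes\tilde f$ with \emph{equal} factors. A fixed deterministic tie-breaking rule truncates $k_i$ and $k_j$ at different thresholds ($M$ versus $M+1$). Splitting ties symmetrically instead yields a symmetric mixture of products, which is still not a product. The fix is to draw iid continuous priorities $U_l$, rank coordinates by $(X_l,U_l)$, and condition on $\vec U_{-ij}$ as well as $\vec X_{-ij}$. Then $X_i,X_j$ are conditionally iid with law $\tilde F=F\mid C$, where $C$ is the event that the coordinate beats all $l\notin\{i,j\}$.

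\textbf{Anchoring.} The anchoring detour cannot work as stated. Truncation from below destroys anchoring in general. Moreover, the induction behind \cref{prop:two_items_iid_SREV} is an argument over distributions (induction on support size), so it yields no multipliers you could install in the Lagrangian of a fixed $m$-item instance. You do not need the explicit four-layer flow at all. Fix $\vec x_{-ij}$ in any truthful $m$-item mechanism; this gives a truthful two-item mechanism with payment $p-\sum_{l\neq i,j}x_la_l$. Hence the conditional revenue is at most
\[
\sum_{l\neq i,j}x_l+\rev(\tilde F,\tilde F)\le\sum_{l\neq i,j}x_l+2(1+\omega_K)\rev(\tilde F),
\]
by \cref{prop:two_items_iid_SREV} applied as a black box to the arbitrary law $\tilde F$. (Equivalently, LP duality for fixed masses gives each slice multipliers with constant $\rev(\text{slice})/S$.)

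\textbf{Closing the argument.} The pair $\{i,j\}$ is selected with conditional probability $\prob{C}^2$. Two facts finish the count: $\prob{C}\,\rev(\tilde F)\le\rev(F)$, and $\expect{\prob{C}}=1/(m-1)$ by exchangeability with no ties. The pair part then totals $\binom m2\cdot 2(1+\omega_K)\rev(F)/(m-1)=(1+\omega_K)\srev(\vec X)$. Together with your Step 3 bound on $\expect{\sum_l X_l-X_{(1)}-X_{(2)}}$, this gives the claimed $(H_{m-1}+\omega_K)\srev(\vec X)$.
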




\printbibliography

\end{document}